\newtheorem{theorem}{Theorem}
\newtheorem{corollary}[theorem]{Corollary}
\newtheorem{lemma}[theorem]{Lemma}
\newtheorem{proposition}[theorem]{Proposition}
\newtheorem{definition}{Definition}
\newtheorem*{remark}{Remark}
\renewcommand{\le}{\leqslant}
\renewcommand{\ge}{\geqslant}
\renewcommand*{\ALG@name}{Invariant}
\def\Pr{\mathop{\mathbf{Pr}}}
\def\MAJ{\mathrm{MAJ}}
\def\THR{\mathrm{THR}}
\author[1]{Alexander Kozachinskiy\thanks{Alexander.Kozachinskiy@warwick.ac.uk.  Supported by the EPSRC grant EP/P020992/1 (Solving Parity Games in Theory and Practice).}
}\author[2,3]{
Vladimir Podolskii\thanks{podolskii@mi-ras.ru}}
\affil[1]{University of Warwick, Coventry, UK}
\affil[2]{Steklov Mathematical Institute, Russian Academy of Sciences, Moscow, Russia}
\affil[3]{National Research University Higher School of Economics, Moscow, Russia}
\title{Multiparty Karchmer -- Wigderson Games and Threshold Circuits}
\date{}
\begin{document}

\maketitle

\begin{abstract}
We suggest a generalization of Karchmer -- Wigderson communication games to the multiparty setting. Our generalization turns out to be tightly connected to circuits consisting of threshold gates. This allows us to obtain new explicit constructions of such circuits for several functions.
In particular, we provide an explicit (polynomial-time computable) log-depth monotone formula for Majority function, consisting only of 3-bit majority gates and variables.
This resolves  a conjecture of Cohen et al. (CRYPTO 2013). 
\end{abstract}

\tableofcontents

\section{Introduction} \label{sec:intro}

Karchmer and Wigderson established tight connection between circuit depth and communication complexity~\cite{KW90} (see also \cite[Chapter 9]{rao_yehudayoff_2020}). Namely, they showed that for each Boolean function $f$ one can define a communication game which communication complexity \emph{exactly} equals the depth of $f$ in the standard De Morgan basis. This discovery turned out to be very influential in Complexity Theory. A lot of circuit depth lower bounds as well as formula size lower bounds rely on this discovery~\cite{KRW95, RM97, GMWW2014, GP2014, DM2018}. Karchmer -- Wigderson games have been used also in adjacent areas like Proof Complexity (see, e.g.,~\cite{Sok2017}).

Karchmer -- Wigderson games represent a deep connection of \emph{two-party} communication protocols with De Morgan circuits. Loosely speaking, in this connection one party is responsible for $\land$ gates and the other party is responsible for $\lor$ gates. In this paper we address the question of what would be a natural generalization of Karchmer -- Wigderson games to the multiparty setting. Is it possible to obtain in this way a connection with other types of circuits?

We answer positively to this question: we suggest such a generalization and show its connection to circuits consisting of \emph{threshold gates}. To motivate our results we first present applications we get from this new connection.

\subsection{Applications to circuits}

There are two classical constructions of  $O(\log n)$-depth monotone formulas for the Majority function,  $\MAJ_{2n + 1}$. The one was given by Valiant~\cite{val84}. Valiant used  probabilistic method which does not give an explicit construction. The other construction is the AKS sorting network~\cite{AKS83}. This construction actually  gives polynomial-time computable $O(\log n)$-depth $O(n\log n)$-size monotone circuit for $\MAJ_n$. 

Several authors (see, e.g.,~\cite{Gol2011,Cohen2013}) noticed that the Valiant's probabilistic argument actually gives a $O(\log n)$-depth formula for $\MAJ_n$, consisting only of $\MAJ_3$ gates and variables. 
 Is it possible to construct a $O(\log n)$-depth circuit for $\MAJ_{2n + 1}$, consisting only of $\MAJ_3$ gates and variables, \emph{deterministically in polynomial time}?\footnote{Note that AKS sorting network does not provide a solution  because it consists of $\land$ and $\lor$ gates.}
 
This question was stated as a conjecture by Cohen et al. in~\cite{Cohen2013}. First, they showed that the answer is positive under some cryptographic assumptions. Secondly, they constructed (unconditionally) a polynomial-time computable $O(\log n)$-depth circuit, consisting only of $\MAJ_3$ gates and variables, which coincides with $\MAJ_n$ for all inputs in which the fraction of ones is bounded away from $1/2$ by $2^{-\Theta(\sqrt{\log n})}$.

We show that the conjecture of Cohen et al. is true (unconditionally).

\begin{theorem}
\label{cohen_conj_1}
There exists polynomial-time computable $O(\log n)$-depth formula for $\MAJ_{2n + 1}$, consisting only of $\MAJ_3$ gates and variables.
\end{theorem}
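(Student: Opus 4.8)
The plan is to route everything through the correspondence that the earlier part of the paper establishes between multiparty Karchmer--Wigderson games and threshold circuits, specialized to $\MAJ_3$. Under that correspondence a polynomial-time computable $O(\log n)$-depth formula over $\MAJ_3$ gates and variables for $\MAJ_{2n+1}$ is the same object as a protocol with $O(\log n)$ "rounds" for the associated three-party game whose next-message function runs in time $\mathrm{poly}(n)$; so it suffices to construct such a protocol. Concretely I expect the game to have three players receiving strings $z^{(1)},z^{(2)},z^{(3)}\in\{0,1\}^{2n+1}$ under the promise that $\MAJ_{2n+1}(z^{(1)}),\MAJ_{2n+1}(z^{(2)}),\MAJ_{2n+1}(z^{(3)})$ are not all equal, who must agree on a coordinate $i$ and two indices $j\neq j'$ with $z^{(j)}_i=\MAJ_{2n+1}(z^{(j)})=1$ and $z^{(j')}_i=\MAJ_{2n+1}(z^{(j')})=0$, exchanging only $O(\log n)$ bits arranged in the ternary round structure that unfolds back into a $\MAJ_3$-formula.

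I would obtain the protocol by making Valiant's amplification explicit. Valiant's $O(\log n)$-depth formula is a complete ternary tree of $\MAJ_3$ gates whose single non-explicit ingredient is the i.i.d.\ uniform assignment of the $2n+1$ variables to the leaves. The point of the game reformulation is that the players may be adaptive, so instead of fixing a random leaf-labelling in advance they can, in each of the $O(\log n)$ rounds, choose which triples of "current coordinates" to feed into the $\MAJ_3$ gates using the inputs they actually hold, broadcasting only $O(1)$ bits per round to stay synchronized. To keep this deterministic and explicit I would replace the random triples by a structured choice: either route the $2n+1$ coordinates through the AKS sorting network (explicit, depth $O(\log n)$, and whose comparator layers already perform the pairwise merging that drives a majority bias toward $\{0,1\}$), or replace each random triple by one step of an explicit expander walk / averaging sampler on the coordinates. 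Either way the players' strategy becomes a deterministic $\mathrm{poly}(n)$-time process, i.e.\ an explicit $O(\log n)$-depth $\MAJ_3$-formula.

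The step I expect to be the main obstacle is the regime near bias $1/2$ combined with the demand that the construction be explicit \emph{and} correct on every tight input --- which is exactly where Cohen et al.\ stop, since the tight inputs of $\MAJ_{2n+1}$ have a fraction of ones only $\Theta(1/n)$ away from $1/2$, far inside their "bad" window $2^{-\Theta(\sqrt{\log n})}$. In Valiant's analysis the amplification map $p\mapsto 3p^2-2p^3$ multiplies the distance to $1/2$ by a factor $3/2$ near $1/2$ and (up to a constant) squares the distance to the nearest endpoint near $0$ and $1$, so after $O(\log n)$ random rounds the error probability on a fixed tight input drops below $2^{-\Omega(n)}$, surviving a union bound over the $2\binom{2n+1}{n+1}\le 2^{2n+2}$ tight instances; and correctness on all tight inputs forces correctness everywhere, since a monotone function agreeing with $\MAJ_{2n+1}$ on weights $n$ and $n+1$ equals it on all weights. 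A derandomized choice of triples must preserve enough of this: it has to keep the \emph{sign} of the bias stable while still contracting toward an endpoint fast enough for the union bound to go through. I would exploit precisely this rigidity --- every tight instance differs from a perfectly balanced string in a single coordinate, so only the sign of the bias ever needs to be certified --- and argue that a carefully chosen constant-depth combiner gadget built from $\MAJ_3$, applied in the structured (non-random) pattern, suffices; verifying this for all tight inputs is the one place where a genuine, if routine, calculation is needed.

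Finally I would dispatch the bookkeeping: that no constants are introduced (the unfolded protocol has literals at its leaves, and $\MAJ_{2n+1}$ is monotone and self-dual so it genuinely lies in the class computed by $\MAJ_3$-formulas), that the depth is $O(\log n)$ and hence the size $\mathrm{poly}(n)$, and that the whole construction is uniform and runs in time $\mathrm{poly}(n)$, which is what Theorem~\ref{cohen_conj_1} asks for.
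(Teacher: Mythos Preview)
Your proposal has a genuine gap at exactly the point you flag as the obstacle. You propose to derandomize Valiant by replacing the random leaf-assignment with structured triples (AKS comparator layers or expander walks) and call the verification on tight inputs ``routine.'' It is not: this is precisely the step Cohen et~al.\ could not carry out unconditionally, and you supply no argument that a deterministic choice of triples preserves the \emph{sign} of the bias through $O(\log n)$ rounds of $p\mapsto 3p^2-2p^3$ on \emph{every} input of weight $n$ or $n+1$. Routing coordinates through AKS comparators does not by itself produce a $\MAJ_3$-formula, and an expander-walk argument would need a concentration bound surviving a $2^{\Theta(n)}$ union bound, which you do not provide. As written, the proposal restates the open problem rather than solving it.

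You also misidentify the game. Since $\MAJ_3=\THR^3_2$, the relevant game is the $Q_2$-game with \emph{two} parties, each holding a vector in $\MAJ_{2n+1}^{-1}(0)$ and seeking a common zero coordinate; after flipping one side this is literally the monotone Karchmer--Wigderson game for $\MAJ_{2n+1}$. The paper does not derandomize Valiant at all. Instead it builds an explicit $O(\log n)$-round winning strategy in the $Q_2$-\emph{hypotheses} game (the single-prover variant that unfolds directly into a $\MAJ_3$-formula via Proposition~\ref{Q_k_equivalence}) in two phases: Phase~1 spends $O(\log n)$ rounds to isolate two indices $i,j$ with $z_i=0$ or $z_j=0$; Phase~2 takes any explicit $O(\log n)$-depth monotone $\{\land,\lor\}$-formula $F$ for $\MAJ_{2n+1}$ (AKS as a black box) and descends through it maintaining the invariant $(g(z)=0\wedge z_iz_j=01)\lor(g(\lnot z)=1\wedge z_iz_j=10)$, which at a leaf labelled $x_k$ forces $z_k=0$. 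The key idea you are missing is that an existing monotone formula can be \emph{walked} inside the hypotheses game, rather than trying to synthesize the $\MAJ_3$-formula by amplification.
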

In the proof we use the AKS sorting network. In fact, one can use any construction of polynomial-time computable $O(\log n)$-depth monotone circuit for $\MAJ_{2n + 1}$. We also obtain the following general result:

\begin{theorem}
\label{transformation}
If there is a monotone formula (i.e., formula, consisting of $\land, \lor$ gates and variables) for $\MAJ_{2n + 1}$ of size $s$, then there is a formula for $\MAJ_{2n + 1}$ of size $O(s \cdot n^{\log_2(3)}) = O(s \cdot n^{1.58\ldots})$, consisting only of $\MAJ_3$ gates and variables.
\end{theorem}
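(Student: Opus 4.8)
The plan is to separate an easy reduction from the real difficulty. Since $\MAJ_3(a,b,0)=a\wedge b$ and $\MAJ_3(a,b,1)=a\vee b$, I would take the given monotone formula $F$ for $\MAJ_{2n+1}$ and replace every $\wedge$-gate by $\MAJ_3(\cdot,\cdot,0)$ and every $\vee$-gate by $\MAJ_3(\cdot,\cdot,1)$. This yields a formula $F_1$ over $\{\MAJ_3\}$ of size $O(s)$ and the same depth as $F$ that computes $\MAJ_{2n+1}$ but uses at most $s$ \emph{constant} leaves. So the statement reduces to: transform a $\MAJ_3$-formula for $\MAJ_{2n+1}$ that has constant leaves into a constant-free one, paying a multiplicative factor $n^{\log_2 3}$ in size. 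I would also track depth and show it grows only additively by $O(\log n)$, which is exactly what makes the same construction yield the $O(\log n)$-depth formula needed for Theorem~\ref{cohen_conj_1}.

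For the constant-elimination step the relevant algebraic fact is that constant-free $\MAJ_3$-formulas compute precisely the monotone self-dual functions, and $\MAJ_{2n+1}$ is monotone self-dual; so the transformation is possible in principle and the content is a \emph{size- and depth-efficient} realization that re-uses $F$. One cannot eliminate constants locally, gate by gate: the deepest gate of $F$ already produces a conjunction (or disjunction) of two variables — a non-self-dual function — and at such a gate the only replacement of the constant leaf that keeps the computed function unchanged is a constant function, hence not constant-free. Instead I would build the constant-free formula recursively from the top, using the self-dual Shannon-type identity
$$\MAJ_{2m+1}(x)=\MAJ_3\!\big(x_i,\ \MAJ_{2m+1}(x|_{x_i=0}),\ \MAJ_{2m+1}(x|_{x_i=1})\big),$$
together with the observation that for a self-dual $f$ the two restrictions $f|_{x_i=0}$ and $f|_{x_i=1}$ are dual to each other, so they can be produced as a matched pair \emph{without leaving the self-dual world}. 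The naive "fix one variable at a time" recursion blows up exponentially (and reintroduces constants at the leaves); to avoid both, I would fix a constant fraction of the variables at each level and let $F$ supply the "combining" network that glues the restricted sub-results back together. Note $F$ is available at every level: restricting $\MAJ_{2n+1}$ by setting equally many variables to $0$ and to $1$ gives $\MAJ_{2n'+1}$ for any $n'<n$, and the corresponding restriction of $F$ (with its now-constant leaves propagated away, as is always possible in a monotone formula) is a monotone formula of size $\le s$ for that smaller majority. If at each recursion node the self-dual target can be written as a $\MAJ_3$-composition of a constant number of matched self-dual sub-instances on half as many variables, glued by an $O(s)$-size network, then $T(m)\le 3\,T(m/2)+O(s)$ solves to $T(n)=O(s\cdot n^{\log_2 3})$, and taking every gadget balanced gives depth $(\text{depth of }F)+O(\log n)$.

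The step I expect to be the main obstacle is precisely this combining step: showing that the self-dual function at a recursion node is a bounded-fan-in $\MAJ_3$-composition of only a \emph{constant} number of smaller matched self-dual functions (this is what fixes the exponent at $\log_2 3$ rather than something larger), that these smaller functions are again dual pairs of restricted majorities so that the recursion genuinely closes, and that the gluing network is realizable by a constant-free $\MAJ_3$-formula of size $O(s)$ obtained from $F$. The remaining ingredients — the initial $\wedge,\vee\to\MAJ_3$ substitution, the fact that every monotone self-dual function lies in the $\MAJ_3$-closure, constant-propagation in restrictions of $F$, and the additive depth bookkeeping (which rests on the elementary observation that after substituting the gadgets a root-to-leaf path in $F_1$ passes through at most one gadget) — are routine.
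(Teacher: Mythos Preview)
Your route is quite different from the paper's, and the step you yourself flag as ``the main obstacle'' is a genuine gap that the outline does not close. The Shannon identity you write down reduces $m$ by~$1$, not by half, and its two nontrivial children are $\THR^{2m}_{m+1}$ and $\THR^{2m}_m$, which are a dual pair but not self-dual, so the recursion does not close on majorities. Your proposed fix --- set equally many variables to $0$ and to $1$ and let a restriction of $F$ act as the gluing network --- does not help: that restriction of $F$ is again a monotone $\{\land,\lor\}$ formula, so after your $\land\mapsto\MAJ_3(\cdot,\cdot,0)$, $\lor\mapsto\MAJ_3(\cdot,\cdot,1)$ substitution it again has constant leaves, and the constant-elimination problem has not shrunk. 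Nothing in the plan explains why \emph{three} half-size self-dual subinstances suffice at each node, and without that the recurrence $T(m)\le 3\,T(m/2)+O(s)$ is just a target, not an argument.

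The paper does not try to eliminate constants at all. It works through the $Q_2$-hypotheses-game correspondence (Proposition~\ref{Q_k_equivalence}) and reuses the two-phase strategy from the proof of Theorem~\ref{cohen_conj_1}. Phase two --- the descent through $F$ once a pair $i,j$ with $z_i z_j\in\{01,10\}$ is known --- already costs only $O(s)$ in size; the whole content of Theorem~\ref{transformation} is to make phase one (finding such a pair) cost $\log_2 n+O(1)$ rounds instead of $\log_{3/2} n+O(1)$. For this the paper proves, by a Valiant-style probabilistic argument (Lemma~\ref{valiant_lemma}), that there is a complete ternary $\MAJ_3$-tree of depth $\lceil\log_2 n\rceil+10$ whose leaves are conjunctions $x_i\land x_j$ and which vanishes on every input with at most $n$ ones; Learner descends through this tree to reach a leaf with $z_i\land z_j=0$. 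The final formula size is then $3^{\log_2 n+O(1)}\cdot O(s)=O(s\cdot n^{\log_2 3})$, and the construction is explicitly non-explicit, as the theorem allows. So the exponent $\log_2 3$ comes from the depth of the probabilistically-guaranteed ternary tree, not from any recursive decomposition of $\MAJ$.
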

Transformation from the last theorem, however, is not efficient. We can make this transformation polynomial-time computable, provided $\log_2(3)$ is replaced by $1/(1 - \log_3(2)) \approx 2.71$.  In turn, we view Theorem \ref{transformation}
as a potential approach to obtain super-quadratic lower bounds on monotone formula size for $\MAJ_{2n + 1}$. However, this approach requires better than $n^{2 +\log_2(3)}$ lower bound on formula size of $\MAJ_{2n + 1}$ in the $\{\MAJ_3\}$ basis. Arguably, this basis may be easier to analyze than the standard monotone basis. The best known size upper bounds  in the $\{\land, \lor\}$ basis and the $\{\MAJ_3\}$ basis are, respectively, $O(n^{5.3})$ and $O(n^{4.29})$~\cite{GM96}. Both bounds are due to Valiant's method (see~\cite{GM96} also for the limitations of Valiant's method). 

We also study a generalization of the conjecture of Cohen et al. to threshold functions. By $\THR^b_a$ we denote the following Boolean function:
$$\THR^b_a\colon\{0, 1\}^b \to \{0, 1\}, \qquad \THR^b_a(x) = \begin{cases} 1 & \mbox{$x$ contains at least $a$ ones,} \\ 0 &\mbox{otherwise.} \end{cases}$$
For some reasons (to be discussed below) a natural generalization would be a question of whether $\THR^{kn + 1}_{n + 1}$ can by computed by a $O(\log n)$-depth circuit, consisting only of $\THR^{k+1}_2$ gates and variables (initial conjecture can be obtained by setting $k = 2$). This question was also addressed by Cohen et al. in~\cite{Cohen2013}. First, they observed that there is a construction of depth $O(n)$ (and exponential size). Secondly, they gave an explicit construction of depth $O(\log n)$, which coincides with $\THR^{kn + 1}_{n + 1}$ for all inputs in which the fraction of ones is bounded away from $1/k$ by $\Theta(1/\sqrt{\log n})$.

However, no exact (even non-explicit) construction with sub-linear depth or sub-exponential size was known. In particular, Valiant's probabilistic construction does not work for $k\ge 3$.
Nevertheless, in this paper we improve depth $O(n)$ to $O(\log^2 n)$ and size from $\exp\{O(n)\}$ to $n^{O(1)}$ for this problem:

\begin{theorem}
\label{cohen_2}
For any constant $k\ge 3$ 
there exists polynomial-time computable $O(\log^2 n)$-depth polynomial-size circuit for $\THR^{k n + 1}_{n + 1}$, consisting only of $\THR^{k + 1}_2$ gates and variables.
\end{theorem}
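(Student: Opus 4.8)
\medskip
\noindent\textbf{Proof proposal.}
The plan is to route everything through the multiparty Karchmer--Wigderson correspondence developed in this paper: a depth-$d$, size-$s$ circuit over $\{\THR^{k+1}_2\}$ computing a monotone $f$ is equivalent to a communication protocol of cost $O(d)$ whose protocol tree has size $O(s)$ for the game associated with $f$ and this basis, and when the protocol tree is explicit (poly-time computable) so is the circuit. Hence it suffices to design, for $f=\THR^{kn+1}_{n+1}$, such a protocol of cost $O(\log^2 n)$ with a polynomial-size, poly-time computable protocol tree. In the game the ``$1$-party'' holds $x$ with at least $n+1$ ones and the ``$0$-party'' holds $y$ with at most $n$ ones (so at least $(k-1)n+1$ zeros); at a $\THR^{k+1}_2$-node $f=\THR^{k+1}_2(g_0,\dots,g_k)$ the $1$-party reveals a size-$\ge 2$ set $S$ of children with $g_i(x)=1$, the $0$-party then names some $i\in S$ with $g_i(y)=0$ (one exists since $|S|\ge 2>\#\{i:g_i(y)=1\}$), and the game descends to $g_i$. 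As $k$ is constant this costs $O(1)$ bits per node, so the cost is proportional to the number of nodes on a root--leaf path, i.e.\ to the circuit depth.

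I would build the protocol in $O(\log n)$ \emph{epochs}. Carry a ``gap parameter'' $\Delta$, initially $\Theta(n)$, together with an interval $I$ of candidate values for $\#_1(x)$ of size $\Theta(\Delta)$ and a partition of the coordinates into blocks, under the invariant that outside $I$ the answer is already forced (vacuous at the start). In one epoch the parties run a binary search over $O(\log n)$ nested blocks, in each step exchanging how many relevant coordinates a block contains --- an integer $\le n$, so $O(\log n)$ bits in total --- and thereby shrink $I$ by a constant factor and refresh the blocks. After $O(\log\Delta)=O(\log n)$ epochs $|I|<1$, so $\#_1(x)$ is pinned down relative to $n+1$ and an $O(1)$-cost step names a coordinate; the total cost is $O(\log n)\cdot O(\log n)=O(\log^2 n)$, and, the tree being built from AKS networks and elementary counting structures on sub-instances of size $O(\Delta)$, it is of polynomial size and poly-time computable. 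Equivalently, on the circuit side one may start from the explicit $O(\log n)$-depth circuit of Cohen et al., which errs only when $\#_1(x)$ lies in a window of $\Theta(n/\sqrt{\log n})$ values around $n+1$, and recursively halve this window with a depth-$O(\log n)$ ``refinement'' sub-circuit per layer, again accumulating to $O(\log^2 n)$.

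The hard part will be the \emph{exact} treatment of the boundary $\#_1(x)\in\{n,n+1\}$. The naive approximate primitive --- feed random $(k+1)$-tuples of coordinates into one layer of $\THR^{k+1}_2$ gates --- only amplifies towards whichever side of the unstable fixed point $p^{*}$ of $p\mapsto\Pr[\mathrm{Bin}(k+1,p)\ge 2]$ the density $\#_1(x)/(kn+1)$ lies on, and for $k\ge 3$ one has $p^{*}\ne 1/k$, which is exactly why Valiant's method fails here; so the refinement step cannot re-sample, it must perform a \emph{balanced} comparison of counts --- morally an AKS-style comparator network on a sub-instance, as in Theorem~\ref{cohen_conj_1} for $k=2$, but realised purely with $\THR^{k+1}_2$ gates and \emph{without} the constants $0,1$. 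The escape is that near the boundary the input itself contains many $0$'s and $1$'s that can play the role of constants, and implementing the comparator/counting gadgets in the $\{\THR^{k+1}_2\}$ basis on this principle, with depth $O(\log\Delta)$ per epoch, is the technical core I expect to dominate the proof. A secondary obstacle is bookkeeping the size: one must ensure the sizes multiply over the $O(\log n)$ epochs to a polynomial rather than to $n^{O(\log n)}$, which I expect to force the per-epoch sub-instances to be chosen geometrically decreasing (so only the top few epochs are large).
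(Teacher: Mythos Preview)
Your proposal has a fundamental misunderstanding of the game that the paper's correspondence uses, and this leads you to an unnecessarily complicated and ultimately unworkable plan.

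The $Q_k$-communication game for $\THR^{kn+1}_{n+1}$ is \emph{not} a two-party game with a ``$1$-party'' holding $x\in f^{-1}(1)$ and a ``$0$-party'' holding $y\in f^{-1}(0)$. It is a $k$-party game in which \emph{every} party $r\in[k]$ receives a vector $x^r\in f^{-1}(0)$, i.e.\ a vector with at most $n$ ones, and the task is to output a coordinate $j$ with $x^1_j=\cdots=x^k_j=0$. Your description of the step at a $\THR^{k+1}_2$-node (the $1$-party names a set $S$ of children on which it is $1$, the $0$-party picks $i\in S$ with value $0$) is the standard two-party monotone Karchmer--Wigderson mechanism, and that mechanism corresponds to $\{\land,\lor\}$ circuits, not to $\{\THR^{k+1}_2\}$ circuits. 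So even if your epoch/gap protocol worked, it would be solving the wrong game and the paper's transformation (Theorem~\ref{efficient_Q_k}) would not apply.

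Once one has the correct game in hand, the protocol is far simpler than anything you propose. Fix a binary tree $T$ of depth $O(\log n)$ on the coordinate set $[kn+1]$. The parties descend from the root maintaining the invariant $\sum_{r=1}^k |\mathrm{supp}(x^r)\cap T_v|<|T_v|$; this holds at the root since $k\cdot n<kn+1$. At each internal node $v$ with children $v_0,v_1$, every party sends $|\mathrm{supp}(x^r)\cap T_{v_0}|$ (an integer $\le kn+1$, so $O(\log n)$ bits), and by additivity the invariant must hold for at least one child, to which they descend. After $O(\log n)$ levels they reach a leaf $l$ with $\sum_r|\mathrm{supp}(x^r)\cap\{l\}|<1$, i.e.\ $x^r_l=0$ for all $r$. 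Total communication $O(\log^2 n)$, protocol dag of polynomial size and trivially poly-time computable; then Theorem~\ref{efficient_Q_k} yields the circuit, and $C\le\THR^{kn+1}_{n+1}$ together with Proposition~\ref{Q_k_char} forces $C=\THR^{kn+1}_{n+1}$. None of the apparatus you bring in --- amplification, AKS comparators for sub-instances, boundary analysis at $\#_1(x)\in\{n,n+1\}$, geometric sizing of epochs to keep the size polynomial --- is needed, because the hard work is already absorbed into the protocol-to-circuit transformation.
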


\subsection{Applications to Multiparty Secure Computations}

The conjecture stated in~\cite{Cohen2013} was motivated by applications to Secure Multiparty Computations. The paper~\cite{Cohen2013} establishes an approach to construct efficient multiparty protocols based on protocols for small number of players. More specifically, in their framework one starts with a protocol for small number of players and a formula $F$ computing certain boolean function. Then one combines a protocol for a small number of players with itself recursively, where the recursion mimics the formula $F$.

It is shown in~\cite{Cohen2013} that from our result it follows that for any $n$ there is an explicit polynomial size protocol for $n$ players secure against a passive adversary that controls any $t < \frac n2$ players. It is also implicit in~\cite{Cohen2013} that from Theorem~\ref{cohen_2} for $k=3$ it follows that for any $n$ there is a protocol of size $2^{O(\log^2 n)}$ for $n$ players secure against an active adversary that controls any $t < \frac n3$ players. An improvement of the depth of the formula in Theorem~\ref{cohen_2} to $O(\log n)$ would result in a polynomial size protocol. We refer to~\cite{Cohen2013} for more details on the secure multiparty computations.

\subsection{Multiparty Karchmer -- Wigderson games}

We now reveal a bigger picture to which the above results belong to. Namely, they can be put into framework of multiparty Karchmer -- Wigderson games.

Before specifying how we define these games let us give an instructive example. Consider ordinary monotone Karchmer -- Wigderson game for $\MAJ_{2n + 1}$. In this game Alice receives a string $x\in \MAJ_{2n + 1}^{-1}(0)$ and Bob receives a string $y \in\MAJ_{2n + 1}^{-1}(1)$. In other words, the number of ones in $x$ is at most $n$ and the number of ones in $y$ is at least  $n + 1$. The goal of Alice and Bob is to find some coordinate $i$ such that $x_i = 0$ and $y_i = 1$. Next, imagine that Bob flips each of his input bits. After that parties have two vectors in both of which the number of ones is at most $n$. Now Alice and Bob have  to find any coordinate in which both vectors are $0$.

In this form this problem can be naturally generalized to the multiparty setting. Namely, assume that there are $k$ parties, and each receives a Boolean vector of length $kn + 1$ with at most $n$ ones. Let the task of parties be to find a coordinate in which all $k$ input vectors are $0$. How many bits of communication are needed for that?

For $k = 2$ the answer is $O(\log n)$, because there exists a $O(\log n)$-depth monotone circuit for $\MAJ_{2n + 1}$ and hence the monotone Karchmer -- Wigderson game for $\MAJ_{2n + 1}$ can be solved in $O(\log n)$ bits of communication. For $k \ge 3$ we are only aware of a simple $O(\log^2 n)$-bit solution based on the binary search.

Now, let us look at the case $k\ge 3$ from another perspective and introduce multiparty Karchmer -- Wigderson games.
Note that each party receives a vector on which $\THR^{kn + 1}_{n + 1}$ equals $0$. The goal is to find a common zero. Note that we can consider a similar problem for any function $f$ satisfying so-called \emph{$Q_k$-property}: any $k$ vectors from $f^{-1}(0)$ have a common zero. In the next definition we define $Q_k$-property formally and also introduce related $R_k$-property.

\begin{definition}
Let $Q_k$ be the set of all Boolean functions $f$ satisfying the following property: for all $x^1, x^2, \ldots, x^k\in f^{-1}(0)$ there is a coordinate $i$ such that $x^1_i = x^2_i = \ldots =  x^k_i = 0$.

Further, let $R_k$ be the set of all Boolean functions $f$  satisfying the following property: for all $x^1, x^2, \ldots, x^k\in f^{-1}(0)$ there is a coordinate $i$ such that $x^1_i = x^2_i = \ldots =  x^k_i$.
\end{definition}


For $f\in Q_k$ let \emph{$Q_k$-communication game} for $f$  be the following communication problem. In this problem there are $k$ parties. The $j$th party receives a Boolean vector $x^j\in f^{-1}(0)$. The goal of players is to find any coordinate $i$ such that $x^1_i = x^2_i = \ldots = x^k_i = 0$. 

Similarly we can define \emph{$R_k$-communication games} for functions from $R_k$. In the $R_k$-communication games the objective of parties is slightly different: their goal is to find any coordinate $i$  and a bit $b$ such that $x^1_i = x^2_i = \ldots = x^k_i = b$.

Self-dual functions belong to $R_2$ and monotone self-dual functions belong to $Q_2$. It is easy to see that $R_2$-communication games are equivalent to Karchmer -- Wigderson games for self-dual functions (one party should flip all the input bits). Moreover, $Q_2$-communication games are equivalent to monotone Karchmer -- Widgerson games for monotone self-dual functions. 

In this paper we consider $R_k$-communication games as a multiparty generalization   of Karchmer -- Wigderson games. In turn, $Q_k$-communication games are considered as a generalization of \emph{monotone} Karchmer -- Wigderson games. To justify this choice one should relate them to some type of circuit complexity. 

\subsection{Connection to threshold gates and the main result}

Every function from $Q_k$ can be \emph{lower bounded} by a circuit, consisting only of $\THR^{k + 1}_2$ gates and variables.  More precisely, let us write $C \le f$ for a Boolean circuit $C$ and a Boolean function $f$ if for all $x\in f^{-1}(0)$ we have $C(x) = 0$. Then the following proposition holds:
\begin{proposition}[\cite{Cohen2013}]
\label{Q_k_char}
The set $Q_k$ is equal to the set of all Boolean functions $f$ for which there exists a circuit $C\le f$, consisting only of $\THR^{k + 1}_2$ gates and variables.
\end{proposition}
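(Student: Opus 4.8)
The claim is a double inclusion. The inclusion ``$C\le f$ for some $\THR^{k+1}_2$-circuit $C$ implies $f\in Q_k$'' is short; the converse is the substantial part.

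\textbf{From a circuit to $Q_k$.} Suppose $C$ is built only from $\THR^{k+1}_2$ gates and variables and $C\le f$, and fix $x^1,\dots,x^k\in f^{-1}(0)$. For a gate or leaf $g$ of $C$ write $g(x^j)$ for the bit it outputs on input $x^j$. The local observation is: \emph{if a $\THR^{k+1}_2$ gate outputs $0$ on each of $x^1,\dots,x^k$, then some child of it also outputs $0$ on each of them.} Indeed, a $\THR^{k+1}_2$ gate that is $0$ on $x^j$ has at most one of its $k+1$ children equal to $1$ on $x^j$; summing over $j$, the number of pairs (child, $x^j$) at which the child is $1$ is at most $k<k+1$, so by pigeonhole some child is $0$ on all of $x^1,\dots,x^k$. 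Start at the output gate of $C$ — it is $0$ on every $x^j$ because $C\le f$ and $f(x^j)=0$ — and pass repeatedly to such a child; we reach a leaf, i.e.\ a variable $x_i$, with $x^1_i=\dots=x^k_i=0$. Hence $f\in Q_k$.

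\textbf{From $Q_k$ to a circuit.} I would first reduce to a monotone target. Given $f\in Q_k$, let $g$ be the monotone function with $g^{-1}(0)$ equal to the downward closure of $f^{-1}(0)$, i.e.\ $g(x)=0$ iff $\supp(x)\subseteq\supp(z)$ for some $z\in f^{-1}(0)$. Then $g\in Q_k$ — the union of the supports of $k$ zeros of $g$ is contained in the union of the supports of $k$ zeros of $f$, which misses a coordinate — and, since every $\THR^{k+1}_2$-circuit is monotone, a circuit $C\le g$ automatically satisfies $C\le f$; so it suffices to build $C\le g$ for monotone $g\in Q_k$. The clean base case is: if $g\ge x_i$ for some variable (equivalently, $x_i=0$ on every zero of $g$), then $C=x_i$ works. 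In general I would build $C$ recursively, peeling a coordinate off $g$, recursing on the resulting smaller functions, and recombining the sub-circuits under a fresh top $\THR^{k+1}_2$ gate, using the same counting/pigeonhole argument as above to certify that the combined circuit stays $\le g$.

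\textbf{Expected main obstacle.} All the difficulty is in this second direction, and it comes from the fact that the crudest recursion — ``restrict a coordinate to $0$ and recurse'' — can leave $Q_k$. Already $g=\MAJ_5\in Q_2$ has the property that every one-coordinate restriction equals $\THR^4_3\notin Q_2$; in fact one can check that every $\THR^3_2$-circuit $C$ with $C\le\MAJ_5$ must compute $\MAJ_5$ itself, so there is nothing ``simpler'' to recurse down to. Thus the decomposition of $g$ must be chosen so as to stay inside $Q_k$, and the ``irreducible'' functions — the threshold functions themselves — have to be handled by a dedicated construction (for $k=2$ this is the classical fact that $\MAJ_3$-circuits compute the majority functions; for $k\ge 3$ no Valiant-style amplification is available). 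I expect essentially all the work in this direction to go into making that case analysis precise.
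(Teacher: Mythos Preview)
Your first direction (circuit $\Rightarrow Q_k$) is correct and is the standard argument.

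For the second direction you have taken a wrong turn. You try to recurse on \emph{coordinates} --- restrict a variable and recurse on a function of fewer variables --- and you correctly observe that this can leave $Q_k$ (your $\MAJ_5$ example is valid). But the right recursion is on $|f^{-1}(0)|$, not on the number of variables, and then there is no obstacle at all. If $|f^{-1}(0)|\le k$, the $Q_k$-property directly gives a coordinate $i$ with $z_i=0$ for every $z\in f^{-1}(0)$, so $C=x_i$ works. Otherwise pick any $k+1$ distinct zeros $z^0,\dots,z^k\in f^{-1}(0)$; for each $j$ let $f_j$ be the function whose zero-set is $f^{-1}(0)\setminus\{z^j\}$. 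Since this zero-set is a subset of $f^{-1}(0)$, we have $f_j\in Q_k$ trivially, so by induction there is a $\THR^{k+1}_2$-circuit $C_j\le f_j$. Set $C=\THR^{k+1}_2(C_0,\dots,C_k)$. Any $z\in f^{-1}(0)$ equals at most one of the $z^j$, so $z\in f_j^{-1}(0)$ and hence $C_j(z)=0$ for at least $k$ indices $j$; thus $C(z)=0$. That is the entire proof.

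The paper does not write this out as a standalone argument (it cites Cohen et al.), but it gives exactly this construction in the language of its $Q_k$-hypotheses game in Section~\ref{sec:intro}: Learner's hypotheses are ``$z\neq z^0$'', \dots, ``$z\neq z^k$'', each Nature response deletes one element from the current candidate set, and once at most $k$ candidates remain the $Q_k$-property supplies the output coordinate; Proposition~\ref{Q_k_equivalence}(a) then converts the strategy into the circuit. Your monotone reduction is harmless but unnecessary, and the ``irreducible threshold'' case analysis you anticipate never arises --- removing an element from $f^{-1}(0)$ cannot destroy the $Q_k$-property, whereas restricting a coordinate can.
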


There is a similar characterization of the set $R_k$.
\begin{proposition}
\label{R_k_char}
The set $R_k$ is equal to the set of all Boolean functions $f$ for which there exists a circuit $C \le f$, consisting only of $\THR^{k + 1}_2$ gates and literals\footnote{We stress that negations can only be  applied to variables but not to $\THR^{k + 1}_2$ gates.}.
\end{proposition}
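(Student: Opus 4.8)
My plan is to mirror the proof of Proposition \ref{Q_k_char}, but working with literals instead of variables, and to exploit the fact that $R_k$-membership is exactly the condition that $Q_k$-membership imposes \emph{after} one is allowed to negate individual input coordinates. Concretely, for a Boolean function $f$ on $n$ variables and a sign vector $\sigma\in\{0,1\}^n$, let $f^\sigma$ denote the function obtained from $f$ by negating the $i$th input whenever $\sigma_i=1$; note $(f^\sigma)^{-1}(0)=\{x\oplus\sigma : x\in f^{-1}(0)\}$. The key observation is: $f\in R_k$ if and only if there exists $\sigma$ such that $f^\sigma\in Q_k$. Indeed, if $x^1,\dots,x^k\in f^{-1}(0)$ and $\sigma$ is arbitrary, then the $R_k$-condition "there is a coordinate $i$ with $x^1_i=\dots=x^k_i$" is preserved by XORing all $x^j$ with $\sigma$, so it is equivalent to the same condition for $(f^\sigma)^{-1}(0)$; but this is not yet the $Q_k$-condition. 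The correct route is instead to note that $f\in R_k$ precisely says that any $k$ zeros agree \emph{somewhere}, while $f^\sigma\in Q_k$ says any $k$ zeros of $f^\sigma$ are simultaneously $0$ somewhere, i.e. any $k$ zeros of $f$ all equal $\sigma_i$ in coordinate $i$ for some $i$ — which by itself need not follow from $R_k$ for a single fixed $\sigma$. So this one-shot reduction is too crude; the obstacle is that the agreed-upon bit $b$ in the $R_k$-game can depend on the inputs.

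To get around this I would argue directly, following the circuit construction in the proof of Proposition \ref{Q_k_char} but tracking which literal each leaf reads. For the direction ``$f\in R_k\Rightarrow$ circuit exists'': enumerate $f^{-1}(0)=\{z^1,\dots,z^m\}$. For each ordered $k$-tuple of (not necessarily distinct) zeros $(z^{i_1},\dots,z^{i_k})$ the $R_k$-property hands us a coordinate $c=c(i_1,\dots,i_k)$ and a bit $b=b(i_1,\dots,i_k)$ on which all $k$ of them equal $b$; let $\ell$ be the literal that is $0$ exactly at that common value, i.e. $\ell=x_c$ if $b=0$ and $\ell=\neg x_c$ if $b=1$. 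Then $\ell$ evaluates to $0$ on all of $z^{i_1},\dots,z^{i_k}$. Now build the circuit exactly as in Proposition \ref{Q_k_char}: arrange the $m$ zeros at the leaves via a tree of $\THR^{k+1}_2$ gates, where at each internal gate one combines $k$ sub-results plus a fresh literal chosen so that it kills the appropriate tuple of surviving zeros; a $\THR^{k+1}_2$ gate outputs $0$ as long as at most one of its $k+1$ inputs is $1$, so as we descend we maintain the invariant that on input $z^j$ every gate sees at most one $1$ among its inputs, forcing $C(z^j)=0$. The bookkeeping (which tuple is handled at which gate, and why each literal we insert is $0$ on the relevant tuple) is identical to the $Q_k$ case with ``variable'' replaced by ``literal,'' so I would either cite that argument or reproduce it verbatim with this cosmetic change.

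For the converse, suppose $C\le f$ is built from $\THR^{k+1}_2$ gates and literals. Take any $x^1,\dots,x^k\in f^{-1}(0)$; then $C(x^j)=0$ for every $j$. I claim there is a leaf literal $\ell$ with $\ell(x^1)=\dots=\ell(x^k)=0$, which immediately yields a coordinate $i$ and bit $b$ (depending on whether $\ell$ is positive or negative) witnessing the $R_k$-property, since $\ell(x)=0$ means $x$'s $i$th coordinate equals $1-$(sign of $\ell$), the same value for all $j$. To prove the claim, proceed by induction on the structure of $C$, proving the stronger statement: if $g$ is any gate (or leaf) of $C$ with $g(x^j)=0$ for all $j$, then some leaf literal in the subcircuit rooted at $g$ is $0$ on all $x^j$. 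The base case is trivial. For the inductive step, $g=\THR^{k+1}_2(h_0,\dots,h_k)$ with at most one $h_t(x^j)=1$ for each fixed $j$; by pigeonhole among the $k$ players and the $k+1$ inputs, there is some $h_t$ with $h_t(x^j)=0$ for all $j$ simultaneously (if every $h_t$ were $1$ on at least one player, we'd have $k+1$ ``excuses'' distributed among $k$ players, so some player would be blamed twice, i.e. would see two $1$s — contradicting $g(x^j)=0$). Apply the induction hypothesis to that $h_t$. This pigeonhole step is the crux, and it is precisely the same counting that underlies Proposition \ref{Q_k_char}; nothing new is needed beyond noticing that ``$0$ on all $x^j$'' at a leaf gives an agreed bit rather than an agreed $0$. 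I expect the main obstacle to be purely expository — namely deciding how much of the Proposition \ref{Q_k_char} argument to reprove versus quote — rather than mathematical, since the literal/variable distinction is inert for $\THR^{k+1}_2$ gates themselves (we never negate a gate output).
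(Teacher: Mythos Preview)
Your proposal is correct and matches the paper's own treatment: the paper does not give a separate proof of Proposition~\ref{R_k_char} but simply remarks that the proof of Proposition~\ref{Q_k_char} from~\cite{Cohen2013} carries over with the obvious modification of replacing variables by literals, which is exactly what you do (after rightly discarding the global sign-flip reduction). Your explicit converse argument---the pigeonhole showing that among the $k+1$ inputs to a $\THR^{k+1}_2$ gate at least one is $0$ on all $k$ vectors simultaneously---is the standard one and is correct.
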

The proof from~\cite{Cohen2013} of Proposition \ref{Q_k_char} with obvious modifications also works for Proposition \ref{R_k_char}.

Given $f\in Q_k$, what is the minimal depth of a circuit $C\le f$,  consisting only of $\THR^{k + 1}_2$ gates and variables? We show that this quantity is equal (up to constant factors) the communication complexity of $Q_k$-communication game for $f$.

\begin{theorem}
\label{Q_k_main}
Let $k \ge 2$ be any constant. Then for any $f\in Q_k$ the following two quantities are equal up to constant factors:
\begin{itemize}
\item the communication complexity of $Q_k$-communication game for $f$;
\item minimal $d$ for which there exists a $d$-depth circuit $C \le f$, consisting only of $\THR^{k + 1}_2$ gates and variables.
\end{itemize}
\end{theorem}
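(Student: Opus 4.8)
The plan is to prove the two directions separately, establishing a tight correspondence between protocols for the $Q_k$-communication game and circuits of $\THR^{k+1}_2$ gates lower-bounding $f$.

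\medskip

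\noindent\textbf{From circuits to protocols.} Suppose we have a $d$-depth circuit $C \le f$ built from $\THR^{k+1}_2$ gates and variables. I would let the $k$ parties simulate a walk down $C$ from the output gate, maintaining the invariant that at the current gate $g$, the sub-circuit rooted at $g$ evaluates to $0$ on \emph{every} input $x^1, \ldots, x^k$ held by the parties. This is true at the root because $C \le f$ and each $x^j \in f^{-1}(0)$. At a $\THR^{k+1}_2$ gate with inputs $g_1, \ldots, g_{k+1}$, the value being $0$ on input $x^j$ means at most one of the $g_i$ is $1$ on $x^j$; so party $j$ can announce (with $O(\log k) = O(1)$ bits) the index $i_j$ of the unique child that is $1$ on $x^j$, or that all children are $0$ on $x^j$. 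Since there are $k$ parties but $k+1$ children, by pigeonhole there is a child $g_{i^*}$ that is $0$ on all $k$ inputs, and the parties can agree on such an $i^*$ using the announced information; they descend to it. After $d$ steps they reach a leaf, which is a variable $x_i$; the invariant guarantees $x^1_i = \cdots = x^k_i = 0$, so outputting $i$ solves the game. Total communication is $O(d)$.

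\medskip

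\noindent\textbf{From protocols to circuits.} For the converse, I would take a protocol $\Pi$ of communication cost $c$ solving the $Q_k$-communication game for $f$ and build a circuit of depth $O(c)$. The natural approach is structural induction on the protocol tree, analogous to the Karchmer--Wigderson construction. At each node $v$ of $\Pi$, a rectangle-like object is associated — a tuple $(S_1, \ldots, S_k)$ of sets of inputs (those $(x^1, \ldots, x^k)$ consistent with reaching $v$) — and I would assign to $v$ a circuit $C_v$ of $\THR^{k+1}_2$ gates whose depth is at most the remaining protocol depth below $v$, with the property that $C_v(x^j) = 0$ whenever $x^j \in S_j$ for \emph{all} $j$ simultaneously, i.e.\ $C_v$ vanishes on $S_1 \cap \cdots \cap S_k$ (or more precisely on each coordinate-projection). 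At a leaf labeled with output coordinate $i$, take $C_v = x_i$; correctness of $\Pi$ gives $x^j_i = 0$ for all $j$. At an internal node where party $j$ speaks and splits into children $v_0, v_1$, combine $C_{v_0}$ and $C_{v_1}$ into $C_v$ using a constant number of $\THR^{k+1}_2$ gates so that $C_v$ vanishes on an input whenever the appropriate child's circuit does — this requires a gadget expressing a ``selector'' behaviour within the threshold basis, which for $k \ge 2$ is available since $\THR^{k+1}_2$ can simulate $\lor$ of two inputs (duplicate inputs and pad with zeros) and, with care, the needed conditional combination. The depth grows by $O(1)$ per protocol step.

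\medskip

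\noindent\textbf{Main obstacle.} The delicate point is the protocol-to-circuit direction, specifically the gadget that merges the two child circuits at a node where party $j$ sends one bit. In the classical two-party setting one uses $\lor$ (resp.\ $\land$) gates and the monotone KW invariant cleanly; here we must stay inside the $\{\THR^{k+1}_2\}$ basis and the ``monotone self-dual / $Q_k$'' structure replaces monotonicity. I expect the key lemma is that $\THR^{k+1}_2$ gates suffice to implement, with $O(1)$ depth overhead, the operation ``given circuits $C_0, C_1$ each lower-bounding $f$ restricted to the two halves of party $j$'s split, produce $C$ lower-bounding $f$ on the union'' — essentially an OR-like combination that respects the $Q_k$ promise. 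Verifying that this combination both has the right vanishing behaviour on all relevant input tuples and keeps depth $O(1)$ per step, while handling the constant $k$ uniformly, is where the real work lies; the rest is bookkeeping on the protocol tree and the observation that $O(c)$-depth results from $O(1)$ overhead times $c$ rounds.
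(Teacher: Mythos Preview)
Your circuits-to-protocols direction is correct and essentially identical to the paper's Proposition~\ref{Q_k_to_protocols}.

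The protocols-to-circuits direction has a genuine gap. Your plan is a per-node gadget in the style of classical Karchmer--Wigderson: at a node where party $j$ speaks, combine the child circuits $C_{v_0},C_{v_1}$ by a constant-depth $\THR^{k+1}_2$ gadget. This does not go through. First, your suggestion to ``pad with zeros'' to simulate $\lor$ is illegal: the basis is $\THR^{k+1}_2$ gates and \emph{variables only}, with no constants (the paper stresses this). For $k=2$ this is fatal, since any circuit of $\MAJ_3$ gates over variables computes a self-dual function, so neither $\land$ nor $\lor$ of two subcircuits is expressible at all. Second, even for $k\ge 3$ (where $\lor$ of two inputs \emph{is} expressible by duplication), $\lor$ is the wrong combiner: the natural invariant ``$C_v(z)=0$ for every $z$ lying in \emph{some} party's compatible set at $v$'' is preserved by $\land$ (zero-set is the union), not by $\lor$ (zero-set is the intersection). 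And $\land$ of two subcircuits cannot be realised by $\THR^{k+1}_2$ gates over those two subcircuits alone for any $k\ge 2$. So the ``selector gadget'' you are hoping for does not exist in this basis, and the obstacle you flag as ``where the real work lies'' is not a matter of bookkeeping but a structural barrier.

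The paper's argument is not a local per-node construction at all. It introduces an intermediate object, the \emph{$Q_k$-hypotheses game} (Learner makes $k+1$ hypotheses about Nature's hidden $z\in f^{-1}(0)$, at least $k$ must be true, Nature confirms one), and shows that $d$-round Learner strategies correspond exactly to depth-$d$ circuits of $\THR^{k+1}_2$ gates (Proposition~\ref{Q_k_equivalence}). It then simulates the protocol $\pi$ \emph{inside the hypotheses game}: Learner maintains a small set $U$ of protocol nodes and a colouring $g\colon\mathcal{Z}\to[k]$ that is ``complete'' for the current candidate set $\mathcal{Z}\ni z$ (meaning every $g$-profile in $[k]^k$ routes all consistent $k$-tuples to a common node of $U$). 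In each iteration Learner spends $O(1)$ rounds forcing Nature to reveal, up to $k$ possibilities, the pair $(p_z,g(z))$ where $p_z$ is $z$'s communication profile on $U$; this lets Learner push every node of $U$ one level deeper in $\pi$ while restoring completeness (Lemmas~\ref{complete_lemma} and~\ref{communication_profile_lemma}). After $\mathrm{depth}(\pi)$ iterations $U\subset T$, and completeness yields a common zero coordinate. The point is that the $k+1$ hypotheses per round \emph{are} the $k+1$ wires into a $\THR^{k+1}_2$ gate, so the threshold structure is built into the game rather than synthesised from $\land/\lor$ gadgets; this is precisely what circumvents the basis obstruction your approach runs into.
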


 Similar result can be obtained for $R_k$-communication games.

\begin{theorem}
\label{R_k_main}
Let $k \ge 2$ be any constant. Then for any $f\in R_k$ the following two quantities are equal up to constant factors:
\begin{itemize}
\item the communication complexity of $R_k$-communication game for $f$;
\item minimal $d$ for which there exists a $d$-depth circuit $C \le f$, consisting only of $\THR^{k + 1}_2$ gates and literals.
\end{itemize}
\end{theorem}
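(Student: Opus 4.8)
The plan is to reduce Theorem~\ref{R_k_main} to Theorem~\ref{Q_k_main} by a ``doubling'' that replaces literals with fresh variables. Let $f\in R_k$ act on variables $x_1,\dots,x_m$. I introduce a Boolean function $\widetilde f$ on $2m$ coordinates, indexed by pairs $(i,b)$ with $i\in\{1,\dots,m\}$ and $b\in\{0,1\}$, whose zero set is exactly the set of ``consistent encodings'' of $f^{-1}(0)$:
\[
\widetilde f^{-1}(0)=\bigl\{\,z\in\{0,1\}^{2m}\ :\ z_{(i,1)}=x_i\ \text{and}\ z_{(i,0)}=1-x_i\ \text{for all }i,\ \text{for some }x\in f^{-1}(0)\,\bigr\}.
\]
Thus coordinate $(i,1)$ plays the role of the literal $x_i$ and coordinate $(i,0)$ the role of $\neg x_i$; since $x\mapsto z$ is injective this is a well-defined function. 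First I would verify $\widetilde f\in Q_k$: given $z^1,\dots,z^k\in\widetilde f^{-1}(0)$ with associated $x^1,\dots,x^k\in f^{-1}(0)$, the $R_k$-property of $f$ supplies a coordinate $i$ and a bit $c$ with $x^1_i=\dots=x^k_i=c$, and then $(i,1-c)$ is a common zero of $z^1,\dots,z^k$.

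Next I would check that the $R_k$-communication game for $f$ and the $Q_k$-communication game for $\widetilde f$ have the same communication complexity. Each party in the $R_k$-game encodes its input $x^j$ into $z^j$ with no communication, runs a $Q_k$-protocol for $\widetilde f$, and reports the coordinate $i$ together with the bit $1-b$ when the protocol returns a common zero $(i,b)$; conversely each party decodes $x^j_i=z^j_{(i,1)}$ for free, runs an $R_k$-protocol for $f$, and reports $(i,1-b)$ when that protocol returns $(i,b)$. Neither translation costs any extra bits, so the two complexities coincide.

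Then I would transport the circuit side. Given a circuit $C\le f$ built from $\THR^{k+1}_2$ gates and literals, replace every leaf $x_i$ by the variable indexed $(i,1)$ and every leaf $\neg x_i$ by the variable indexed $(i,0)$; this yields a circuit $C'$ of the same depth over $\THR^{k+1}_2$ gates and variables, and $C'(z)=C(x)$ for every $z\in\widetilde f^{-1}(0)$ with associated $x$, so $C'\le\widetilde f$. Conversely, given $C'\le\widetilde f$ over $\THR^{k+1}_2$ gates and variables, substituting the variable indexed $(i,0)$ by $\neg x_i$ and the variable indexed $(i,1)$ by $x_i$ yields a depth-preserving circuit $C\le f$ over $\THR^{k+1}_2$ gates and literals, by the same identity $C(x)=C'(z)$. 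Hence the minimal depth of a circuit $C\le f$ with $\THR^{k+1}_2$ gates and literals equals the minimal depth of a circuit $C'\le\widetilde f$ with $\THR^{k+1}_2$ gates and variables. Combining the three equalities with Theorem~\ref{Q_k_main} applied to $\widetilde f\in Q_k$ proves the theorem, with the same constant factor as in Theorem~\ref{Q_k_main}.

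The only genuinely delicate point in this otherwise mechanical argument --- and thus the ``main obstacle'' --- is the circuit transport: one must be careful that the relation $C\le f$, which constrains $C$ only on $f^{-1}(0)$, is faithfully reflected by $C'\le\widetilde f$ in both directions, which works precisely because $\widetilde f^{-1}(0)$ was defined to consist exactly of the consistent encodings of $f^{-1}(0)$, so that evaluating $C'$ on such a $z$ reproduces the evaluation of $C$ on the corresponding $x$. As an alternative one could reprove Theorem~\ref{R_k_main} directly by imitating the proof of Theorem~\ref{Q_k_main} --- walking down a $\THR^{k+1}_2$-circuit using that an output-$0$ threshold-$2$ gate has at least $k$ of its $k+1$ inputs equal to $0$ and applying pigeonhole over the $k$ parties, and in the other direction converting a protocol tree into a circuit --- the only new feature being that the leaves are now literals, which is exactly what furnishes the extra output bit $b$; the reduction above simply avoids redoing that work.
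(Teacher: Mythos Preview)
Your proof is correct and takes a genuinely different route from the paper. The paper establishes Theorem~\ref{R_k_main} by directly reproving both directions (Propositions~\ref{R_k_to_protocols} and~\ref{R_k_to_circuits}), each obtained ``with straightforward modifications'' of the corresponding $Q_k$ argument---essentially replacing variables by literals and the target ``$z_i=0$'' by ``$z_i=b$'' throughout the hypotheses-game construction. Your reduction via the doubled function $\widetilde f$ instead makes Theorem~\ref{R_k_main} a formal corollary of Theorem~\ref{Q_k_main}, so no part of the main proof is repeated; the cost is the auxiliary function and the two transport verifications (games and circuits), which, as you correctly identify, go through precisely because $\widetilde f^{-1}(0)$ consists \emph{exactly} of the consistent encodings of $f^{-1}(0)$. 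Both approaches yield the same constant-factor equivalence. Yours is more modular and makes the relationship between the $Q_k$ and $R_k$ settings explicit, while the paper's direct route is natural in their framework since the hypotheses-game machinery (Propositions~\ref{games_R_k} and~\ref{R_k_equivalence}) is already stated for the $R_k$ case, so repeating the argument costs essentially nothing there.
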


Proofs of both theorems are divided into two parts:
\begin{enumerate}[label=(\alph*)]
\item transformation of a $d$-depth  circuit $C \le f$, consisting only of $\THR^{k + 1}_2$ gates and variables (literals), into a $O(d)$-bit protocol computing $Q_k$($R_k$)-communication game for $f$;
\item transformation of a $d$-bit protocol computing $Q_k$($R_k$)-communication game for $f$ into a  $d$-depth  circuit $C\le f$, consisting only of $\THR^{k + 1}_2$ gates and variables (literals).
\end{enumerate}

The first part is simple and the main challenge is the second part. Later in this paper (Section 6) we also formulate refined versions of Theorems \ref{Q_k_main} and \ref{R_k_main}. Namely, we refine these theorems in the following two directions. Firstly, we take into account circuit size and for this we consider dag-like communication protocols. Secondly, we show that transformations (a-b) can be done in polynomial time (under some mild assumptions). 

 We derive our  upper bounds on the depth of $\MAJ_{2n + 1}$ and $\THR^{kn + 1}_{n + 1}$ (Theorems \ref{cohen_conj_1} and \ref{cohen_2}) from Theorem \ref{Q_k_main}. We first solve the corresponding $Q_k$-communication games with small number of bits of communication. Namely, for the case of $\MAJ_{2n + 1}$ we use AKS sorting network to solve the corresponding $Q_2$-communication game with $O(\log n)$ bits of communication. For the case of $\THR^{kn + 1}_{n + 1}$ with $k\ge 3$ we solve the corresponding $Q_k$-communication game by a simple binary search protocol with $O(\log^2 n)$ bits of communication. This is where we get depth $O(\log n)$ for Theorem \ref{cohen_conj_1} and depth $O(\log^2 n)$ for Theorem \ref{cohen_2}. Again, some special measures should be taken to make the resulting circuits polynomial-time computable and to control their size\footnote{We should only care about the size in case of Theorem \ref{cohen_2}, because depth $O(\log n)$ immediately gives polynomial size.}.

\subsection{Our techniques: $Q_k$($R_k$)-hypotheses games}

As we already mentioned, the hard part of our main result is to transform a protocol into a circuit.

For this we develop a new language to describe circuits, consisting of threshold gates. Namely, for every $f$ in $Q_k$ ($R_k$) we introduce the corresponding \emph{$Q_k$($R_k$)-hypotheses game} for $f$. We show that strategies in these games exactly capture depth and size of circuits, consisting only of $\THR^{k + 1}_2$ gates and variables (literals). 
It turns out that strategies are more convenient than circuits to simulate protocols, since they operate in the same top-bottom manner.

Once we establish the equivalence of circuits and hypotheses games, it remains for us to transform a communication protocol into a strategy in a hypotheses games. This is an elaborate construction that is presented in Propositions~\ref{Q_k_to_circuits} and~\ref{R_k_to_circuits}. Below in this section we introduce hypotheses games and as an illustration sketch the construction of a strategy in a hypothesis game that is used in the proof of Theorem~\ref{cohen_conj_1}.

Here is how we define these games. Fix $f\colon\{0, 1\}^n \to \{0, 1\}$. There are two players, Nature and Learner. Before the game starts, Nature privately chooses $z\in f^{-1}(0)$, which is then can not be changed. The goal of Learner is to find some $i\in [n]$ such that $z_i = 0$. The game proceeds in rounds. At each round Learner specifies $k + 1$ families $\mathcal{H}_0, \mathcal{H}_1, \ldots, \mathcal{H}_{k} \subset f^{-1}(0)$ to Nature. We understand this as if Learner makes the following $k + 1$ hypotheses about $z$:
\begin{align*}
``z&\in\mathcal{H}_0\mbox{''}, \\
``z&\in\mathcal{H}_1\mbox{''}, \\
&\vdots \\
``z&\in\mathcal{H}_{k}\mbox{''}.
\end{align*}
Learner looses immediately if less than $k$ hypotheses are true, i.e., if the number of $j\in\{0, 1, \ldots, k\}$ satisfying $z\in\mathcal{H}_j$ is less than $k$. Otherwise Nature points out to some hypothesis which is true. In other words, Nature specifies to Learner some $j\in \{0, 1, \ldots, k\}$ such that $z\in \mathcal{H}_j$. The game then proceeds in the same manner for some finite number of rounds. At the end Learner outputs an integer $i\in [n]$. We say that Learner wins if  $z_i = 0$.

It is not hard to show that Learner has a winning strategy in $Q_k$-hypotheses game for $f$ if and only if $f\in Q_k$. Since we will use similar arguments in the paper, let us go through the ``if'' part: if $f\in Q_k$, then Learner has a winning strategy. Denote by $\mathcal{Z}$ be the set of all $z$'s which are compatible with Nature's answers so far. At the beginning $\mathcal{Z} = f^{-1}(0)$. If $|\mathcal{Z}| \ge k + 1$, Learner takes any distinct $z^1, z^2, \ldots, z^{k + 1} \in \mathcal{Z}$ and makes the following hypotheses:
\begin{align*}
``z&\neq z^1\mbox{''}, \\
``z&\neq z^2\mbox{''}, \\
&\vdots \\
``z&\neq z^{k + 1}\mbox{''}.
\end{align*}
At least $k$ hypotheses are true, and the Nature's response strictly reduces the size of $\mathcal{Z}$. When the size of $\mathcal{Z}$ becomes $k$, Learner is ready to give an answer due to $Q_k$-property of $f$.

This strategy requires exponential in $n$ number of rounds. This can be easily improved to $O(n)$ rounds. Indeed, instead of choosing $k + 1$ distinct elements of $\mathcal{Z}$ split $\mathcal{Z}$ into $k + 1$ disjoint almost equal parts. Then let the $i$th hypotheses be ``$z$ is not in the $i$th part''. Nature's response to this reduces the size of $\mathcal{Z}$ by a constant factor, until the size of $\mathcal{Z}$ is $k$.

For $f\in Q_k$ we can now ask what is the minimal number of rounds on in a Learner's winning strategy. The following proposition gives an exact answer:
\begin{proposition}
\label{games_Q_k}
For any $f\in Q_k$ the following holds. Learner has a $d$-round winning strategy in $Q_k$-hypotheses game for $f$ if and only if there exists a $d$-depth circuit $C\le f$, consisting only of $\THR^{k+1}_2$ gates and variables. 
\end{proposition}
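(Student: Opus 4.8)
The plan is to prove both directions by induction on $d$, exploiting the recursive structure of $\THR^{k+1}_2$ circuits on one side and of winning strategies on the other.

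For the ``if'' direction, suppose we have a $d$-depth circuit $C \le f$ built from $\THR^{k+1}_2$ gates and variables, and we want a $d$-round winning strategy for Learner. The key observation is that the top gate of $C$ is a $\THR^{k+1}_2$ applied to subcircuits $C_0, C_1, \ldots, C_k$, each of depth at most $d-1$. Since $C \le f$, for every $z \in f^{-1}(0)$ we have $C(z) = 0$, which by the semantics of $\THR^{k+1}_2$ means at most one of $C_0(z), \ldots, C_k(z)$ equals $1$; equivalently, at least $k$ of them equal $0$. So Learner, at the first round, sets $\mathcal{H}_j = \{z \in f^{-1}(0) : C_j(z) = 0\}$. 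At least $k$ of these contain the hidden $z$, so Learner does not lose. Nature responds with some $j$ such that $C_j(z) = 0$, i.e. $C_j \le f$ restricted appropriately --- here one must be slightly careful: $C_j$ may not satisfy $C_j \le f$ as a global statement, but it does satisfy $C_j(z) = 0$ for the actual hidden $z$, and more importantly Learner now knows $z \in \mathcal{H}_j$, so the game effectively continues with $f$ replaced by (the indicator of) $\mathcal{H}_j$, for which $C_j$ is a valid lower-bounding circuit of depth $d-1$. By the inductive hypothesis Learner wins in $d-1$ more rounds, giving $d$ total. The base case $d = 0$: a depth-$0$ circuit is a single variable $x_i$, and $x_i \le f$ means $z_i = 0$ for all $z \in f^{-1}(0)$, so Learner just outputs $i$.

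For the ``only if'' direction, suppose Learner has a $d$-round winning strategy; we build a depth-$d$ circuit $C \le f$. Look at Learner's first move: it produces families $\mathcal{H}_0, \ldots, \mathcal{H}_k \subseteq f^{-1}(0)$ with the guarantee that every $z \in f^{-1}(0)$ lies in at least $k$ of them. For each $j$, consider the sub-game that results after Nature answers $j$: this is a $(d-1)$-round winning strategy for Learner in the $Q_k$-hypotheses game for the function $g_j$ with $g_j^{-1}(0) = \mathcal{H}_j$ (we need $g_j \in Q_k$, which holds because Learner wins that subgame, together with the easy fact that a winning strategy implies the $Q_k$-property). Wait --- one subtlety: the subgame has the \emph{same} ground set $[n]$ and Learner's final answer $i$ must satisfy $z_i = 0$ for $z \in \mathcal{H}_j$, so indeed $g_j \in Q_k$. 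By induction we get a depth-$(d-1)$ circuit $C_j \le g_j$, i.e. $C_j(z) = 0$ for all $z \in \mathcal{H}_j$. Now set $C = \THR^{k+1}_2(C_0, C_1, \ldots, C_k)$, which has depth $d$. To check $C \le f$: take any $z \in f^{-1}(0)$; it lies in at least $k$ of the $\mathcal{H}_j$, so $C_j(z) = 0$ for at least $k$ values of $j$, hence at most one $C_j(z) = 1$, so $\THR^{k+1}_2$ outputs $0$. The base case $d = 0$ is a strategy where Learner immediately outputs some $i$; winning means $z_i = 0$ for all $z \in f^{-1}(0)$, so $C = x_i \le f$.

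The main obstacle I anticipate is bookkeeping around the ``changing function'' in the recursion: in both directions the sub-object lives over the same variable set $[n]$ but with a restricted zero-set $\mathcal{H}_j$ (which need not be of the form $f'^{-1}(0)$ for a ``nice'' $f'$ --- but any subset of $\{0,1\}^n$ is the zero-set of \emph{some} Boolean function, so this is only a notational nuisance). One should state the induction over an auxiliary claim quantified over arbitrary $\mathcal{Z} \subseteq \{0,1\}^n$ in place of $f^{-1}(0)$: ``Learner has a $d$-round winning strategy in the hypotheses game with hidden $z \in \mathcal{Z}$ iff there is a depth-$d$ $\THR^{k+1}_2$-circuit $C$ with $C(z)=0$ for all $z \in \mathcal{Z}$,'' provided $\mathcal{Z}$ has the $Q_k$-property (every $k$ elements share a common zero coordinate) --- and note $\mathcal{Z} = \es$ is a degenerate but harmless case. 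A second minor point: one must confirm the constant ``$0$-round'' convention matches depth-$0$ (a bare variable) rather than depth-$1$; the statement's phrasing suggests variables count as depth $0$, which I will adopt. Everything else is a routine unwinding of definitions.
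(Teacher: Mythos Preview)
Your proof is correct and follows essentially the same underlying idea as the paper's, namely that the tree (or dag) of the strategy and the tree (or dag) of the circuit are syntactically the same object: a $\THR^{k+1}_2$ gate corresponds to a round of hypotheses, and a leaf variable corresponds to Learner's output. Your induction on $d$ unwinds this correspondence one layer at a time.

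The paper organizes the argument slightly differently, through the notion of the \emph{light form} of a strategy (Proposition~\ref{Q_k_equivalence}): it observes that the underlying dag of a winning strategy, with terminals labeled by Learner's outputs, \emph{is} literally a $\THR^{k+1}_2$-circuit, and then proves $C \le f$ by backward induction on path length rather than by recursively rebuilding the circuit. The content is the same as your recursion, but the light-form viewpoint makes the structural identity explicit and transfers cleanly to dag-like strategies and dag circuits; this becomes important later in the paper (Section~\ref{sec:effective}) where circuit \emph{size} must be controlled, not just depth. Your tree-based recursion suffices perfectly well for the depth statement here.

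Two small remarks. First, your auxiliary claim does not actually need the $Q_k$-property of $\mathcal{Z}$ as a hypothesis --- the induction goes through for arbitrary $\mathcal{Z}\subseteq\{0,1\}^n$, so you can drop that proviso (and your aside about needing $g_j\in Q_k$). Second, the paper only writes out the strategy-to-circuit direction and explicitly omits the converse; your proposal in fact supplies both.
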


Proposition \ref{games_Q_k} is the core result for our applications. For instance, we prove Theorem \ref{cohen_conj_1} by giving an explicit $O(\log n)$-round winning strategy of Learner in $Q_2$-hypotheses game for $\MAJ_{2n + 1}$. Let us now sketch our argument (the complete proof can be found in Section~\ref{sec:majority}).

Assume that Nature's input vector is $z$. We notice that in $O(\log n)$ rounds one can easily find \emph{two} integers $i, j \in [2n + 1]$ such that either $z_i = 0$ or $z_j = 0$. However, we need to know for sure. For that we take any polynomial time computable $O(\log n)$-depth monotone formula $F$ for $\MAJ_{2n + 1}$ (for instance one that can be obtained from the AKS sorting network). We start to descend  from the output gate of $F$ to one of $F$'s inputs. Throughout this descending we maintain the following invariant. If $g$ is the current gate, then either $g(z) = 0 \land z_i = 0$ or $g(\lnot z) = 1 \land z_j = 0$ (here $\lnot$ denotes bit-wise negation). It can be shown that in one round one can either exclude $i$ or $j$ (which will already give us an answer) or replace $g$ by some gate which is fed to $g$. If we reach an input to $F$, we output the index of the corresponding variable.

Similarly one can define $R_k$-hypotheses game for any $f\colon\{0, 1\}^n \to\{0, 1\}$. In $R_k$-hypotheses game Nature and Learner play in the same way except that now Learner's objective is to find some pair $(i, b) \in [n] \times \{0, 1\}$ such that $z_i =b$.  The following analog of Proposition \ref{games_Q_k} holds:
\begin{proposition}
\label{games_R_k}
For any $f\in R_k$ the following holds. Learner has a $d$-round winning strategy in $R_k$-hypotheses game for $f$ if and only if there exists a $d$-depth circuit $C\le f$, consisting only of $\THR^{k+1}_2$ gates and literals. 
\end{proposition}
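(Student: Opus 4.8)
The plan is to adapt the proof of Proposition~\ref{games_Q_k} essentially verbatim, under the dictionary ``the bit $0$'' $\leftrightarrow$ ``the bit $b$ chosen by Learner'' and ``input variable $x_i$'' $\leftrightarrow$ ``literal $x_i$ or $\lnot x_i$''. I would prove the two implications separately: (i) from a $d$-depth circuit $C\le f$ over $\THR^{k+1}_2$ and literals, build a $d$-round winning strategy for Learner; and (ii) from a $d$-round winning strategy for Learner, build a $d$-depth circuit $C\le f$ over $\THR^{k+1}_2$ and literals. (We may assume $f\in R_k$, although neither construction actually needs more than the existence of the object being transformed; by Proposition~\ref{R_k_char}, if one of them exists, so does $f\in R_k$.)

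\emph{From circuits to strategies.} Learner descends through $C$ from the output gate towards a leaf, maintaining the invariant that the current gate $g$ evaluates to $0$ on Nature's hidden input $z$; this holds initially because $C\le f$ and $z\in f^{-1}(0)$. If $g=\THR^{k+1}_2(g_0,\dots,g_k)$, then $g(z)=0$ means that at least $k$ of the bits $g_j(z)$ are $0$, so Learner poses the hypotheses $\mathcal{H}_j=\{z'\in f^{-1}(0): g_j(z')=0\}$ for $j=0,\dots,k$; at least $k$ of them are true, Nature must reply with some $j$ such that $g_j(z)=0$, and Learner moves to $g_j$, keeping the invariant. When Learner reaches a leaf labelled by a literal $\ell$, the invariant gives that $\ell$ evaluates to $0$ on $z$: if $\ell=x_i$ then $z_i=0$ and Learner outputs $(i,0)$, and if $\ell=\lnot x_i$ then $z_i=1$ and Learner outputs $(i,1)$. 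In both cases the output $(i,b)$ satisfies $z_i=b$, so Learner wins, and the number of rounds is at most the depth of $C$.

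\emph{From strategies to circuits.} I would view a $d$-round winning strategy as a tree $T$ of height at most $d$ in which every internal node $v$ carries hypotheses $\mathcal{H}^v_0,\dots,\mathcal{H}^v_k$ and has a child $v_j$ for each answer $j\in\{0,\dots,k\}$, while every leaf carries a pair $(i,b)$. For a node $v$ let $\mathrm{Cons}(v)\subseteq f^{-1}(0)$ be the set of those $z$ for which some play reaching $v$ is consistent with $z$ (Nature always pointing at a hypothesis that genuinely contains $z$); thus $\mathrm{Cons}(r)=f^{-1}(0)$ for the root $r$, and $\mathrm{Cons}(v_j)=\mathrm{Cons}(v)\cap\mathcal{H}^v_j$. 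By induction on the height of $v$, I would attach to $v$ a circuit $C_v$ over $\THR^{k+1}_2$ and literals, of depth at most the height of $v$, such that $C_v(z)=0$ for every $z\in\mathrm{Cons}(v)$. For a leaf carrying $(i,b)$ the winning condition forces $z_i=b$ on $\mathrm{Cons}(v)$, so I set $C_v=x_i$ when $b=0$ and $C_v=\lnot x_i$ when $b=1$. For an internal node $v$ I set $C_v=\THR^{k+1}_2(C_{v_0},\dots,C_{v_k})$: for any $z\in\mathrm{Cons}(v)$, the rule that Learner loses unless at least $k$ hypotheses hold gives $z\in\mathcal{H}^v_j$, hence $z\in\mathrm{Cons}(v_j)$, for at least $k$ indices $j$, so $C_{v_j}(z)=0$ for those $j$ by induction, whence at most one input of the top gate is $1$ and $C_v(z)=0$. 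Taking $v$ to be the root of $T$ gives a circuit $C\le f$ of depth at most $d$.

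\emph{Main obstacle.} There is no serious obstacle here; the argument is pure bookkeeping once the notions of a winning strategy, of the tree $T$, and of $\mathrm{Cons}(v)$ are fixed precisely. The one place where care is needed is to check that the clause ``Learner loses if fewer than $k$ hypotheses are true'' translates \emph{exactly} into ``at least $k$ of the children-circuits vanish on $z$'', which is precisely the condition making a $\THR^{k+1}_2$ gate output $0$; and to note that negations occur only on variables, which is automatic since the leaves of $T$ produce single literals. Beyond that, everything is the $Q_k$-argument of Proposition~\ref{games_Q_k} with ``$0$'' replaced by the Learner-chosen bit ``$b$'' and ``variable'' replaced by ``literal''.
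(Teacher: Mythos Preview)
Your proposal is correct and matches the paper's approach: the paper itself does not write out a separate proof of Proposition~\ref{games_R_k} but simply remarks that the argument for Proposition~\ref{Q_k_equivalence} carries over with ``straightforward modifications'', and those are precisely the substitutions you make (variable $\mapsto$ literal, output $i$ with $z_i=0$ $\mapsto$ output $(i,b)$ with $z_i=b$). Your two directions correspond exactly to parts (b) and (a) of Proposition~\ref{R_k_equivalence}, and your bottom-up induction on the height of $v$ is the tree-case instance of the paper's backward induction on path length.
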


\subsection{Organization of the paper}

In Section~\ref{sec:prelim} we give Preliminaries. In Section~\ref{sec:hypotheses_games} we define $Q_k$($R_k$)-hypotheses games formally and derive Proposition \ref{games_Q_k} and \ref{games_R_k}. In Section~\ref{sec:majority} we obtain our results for Majority function (Theorems \ref{cohen_conj_1} and \ref{transformation}) using  simpler arguments than in our general results. Then in Section~\ref{sec:proof_main} we prove these general results (Theorems \ref{Q_k_main} and \ref{R_k_main}). In Section~\ref{sec:effective} we refine Theorems  \ref{Q_k_main} and \ref{R_k_main} in order to take into account the circuit size and computational aspects (Theorems \ref{efficient_Q_k} and \ref{efficient_R_k} below). In Section~\ref{sec:corol} we derive Theorem \ref{cohen_2} and provide another proof for Theorem \ref{cohen_conj_1}. Finally, in Section~\ref{sec:open_prob} we formulate some open problems.

\section{Preliminaries} \label{sec:prelim}
Let $[n]$ denote the set $\{1, 2, \ldots, n\}$ for $n\in\mathbb{N}$. For a set $W$ we denote the set of all subsets of $W$ by $2^W$. For two sets $A$ and $B$ by $A^B$ we mean the set of all functions of the form $f\colon B\to A$.

We usually use subscripts to denote coordinates of vectors. In turn, we usually use superscripts to numerate vectors.

 We use standard terminology for Boolean formulas and circuits~\cite{Jukna}. We denote the size of a circuit $C$ by $\mathrm{size}(C)$ and the depth by $\mathrm{depth}(C)$. By De Morgan formulas/circuits we mean formulas/circuits consisting of $\land, \lor$ gates of fan-in 2 and literals (i.e., we assume that negations are applied only to variables). By monotone formulas/circuits we mean formulas/circuits consisting of $\land, \lor$ gates of fan-in 2 and variables. We also consider formulas/circuits consisting only of $\THR^{k + 1}_2$ gates and variables (literals). We stress that in such circuits we do not use constants. Allowing literals as inputs we allow to apply negations only to variables. We also assume that negations in literals do not contribute to the depth of a circuit.

We use the notion of deterministic communication protocols in the multiparty \emph{number-in-hand} model. However, to capture the circuit size in our results we consider not only standard \emph{tree-like} protocols, but also \emph{dag-like} protocols. This notion was considered by Sokolov in~\cite{Sok2017}. We use   slightly different variant of this notion, arguably more intuitive one. In the next subsection we provide all necessary definitions.
To obtain a definition of a standard protocol  one should replace dags by binary trees.

\subsection{Dags and dag-like communication protocols}

We use the following terminology for directed acyclic graphs (dags). Firstly, we allow more than one directed edge from one node to another.
A terminal node of a dag $G$ is a node with no out-going edges. Given a dag $G$, let
\begin{itemize}
\item $V(G)$ denote the set of nodes of $G$;
\item $T(G)$ denote the  set of terminal nodes of $G$. 
\end{itemize}
For $v\in V(G)$ let $Out_G(v)$ be the set of all edges of $G$ that start at $v$.
A dag $G$ is called $t$-ary if every non-terminal node $v$ of $G$ we have $|Out_G(v)| = t$. An ordered $t$-ary dag is a $t$-ary dag $G$ equipped with a mapping from the set of edges of $G$ to $\{0, 1, \ldots, t - 1\}$. This mapping restricted to $Out_G(v)$ should be injective for every $v\in V(G)\setminus T(G)$. The value of this mapping on an edge $e$ will be called the \emph{label} of $e$. In terms of labels we require for ordered $t$-ary dags that any $t$ edges, starting at the same node, have different labels.

By a path in $G$ we mean a sequence of \emph{edges} $\langle e_1, e_2, \ldots, e_m\rangle$ such that for every $j\in[m - 1]$ edge $e_i$ ends in the same node in which $e_{j + 1}$ starts.  Note that there may be two distinct paths visiting same nodes (for instance, there may be two parallel edges from one node to another).

We say that a node $w$ is a descendant of a node $v$ if there is a path from $v$ to $w$. We call $w$ a successor of $v$ if there is an edge from $v$ to $w$. A node $s$ is called \emph{starting node} if any other node is a descendant of $s$. Note that any dag has at most one starting node. 

If a dag $G$ has the starting node $s$, then by depth of $v\in V(G)$ we mean the maximal length of a path from $s$ to $v$. The depth of $G$ then is the maximal depth of its nodes. 

Assume that  $\mathcal{X}_1, \mathcal{X}_2, \ldots, \mathcal{X}_k, \mathcal{Y}$ are some finite sets.
\begin{definition}
\label{dag_like_protocol}
A $k$-party dag-like communication protocol $\pi$ with inputs from $\mathcal{X}_1 \times \mathcal{X}_2 \times \ldots \mathcal{X}_k$ and with outputs from $\mathcal{Y}$ is a tuple $\langle G, P_1, P_2, \ldots, P_k, \phi_1, \phi_2, \ldots, \phi_k, l\rangle$, where
\begin{itemize}
\item $G$ is an ordered $2$-ary dag with the starting node $s$;
\item $P_1, P_2, \ldots, P_k$ is a partition of $V(G)\setminus T(G)$ into $k$ disjoint subsets;
\item $\phi_i$ is a function from $P_i\times \mathcal{X}_i$ to $\{0, 1\}$;
\item $l$ is a function from $T(G)$ to $\mathcal{Y}$.
\end{itemize}
\end{definition}

The depth of $\pi$ (denoted by $\mathrm{depth}(\pi)$) is the depth of $G$. The size of $\pi$ (denoted by $\mathrm{size}(\pi)$) is  $|V(G)|$.

 The underlying mechanics of the protocol is as follows. Parties descend from $s$ to one of the terminals of $G$. If the current node $v$ is not a terminal and $v\in P_i$, then at $v$ the $i$th party communicates a bit to all the other parties. Namely, the $i$th party communicates the bit $b = \phi_i(v, x)$, where $x\in\mathcal{X}_i$ is the input of the $i$th party. Among the two edges, starting at $v$, parties choose one labeled by $b$ and descend to one of the successors of $v$ along this edge. Finally, when parties reach a terminal $t$, they output $l(t)$. 

We say that $x\in\mathcal{X}_i$ is $i$-compatible with an edge $e$ from $v$ to $w$ if one of the following two condition holds:
\begin{itemize}
\item $v\notin P_i$;
\item $v\in P_i$ and $e$ is labeled by $\phi_i(v, x)$.
\end{itemize}
We say that $x\in \mathcal{X}_i$ is $i$-compatible with  a path $p = (e_1, e_2, \ldots, e_m)$ of $G$ if for every $j\in [m]$ it holds that $x$ is $i$-compatible with $e_j$. Finally, we say that $x\in\mathcal{X}_i$ is $i$-compatible with a node $v\in V(G)$ if there is a path $p$ from $s$ to $v$ such that $x$ is $i$-compatible with $v$.

We say that an input  $(x^1, x^2, \ldots, x^k) \in \mathcal{X}_1 \times \mathcal{X}_2 \times \ldots \mathcal{X}_k$ visits a node $v\in V(G)$ if there is a path $p$ from $s$ to $v$ such that for every $i\in [k]$ it holds that $x^i$ is $i$-compatible with $p$. Note that there is  unique $t\in T(G)$ such that $(x^1, x^2, \ldots, x^k)$ visits $t$.

To formulate an effective version of Theorems \ref{Q_k_main} and Theorem \ref{R_k_main} we need the following definition.
\begin{definition}
\label{light_form}
The \emph{light form} of a $k$-party dag-like communication protocol $\pi = \langle G, P_1, P_2, \ldots, P_k, \phi_1, \phi_2, \ldots, \phi_k, l\rangle$ is a tuple $\langle G, P_1, P_2, \ldots, P_k, l\rangle$.
\end{definition}
 I.e., to obtain the light form of $\pi$ we just forget about $\phi_1, \phi_2, \ldots, \phi_k$. In other words, the light form only contains  the underlying graph of $\pi$, the partition of non-terminal nodes between parties and the labels of terminals. On the other hand, in the light form there is no information at all how parties communicate at the non-terminal nodes.

Protocol $\pi$ \emph{computes} a relation $S\subset \mathcal{X}_1\times\mathcal{X}_2\times\ldots\times \mathcal{X}_k\times\mathcal{Y}$ if the following holds. For every $(x^1, x^2, \ldots, x^k) \in  \mathcal{X}_1\times\mathcal{X}_2\times\ldots\times \mathcal{X}_k$ there exist $y\in \mathcal{Y}$ and $t\in T(G)$ such that $(x^1, \ldots, x^k)$ visits $t$, $l(t) = y$ and $(x^1, x^2, \ldots, x^k, y) \in S$.

Using language of relations, we can formally define $Q_k$- and $R_k$-communication games. Namely, given $f\colon\{0, 1\}^n\to\{0,1\}, f\in Q_k$, we define $Q_k$-communication game for $f$ as the following relation:
\begin{align*}
S &\subset \underbrace{f^{-1}(0) \times \ldots \times f^{-1}(0)}_k \times [n], \\
 S &= \left\{(x^1, \ldots, x^k, j) \mid x^1_j  = \ldots = x^k_j = 0\right\}.
\end{align*}
Similarly, given $f\colon\{0, 1\}^n\to\{0,1\}, f\in R_k$, we define $R_k$-communication game for $f$ as the following relation:
\begin{align*}
S &\subset \underbrace{f^{-1}(0) \times \ldots \times f^{-1}(0)}_k \times ([n]\times\{0,1\}), \\
 S &= \left\{(x^1, \ldots, x^k, (j, b)) \mid x^1_j  = \ldots = x^k_j = b\right\}.
\end{align*}

It is easy to see that a dag-like protocol for $S$ can be transformed into a tree-like protocol of the same depth, but this transformation can drastically increase the size.

\section{Formal treatment of $Q_k$($R_k$)-hypotheses games} \label{sec:hypotheses_games}

Fix $f\in Q_k$, $f\colon\{0, 1\}^n\to\{0,1\}$. Here we define Learner's strategies in $Q_k$-hypotheses game for $f$ formally. We consider not only tree-like strategies but also dag-like. To specify a Learner's strategy $S$ in $Q_k$-hypotheses game we have to specify:
\begin{itemize}
\item An ordered $(k+1)$-ary dag $G$ with the starting node $s$;
\item a subset $\mathcal{H}_j(p)$ for every $j\in\{0, 1, \ldots, k\}$ and for every path $p$ in $G$ from $s$ to some node in $V(G)\setminus T(G)$;
\item a number $i_t\in [n]$ for every terminal $t$.
\end{itemize}
The underlying mechanics of the game is as follows. Let Nature's vector be $z\in f^{-1}(0)$. Learner and Nature descend from $s$ to one of the terminals of $G$. More precisely, a position in the game is determined by a path $p$, starting at $s$. If the endpoint of $p$ is not a terminal, then Learner specifies some sets $\mathcal{H}_0(p), \mathcal{H}_1(p), \ldots, \mathcal{H}_k(p)$ as his hypotheses. If less than $k$ of these sets contain $z$, then Nature wins. Otherwise Nature specifies some $j\in\{0, 1, \ldots, k\}$ such that $z\in\mathcal{H}_j(p)$. Among $k+1$ edges that start at the endpoint of $p$ players choose one which is labeled by $j$. After that they extend $p$ by this edge. At some point parties reach some terminal $t$ (i.e., the endpoint of $p$ becomes equal $t$). Then the game ends and Learner output $i_t$.

We stress that Learner's output depends only on $t$ but not on a path to $t$ (unlike Learner's hypotheses). This property will be crucial in establishing connection of $Q_k$-hypotheses games to circuits.

We now proceed to a formal definition of what does it mean that $S$ is winning for Learner.

We say that $z\in f^{-1}(0)$ is \emph{compatible} with a path $p = \langle e_1, \ldots, e_m\rangle$, starting in $s$, if the following holds. If $p$ is of length $0$, then every $z\in f^{-1}(0)$ is compatible with $p$. Otherwise for every $i\in\{1, \ldots, e_m\}$ it should hold that $z\in\mathcal{H}_j(\langle e_1, \ldots, e_{i - 1}\rangle)$, where $j$ is the label of edge $e_i$. Informally this means that Nature, having $z$ on input, can reach a position in the game which corresponds to a path $p$.

We say that strategy $S$ is winning for Learner in $Q_k$-hypotheses game for $f$ if for every path $p$, starting at $s$, and for every $z\in f^{-1}(0)$, compatible with $p$, the following holds:
\begin{itemize}
\item if the endpoint of $p$ is not a terminal, then the number of $j\in\{0, 1, \ldots, k\}$ such that $z\in \mathcal{H}_j(p)$ is at least $k$;
\item if the endpoint of $p$ is $t\in T(G)$, then $z_{i_t} = 0$.
\end{itemize}
We will formulate a stronger version of Proposition \ref{games_Q_k}. For that we need the notion of the \emph{light form} of the strategy $S$. Namely, the light form of $S$ is its underlying dag $G$ equipped with a mapping which to every $t\in T(G)$ assigns $i_t$. In other words, the light form contains a ``skeleton'' of $S$ and Learner's outputs in terminals (and no information about Learner's hypotheses). 

We can identify the light form of any strategy $S$ with a circuit, consisting only of $\THR^{k+1}_2$ gates and variables. Namely, place $\THR^{k + 1}_2$ gate in every $v\in V(G)\setminus T(G)$ and for every $t\in T(G)$ place a variable $x_{i_t}$ in $t$. Set $s$ to be the output gate. 

\begin{proposition}
\label{Q_k_equivalence}
For all $f\in Q_k, f\colon\{0,1\}^n\to\{0,1\}$ the following holds:
\begin{enumerate}[label=\textbf{(\alph*)}]
\item if $S$ if a Learner's winning strategy in $Q_k$-hypotheses game for $f$, then its light form, considered as a circuit $C$ consisting only of $\THR^{k + 1}_2$ gates and variables, satisfies $C\le f$.

\item  Assume that $C\le f$ is a circuit, consisting only of $\THR^{k + 1}_2$ gates and variables. Then there exists a Learner's winning strategy $S$ in $Q_k$-hypotheses game for $f$ such that the light form of $S$ coincides with $C$.
\end{enumerate}
\end{proposition}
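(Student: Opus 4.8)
The plan is to prove both directions from a single local observation: a $\THR^{k+1}_2$ gate outputs $0$ on an assignment precisely when at least $k$ of its $k+1$ inputs are $0$ — which is exactly the ``at least $k$ of the $k+1$ hypotheses hold'' rule of the game. Throughout I use the identification of the light form of a strategy with a circuit over $\{\THR^{k+1}_2\}$ that is described just before the proposition (internal nodes become $\THR^{k+1}_2$ gates, the edge labels become the ordering of their input slots, a terminal $t$ becomes the variable $x_{i_t}$, and $s$ becomes the output gate); for part (b) I also use the obvious converse, noting that a circuit all of whose gates are reachable from the output (which may be assumed without loss of generality, since discarding the others changes nothing) is literally of this form.

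For part (a), fix $z\in f^{-1}(0)$. For a node $v$ of $G$ let $h(v)$ be the length of the longest path from $v$ to a terminal, so terminals have $h=0$ and every edge $v\to w$ satisfies $h(v)>h(w)$. I would prove, by induction on $h(v)$, the statement: if there is a path $p$ from $s$ to $v$ with which $z$ is compatible, then the gate at $v$ outputs $0$ on $z$. If $v=t$ is a terminal, then since $S$ is winning the second winning condition gives $z_{i_t}=0$, which is the output of the variable gate $x_{i_t}$. If $v$ is internal, then since $S$ is winning the first winning condition gives that at least $k$ of $\mathcal{H}_0(p),\ldots,\mathcal{H}_k(p)$ contain $z$; for each such index $j$, appending to $p$ the edge out of $v$ labelled $j$ produces a path to the $j$th successor $w_j$ of $v$ with which $z$ is still compatible (compatibility of $p$ together with $z\in\mathcal{H}_j(p)$), so by the induction hypothesis the gate at $w_j$ outputs $0$ on $z$. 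Hence at least $k$ of the $k+1$ inputs of the $\THR^{k+1}_2$ gate at $v$ are $0$ on $z$, so it outputs $0$. Applying this to $v=s$ and the empty path (with which every $z$ is compatible) gives $C(z)=0$; as $z$ ranged over all of $f^{-1}(0)$, this is $C\le f$.

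For part (b), let $C\le f$ be a circuit over $\{\THR^{k+1}_2\}$, viewed as a light form as above. I would supply the missing hypotheses as follows: for a path $p$ ending at an internal node $v$ and for $j\in\{0,\ldots,k\}$, letting $w$ be the $j$th successor of $v$ in $G$, set
$$\mathcal{H}_j(p)=\bigl\{z\in f^{-1}(0) : \text{the gate at $w$ in $C$ outputs $0$ on } z\bigr\}$$
(this depends on $p$ only through its endpoint, which is permitted). First I would verify, by induction on the length of $p$, that whenever $z\in f^{-1}(0)$ is compatible with $p$ the gate at the endpoint of $p$ outputs $0$ on $z$: the empty path gives $C(z)=0$ from $C\le f$, and appending an edge labelled $j$ restricts attention to those $z\in\mathcal{H}_j(p)$, i.e.\ those making the $j$th successor's gate $0$ on $z$. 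Given this, if $p$ ends at an internal node $v$ and $z$ is compatible with $p$, then the $\THR^{k+1}_2$ gate at $v$ outputs $0$ on $z$, so at least $k$ of its inputs — the gates at the successors $w_0,\ldots,w_k$ — are $0$ on $z$, i.e.\ $z$ lies in at least $k$ of the $\mathcal{H}_j(p)$, so Learner does not lose at $p$; and if $p$ ends at a terminal $t$ then the gate at $t$, the variable $x_{i_t}$, outputs $z_{i_t}=0$. Thus $S$ is winning, and by construction its light form is $C$.

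The proof is short, and I do not expect a genuine obstacle; the only points needing care are the well-foundedness of the two inductions (hence inducting on $h(v)$ / on the length of $p$ rather than attempting to start the induction at $s$) and the routine bookkeeping that ``light form of a strategy'' and ``circuit over $\{\THR^{k+1}_2\}$'' really name the same objects — in particular that parallel edges in $G$ correspond to a single gate feeding several input slots of one $\THR^{k+1}_2$ gate, and that gates unreachable from the output can be removed beforehand.
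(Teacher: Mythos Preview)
Your proof of part \textbf{(a)} is correct and essentially identical to the paper's: both establish that any $z$ compatible with a path $p$ makes the gate at the endpoint of $p$ output $0$, the only cosmetic difference being that you induct on the height $h(v)$ while the paper does backward induction on the length of $p$. For part \textbf{(b)} the paper explicitly omits the proof (``We omit the proof of \textbf{\emph{(b)}} as in the paper we only use \textbf{\emph{(a)}}''); your argument for \textbf{(b)}, defining $\mathcal{H}_j(p)$ as the set of $z$ on which the $j$th successor's gate vanishes, is correct and is the natural construction.
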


We omit the proof of \textbf{\emph{(b)}} as in the paper we only use \textbf{\emph{(a)}}.

\begin{proof}[Proof of \textbf{\emph{(a)}} of Proposition \ref{Q_k_equivalence}]
For a node $v \in V(G)$ let $f_v\colon\{0, 1\}^n \to\{0, 1\}$ be the function, computed by the circuit $C$ at the gate, corresponding to $v$.

We shall prove the following statement. For any path $p$, starting in $s$, and for any $z$ which is compatible with $p$ it holds that $f_v(z) = 0$, where $v$ is the endpoint of $p$. To see why this implies $C\le f$ take any $z\in f^{-1}(0)$ and note that $z$ is compatible with the path of length $0$. The endpoint of such path is $s$ and hence $0 = f_s(z) = C(z)$.

We will prove the above statement by the backward induction on the length of $p$. The longest path $p$ ends in some $t\in T(G)$. By definition $f_t = x_{i_t}$. On the other hand, since $S$ is winning, $z_{i_t} = 0$ for any $z$ compatible with $p$. In other words, $f_t(z) = 0$ for any $z$ compatible with $p$. The base is proved.

Induction step is the same if $p$ ends in some other terminal. Now assume that $p$ ends in $v\in V(G)\setminus T(G)$. Take any $z\in f^{-1}(0)$ compatible with $p$. Let $p_j$ be the extension of $p$ by the edge which starts at $v$ and is labeled by $j\in\{0, 1, \ldots, k\}$. Next, let $v_j$ be the endpoint of $p_j$ (nodes $v_0, v_1, \ldots, v_k$ are successors of $v$). Since $S$ is winning, the number of $j\in\{0, 1, \ldots, k\}$ such that $z\in \mathcal{H}_j(p)$ is at least $k$. Hence by definition the number of $j\in \{0, 1, \ldots, k\}$ such that $z$ is compatible with $p_j$ is at least $k$. Finally, by the induction hypothesis this means that the number of $j\in\{0, 1, \ldots, k\}$ such that $f_{v_j}(z) = 0$ is at least $k$. On the other hand:
$$f_v = \THR^{k + 1}_2(f_{v_0}, f_{v_1}, \ldots, f_{v_k}).$$
Therefore $f_v(z) = 0$, as required.
\end{proof}

 One can formally define analogues notions for $R_k$-hypotheses games. We skip this as modifications are straightforwards and only formulate an analog of Proposition \ref{Q_k_equivalence}.

\begin{proposition}
\label{R_k_equivalence}
For all $f\in R_k, f\colon\{0,1\}^n\to\{0,1\}$ the following holds:
\begin{enumerate}[label=\textbf{(\alph*)}]
\item if $S$ if a Learner's winning strategy in $R_k$-hypotheses game for $f$, then its light form, considered as a circuit $C$ consisting only of $\THR^{k + 1}_2$ gates and literals, satisfies $C\le f$.

\item  Assume that $C\le f$ is a circuit, consisting only of $\THR^{k + 1}_2$ gates and literals. Then there exists a Learner's winning strategy $S$ in $R_k$-hypotheses game for $f$ such that the light form of $S$ coincides with $C$.
\end{enumerate}
\end{proposition}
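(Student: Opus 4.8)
The plan is to follow the proof of Proposition~\ref{Q_k_equivalence} line by line, the one genuinely new point being the translation between terminal labels of an $R_k$-strategy and literals. In an $R_k$-strategy a terminal $t$ carries a pair $(i_t, b_t)\in[n]\times\{0,1\}$, and Learner wins at $t$ exactly when $z_{i_t}=b_t$. Accordingly, when identifying the light form of a strategy $S$ with a circuit $C$ over $\THR^{k+1}_2$ gates and literals, I would place at $t$ the literal $x_{i_t}$ if $b_t=0$ and the literal $\lnot x_{i_t}$ if $b_t=1$; in either case this literal computes the Boolean function that vanishes on $z$ precisely when $z_{i_t}=b_t$. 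Internal nodes become $\THR^{k+1}_2$ gates and the starting node $s$ becomes the output gate, exactly as before.

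For part \textbf{(a)}, let $f_v$ denote the function computed by $C$ at the gate corresponding to $v\in V(G)$. I would prove, by backward induction on the length of a path $p$ from $s$, that $f_v(z)=0$ for every $z\in f^{-1}(0)$ compatible with $p$, where $v$ is the endpoint of $p$; evaluated on the empty path this yields $C(z)=f_s(z)=0$ for all $z\in f^{-1}(0)$, i.e.\ $C\le f$. In the base case $v=t\in T(G)$, and $f_t$ is the literal chosen above; since $S$ is winning, $z_{i_t}=b_t$ for every $z$ compatible with $p$, hence $f_t(z)=0$. For the inductive step at $v\in V(G)\setminus T(G)$, write $p_j$ for the extension of $p$ by the edge out of $v$ labelled $j$ and $v_j$ for its endpoint, so $f_v=\THR^{k+1}_2(f_{v_0},\ldots,f_{v_k})$. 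Since $S$ is winning, at least $k$ of the sets $\mathcal{H}_0(p),\ldots,\mathcal{H}_k(p)$ contain $z$; for each such $j$ the vector $z$ is compatible with $p_j$, so by the induction hypothesis $f_{v_j}(z)=0$. Thus at most one argument of the gate at $v$ is $1$, whence $f_v(z)=0$. This is verbatim the argument of Proposition~\ref{Q_k_equivalence}, with ``$z_{i_t}=0$'' replaced by ``$z_{i_t}=b_t$'' and variables replaced by literals.

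For part \textbf{(b)} (which, as in the $Q_k$ case, is not used in the paper and I would only sketch): given $C\le f$ over $\THR^{k+1}_2$ gates and literals, take $G$ to be the underlying ordered $(k+1)$-ary dag of $C$ with $s$ the output gate, set $(i_t,b_t)$ at each leaf according to whether it holds $x_{i_t}$ or $\lnot x_{i_t}$, and at a non-terminal $v$ with successors $v_0,\ldots,v_k$ put $\mathcal{H}_j(p)=\{w\in f^{-1}(0): f_{v_j}(w)=0\}$, independently of $p$. The invariant ``$f_v(z)=0$ for every $z$ compatible with the path to $v$'' holds at $s$ because $f_s=C\le f$; at an internal $v$ it forces at least $k$ of the $f_{v_j}(z)$ to be $0$ (a $\THR^{k+1}_2$ gate outputs $0$ iff at most one input is $1$), so at least $k$ hypotheses hold and Learner does not lose, and whichever $j$ Nature picks satisfies $f_{v_j}(z)=0$, preserving the invariant; at a leaf the invariant says the literal vanishes on $z$, i.e.\ $z_{i_t}=b_t$, so outputting $(i_t,b_t)$ wins. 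The light form is $C$ by construction.

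I expect no real obstacle here: the entire content is the sign bookkeeping in the literal/terminal correspondence together with the single observation that $\THR^{k+1}_2$ evaluates to $0$ exactly when at most one of its inputs is $1$; everything else is a transcription of the already-established $Q_k$ statement. The only spot worth a moment of care is part \textbf{(b)}, where one must check that the sets $\mathcal{H}_j(p)$ are legitimate subsets of $f^{-1}(0)$ and that Nature always has a legal move — both guaranteed by the invariant above.
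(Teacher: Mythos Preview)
Your proposal is correct and matches the paper's approach exactly: the paper does not spell out a proof of Proposition~\ref{R_k_equivalence} but simply declares that the argument for Proposition~\ref{Q_k_equivalence} goes through with ``straightforward modifications,'' and what you have written is precisely that modification---replacing the terminal condition $z_{i_t}=0$ by $z_{i_t}=b_t$ and the variable $x_{i_t}$ by the appropriate literal. Your sketch of part \textbf{(b)} is also the natural one (and, as you note, is omitted in the paper for both the $Q_k$ and $R_k$ cases).
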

\begin{remark}
It might be unclear why we prefer to construct strategies instead of constructing circuits directly, because beside the circuit itself we should also specify Learner's hypotheses. The reason is that strategies can be seen as \emph{proofs} that the circuit we construct is correct.
\end{remark}

\section{Results for Majority} \label{sec:majority}

\begin{proof}[Proof of Theorem \ref{cohen_conj_1}]
There exists an algorithm which in $n^{O(1)}$-time produces a monotone formula $F$ of depth $d = O(\log n)$ computing $\MAJ_{2n + 1}$. Below we will define a strategy $S_F$  in the $Q_2$-hypotheses game for $\MAJ_{2n + 1}$. Strategy $S_F$ will be winning for Learner. Moreover, its depth will be $d + O(\log n)$. In the end of the proof we will refer to Proposition \ref{Q_k_equivalence} to show that $S_F$ yields a $O(\log n)$-depth polynomial-time computable formula for $\MAJ_{2n + 1}$, consisting only of $\MAJ_3$ gates and variables.

 Strategy $S_F$ has two phases. The first phase does not uses $F$ at all, only the second phase does. The objective of the first phase is to find some distinct $i, j\in [2n + 1]$ such that either $z_i = 0 \land z_j = 1$ or $z_i = 1 \land z_j = 0$, where $z$ is the Nature's vector. This can be done as  follows.
\begin{lemma} 
\label{tree_lemma}
One can compute in polynomial time a 3-ary tree $T$ of depth $O(\log n)$ with the set of nodes $v(T)$ and a mapping $w\colon v(T) \to 2^{[2n + 1]}$ such that the following holds:
\begin{itemize}
\item if $r$ is the root of $T$, then $w(r) = [2n + 1]$;
\item if $v$ is not a leaf of $T$ and $v_1, v_2, v_3$ are $3$ children of $v$, then every element of $w(v)$ is covered at least twice by $w(v_1), w(v_2), w(v_3)$;
\item if $l$ is a leaf of $T$, then $w(r)$ is of size $2$.
\end{itemize}
\end{lemma}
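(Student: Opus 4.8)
The plan is to construct $T$ by a straightforward top-down recursion on the label sets, where at each internal node we replace a set $W$ by three overlapping subsets, each of size a constant factor smaller than $|W|$, with every element of $W$ appearing in exactly two of them. Concretely, at a node carrying a set $W$ with $m=|W|\ge 3$, I would partition $W=A\sqcup B\sqcup C$ into three blocks whose sizes are as balanced as possible (each equal to $\lfloor m/3\rfloor$ or $\lceil m/3\rceil$), and give its three children the sets $A\cup B$, $B\cup C$, $A\cup C$. Every element of $W$ lies in precisely two of these three sets, which is exactly the required covering condition; note each child set is nonempty, so the recursion is well-posed. A node whose set already has size $2$ is declared a leaf, and the root carries $[2n+1]$.

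Next I would verify the two size bounds that make the recursion behave. The size of each child set is $m$ minus the size of one block, hence it is at most $m-\lfloor m/3\rfloor=\lceil 2m/3\rceil$, which is strictly less than $m$ for every $m\ge 3$, and at least $m-\lceil m/3\rceil=\lfloor 2m/3\rfloor\ge 2$; so no child set ever drops below size $2$, and along every path the label size strictly decreases while staying $\ge 2$, so the recursion terminates exactly at sets of size $2$. Writing $m_t$ for the largest label size appearing at depth $t$, we have $m_0=2n+1$ and $m_{t+1}\le\lceil 2m_t/3\rceil\le \tfrac23 m_t+1$, whence $m_t\le(2/3)^t(2n+1)+3$. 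Thus after $O(\log n)$ levels all labels have size bounded by an absolute constant, and a further $O(1)$ levels bring every remaining label down to size $2$; hence $\mathrm{depth}(T)=O(\log n)$.

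Finally, for the polynomial-time claim, $T$ is a ternary tree of depth $O(\log n)$, so $|v(T)|=3^{O(\log n)}=n^{O(1)}$, and at each node partitioning $W$ and forming the three unions costs $O(n)$ time; so the whole tree together with $w$ is produced in $n^{O(1)}$ time. I do not expect a genuine obstacle here: the only points needing a moment's care are that the ceiling in $\lceil 2m/3\rceil$ does not let additive error accumulate enough to spoil the logarithmic depth — it does not, since the recurrence $m_{t+1}\le\tfrac23 m_t+1$ still contracts geometrically — and that the balanced partition prevents any label from collapsing below size $2$ prematurely, which is what the bound $|A\cup B|\ge\lfloor 2m/3\rfloor\ge 2$ guarantees.
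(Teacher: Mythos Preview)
Your proposal is correct and follows essentially the same approach as the paper: recursively split the current label set into three nearly equal blocks and assign to the three children the three pairwise unions, observing that each child's set has size a constant fraction of the parent's and never drops below $2$. The only inessential differences are that the paper fixes the block sizes to $\lfloor m/3\rfloor,\lfloor m/3\rfloor,m-2\lfloor m/3\rfloor$ and obtains a contraction factor of $4/5$ rather than your (slightly tighter) $\lceil 2m/3\rceil$ bound.
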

\begin{proof}
We start with a trivial tree, consisting only of the root, to which we assign $[2n + 1]$. Then at each iteration we do the following. We have a $3$-ary tree in which nodes are assigned to some subsets of $[2n + 1]$. If every leaf is assigned to a set of size $2$, we terminate. Otherwise we pick any leaf $l$ of the current tree which is assigned to a subset $A\subset [2n + 1]$ of size at least $3$. We split $A$ into $3$ disjoint subsets $A_1, A_2, A_3$ of sizes $\lfloor |A|/3\rfloor, \lfloor |A|/3 \rfloor$ and $|A| - 2\lfloor |A|/3\rfloor$. We add $3$ children to $l$ (which become new leafs) and assign $A_1\cup A_2, A_1 \cup A_3, A_2 \cup A_3$ to them.

It is easy to verify that the sizes of $A_1\cup A_2, A_1 \cup A_3, A_2 \cup A_3$  are at least $2$ and at most $\frac{4}{5} \cdot |A|$. Hence the size of the set assigned to a node of depth $h$ is at most $\left(\frac{4}{5}\right)^h \cdot (2n + 1)$.  This means that the depth of the tree is at any moment at most $\log_{5/4}(2n + 1) = O(\log n)$. Therefore we terminate in $3^{O(\log n)} = n^{O(1)}$ iterations, as at each iterations we add $3$ new nodes. Each iteration obviously takes polynomial time.  
\end{proof}

We use $T$ to find two $i, j\in [2n + 1]$ such that either $z_i = 0$ or $z_j = 0$. Namely, we descend from the root of $T$ to one of its leafs. Learner maintains an invariant that the leftmost $0$-coordinate of $z$ is in $w(v)$, where $v$ is the current node of $T$. Let $v_1, v_2, v_3$ be $3$ children of $v$. Learner  for every $i\in [3]$ makes a hypothesis that the leftmost $0$-coordinate of $z$ is in $w(v_i)$. Due to the properties of $w$ at least two hypotheses are true. Nature indicates some $v_i$ for which this is true, and Learner descends to $v_i$. When Learner reaches a leaf, he knows a set of size two containing the leftmost $0$-coordinate of $z$. Let this set be $\{i,j\}$.

 We know that either $z_i$ or $z_j$ is $0$. Thus $z_i z_j \in\{00, 01, 10\}$. At the cost of one round we can ask Nature to identify an element of $\{00, 01, 10\}$ which differs from $z_i z_j$. If  $10$ is identified, then $z_i z_j \in \{00, 01\}$, and hence $z_i = 0$, i.e., we can already output $i$. Similar thing happens when $01$ is identified. Finally, if $00$ is identified, then the objective of the first phase is fulfilled and we can proceed to the second phase.

The second phase takes at most $d$ rounds. In this phase Learner produces a sequence $g_0, g_1, \ldots, g_{d^\prime}$, $d^\prime \le d$ of gates of $F$, where the depth of $g_i$ is $i$, the last gate $g_{d^\prime}$ is an input variable (i.e., a leaf of $F$) and each $g\in\{g_0, g_1, \ldots, g_{d^\prime}\}$ satisfies:
\begin{equation}
\label{maj_invariant}
\left(g(z) = 0 \land z_i z_j = 01 \right) \lor \left( g(\lnot z)  = 1 \land z_i z_j = 10\right).
\end{equation}
Here $\lnot z$ denotes the bit-wise negation of $z$. 

At the beginning Learner sets $g_0 = g_{\mathrm{out}}$ to be the output gate of $F$. Let us explain why \eqref{maj_invariant} holds for $g_{\mathrm{out}}$. Nature's vector is an element of $\MAJ_{2n + 1}^{-1}(0)$. I.e., the number of ones in $z$ is at most $n$. In turn, in $\lnot z$ there are at least  $n + 1$ ones. Since  $g_{\mathrm{out}}$ computes $\MAJ_{2n + 1}$, we have that  $g_{\mathrm{out}}(z) = 0$ and $g_{\mathrm{out}}(\lnot z) = 1$. In turn, by the first phase it is guarantied that $z_i z_j = 01 \lor z_i z_j = 10$.

Assume now that the second phase is finished, i.e., Learner has produced some $g_{d^\prime} = x_k$ satisfying \eqref{maj_invariant}. Then by \eqref{maj_invariant} either $g_{d^\prime}(z) = z_k = 0$ or $g_{d^\prime}(\lnot z) = (\lnot z)_k = 1$. In both cases $z_k = 0$, i.e., Learner can output $k$.

It remains to explain how to fulfill the second phase. It is enough to show the following. Assume that Learner knows a gate $g_l$ of $F$ of depth $l$  satisfying \eqref{maj_invariant} and that $g_l$ is not an input variable. Then in one round he can  either find a gate $g_{l + 1}$ of depth $l + 1$ satisfying \eqref{maj_invariant} or give a correct answer to the game.

The gate $g_{l + 1}$ will be one of the two gates which are fed to $g_l$.  Assume first that $g_l$ is an $\land$-gate and $g_l = u \land v$.
  From \eqref{maj_invariant} we conclude that from the following 3 statements exactly 1 is true for $z$:
\begin{align}
\label{1_pos}
u(z) &= 0 \mbox{ and } z_i z_j = 01, \\
\label{2_pos}
u(z) &= 1, v(z) = 0 \mbox{ and } z_i z_j = 01,\\
\label{3_pos}
u(\lnot z) &= v(\lnot z) = 1 \mbox{ and } z_i z_j = 10.
\end{align}
At the cost of one round Learner can ask Nature to indicate one statement which is false for $x$. If Nature says that \eqref{1_pos} is false for $z$, then   \eqref{maj_invariant} holds for $g_{l + 1} = v$. Next, if Nature says that \eqref{2_pos} is false for $z$, then  \eqref{maj_invariant} holds for $g_{l + 1} = u$. Finally, if Nature says that \eqref{3_pos} is false for $z$, then we know that $z_i z_j = 01$, i.e., Learner  can already output $i$.

In the same way we can deal with the case when $g_l$ is an $\lor$-gate and $g_l = u \lor v$. By \eqref{maj_invariant} exactly 1 of the following 3 statements is true for $z$:
\begin{align}
\label{4_pos}
u(z) &=   v(z) = 0 \mbox{ and } z_i z_j = 01, \\
\label{5_pos}
u(\lnot z) &= 1 \mbox{ and } z_i z_j = 10,\\
\label{6_pos}
u(\lnot z) &= 0,  v(\lnot z) = 1 \mbox{ and } z_i z_j = 10.
\end{align}
Similarly, Learner asks Nature to indicate one statement which is false for $z$. If Nature says that \eqref{4_pos} is false for $z$, then $z_i z_j = 10$, i.e., Learner can output $j$. Next, if Nature says that \eqref{5_pos} is false for $z$, then   \eqref{maj_invariant} holds for $g_{l + 1} = v$. Finally, if Nature says that \eqref{6_pos} is false for $z$, then \eqref{maj_invariant} holds for $g_{l + 1} = u$.

Thus $S_F$ is a $O(\log n)$-depth winning strategy of Learner. Apply Proposition \ref{Q_k_equivalence} to $S_F$. We get a $O(\log n)$-depth formula $F^\prime \le \MAJ_{2n + 1}$, consisting only of $\MAJ_3$ gates and variables. From the self-duality of $\MAJ_{2n + 1}$ and $\MAJ_3$ it follows that $F^\prime$ computes $\MAJ_{2n + 1}$. Finally, let us explain how to  compute $F^\prime$ in polynomial time. To do so we have to compute in polynomial time the light form of $S_F$, i.e., the underlying tree of $S_F$ and  the outputs of Learner in the leafs. It is easy to see that one can do this as follows.

First, compute $F$ and compute $T$ from Lemma \ref{tree_lemma}. For each leaf $l$ of $T$ do the following. Let $w(l) = \{i, j\}$. Add $3$ children to $l$. Two of them will be leafs of $S_F$, in  one Learner outputs $i$ and in the other Learner outputs  $j$. Attach a tree of $F$ to the third child.   Then  add to each non-leaf node of $F$ one more child so that now the tree of $F$ is $3$-ary. Each added child is a leaf of $S_F$. If a child was added to an $\land$-gate, then Learner outputs $i$ in this child. In turn, if a child was added to an $\lor$ gate, then Learner outputs $j$ in it. Finally, there are leafs that were in $F$ initially, each labeled by some input variable. In these nodes Learner outputs  the index of the corresponding input variable.

\end{proof}

\begin{proof}[Proof of Theorem \ref{transformation}]
How many rounds takes the first phase of the strategy $S_F$ from the previous proof? Initially the left-most $0$-coordinate of $z$ takes $O(n)$ values. At the cost of one round we can shrink the number of possible values almost by a factor of $3/2$. Thus the first phase corresponds to a ternary tree of depth $\log_{3/2}(n) + O(1)$. The size of that tree is hence $3^{\log_{3/2}(n) + O(1)} = O(n^{1/(1 - \log_3(2))}) = O(n^{2.70951\ldots})$. To some of its leafs we attach a tree of the same size as the initial formula $F$. As a result we obtain a formula $F^\prime$ of size $O(n^{2.70951\ldots} \cdot s)$ for $\MAJ_{2n + 1}$, consisting of $\MAJ_3$ gates and variables (here $s$ is the size of the initial formula $F$).

Let us show that we can perform the first phase in $\log_2(n) + O(1)$ rounds. This will improve the size of the previous construction to $O(3^{\log_2(n) + O(1)} \cdot s) = O(n^{\log_2(3)} \cdot s)$. However, the construction with $\log_2(n) + O(1)$ rounds will not be explicit. We need the following Lemma:

\begin{lemma}
\label{valiant_lemma}
There exists a formula $D$ with the following properties:
\begin{itemize}
\item formula $D$ is a complete ternary tree of depth $\lceil\log_2(n)\rceil + 10$;
\item every non-leaf node of $D$ contains a $\MAJ_3$ gate and every leaf of $D$ contains a conjunction of 2 variables;
\item $D(x) = 0$ for every $x\in \{0, 1\}^{2n + 1}$ with at most $n$ ones.
\end{itemize}
\end{lemma}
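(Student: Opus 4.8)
The statement to prove (Lemma \ref{valiant_lemma}) asserts existence of a complete ternary tree formula $D$ of depth $\lceil\log_2 n\rceil + 10$, built from $\MAJ_3$ gates at internal nodes and conjunctions of two variables at the leaves, such that $D(x) = 0$ whenever $x$ has at most $n$ ones among its $2n+1$ bits. This is precisely the sort of statement Valiant's probabilistic amplification argument delivers, and since the lemma only requires correctness on the low-weight side (not a full computation of $\MAJ_{2n+1}$), we have room to spare.

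\medskip

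\textbf{Plan.} The plan is to use the probabilistic method à la Valiant. Consider the random formula obtained as follows: start from the root, and at every internal node independently do nothing special — rather, build the complete ternary tree of depth $h = \lceil\log_2 n\rceil + 10$ where each \emph{leaf} is labeled by a conjunction $x_a \wedge x_b$ with $a, b$ chosen uniformly and independently at random from $[2n+1]$ (with repetition allowed). Fix an input $x$ with $p := (\#\text{ones in } x)/(2n+1) \le n/(2n+1) < 1/2$. A single random leaf evaluates to $1$ with probability exactly $p^2$. Now track, going up the tree, the probability $q_\ell$ that a node at height $\ell$ above the leaves evaluates to $1$: since a $\MAJ_3$ of three independent copies outputs $1$ with probability $3q^2 - 2q^3 =: \phi(q)$, we get $q_{\ell+1} = \phi(q_\ell)$ with $q_0 = p^2$. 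The key analytic facts are: $\phi$ has fixed points $0, 1/2, 1$; on $[0, 1/2)$ iteration converges to $0$; and near $0$ the map roughly squares ($\phi(q) \approx 3q^2$), giving doubly-exponential decay once $q$ is small, while on the bulk $[0, 1/2 - \delta]$ there is a uniform contraction toward $0$. Starting from $q_0 = p^2 \le (n/(2n+1))^2 \le 1/4$, after $O(\log\log n)$ more steps $q_\ell$ drops below, say, $1/n^3$, and after that the squaring behavior makes it drop below $2^{-2n}$ well within the remaining budget (the total height $\lceil \log_2 n\rceil + 10$ is far more than enough; in fact $O(\log\log n) + \log_2 n + O(1)$ steps suffice since from $1/n^3$, $k$ more squarings give roughly $n^{-3\cdot 2^k}$). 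Hence the probability that $D(x) = 1$ for this particular $x$ is at most $2^{-2n-2}$, say. The number of inputs $x$ with at most $n$ ones is at most $2^{2n+1}$. A union bound over all such $x$ gives that with positive probability the random $D$ satisfies $D(x) = 0$ simultaneously for all of them; fix such a $D$.

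\medskip

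\textbf{Main obstacle.} The routine-but-careful part is the quantitative analysis of the iteration $q_{\ell+1} = \phi(q_\ell)$: one must verify that starting from $q_0 \le 1/4$ and within $\lceil\log_2 n\rceil + 10$ iterations one reaches a value below $2^{-(2n+1)} \cdot (\text{something summing to} < 1)$, uniformly over $p \le n/(2n+1)$. The cleanest way is to split the analysis into two regimes: (i) while $q_\ell \ge 1/n$, use that $\phi(q) \le q/2$ is \emph{false} in general but $\phi(q) \le q - cq$ for a constant $c$ on $[1/n, 1/4]$ fails near the boundary — so instead note $\phi(q) \le 3q^2 \le (3/4)q$ on $[0,1/4]$, which already gives geometric decay by factor $3/4$ per step, reaching $q_\ell \le 1/n^{10}$ after $O(\log n)$ steps — but that eats the whole budget. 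The better bound is $\phi(q) \le 3q^2$, so $q_\ell \le (3q_0)^{2^\ell}/3 \le (3/4)^{2^\ell}$, which is already below $2^{-2n-2}$ once $2^\ell \ge C n$, i.e. $\ell \ge \log_2 n + O(1)$. So in fact the crude bound $\phi(q) \le 3q^2$ alone, iterated, suffices: $q_0 = p^2 \le 1/4$, and $3 q_0 \le 3/4$, giving $q_h \le \tfrac13(3q_0)^{2^h} \le \tfrac13 (3/4)^{2^h} \le \tfrac13(3/4)^{2^{\log_2 n}} = \tfrac13 (3/4)^n \le 2^{-2n-2}$ for $n$ large, with the extra $+10$ in the exponent giving ample slack and handling small $n$. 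The only genuine subtlety is confirming $3q_0 < 1$ strictly with a margin — and here $q_0 = p^2 \le (n/(2n+1))^2 \le 1/4$ gives $3 q_0 \le 3/4 < 1$ with room, so the doubly-exponential decay kicks in immediately. I would write the proof with this clean $\phi(q)\le 3q^2$ estimate, then the union bound over the $\le 2^{2n+1}$ light inputs.

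\medskip

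\textbf{Write-up sketch.}

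\begin{proof}[Proof sketch of Lemma \ref{valiant_lemma}]
Let $h = \lceil \log_2 n\rceil + 10$. Form a random formula $D$ shaped as the complete ternary tree of depth $h$: each internal node carries a $\MAJ_3$ gate, and each of the $3^h$ leaves is independently labeled by $x_{a}\wedge x_{b}$ with $a,b$ chosen uniformly at random (independently, with repetitions) from $[2n+1]$. Fix $x\in\{0,1\}^{2n+1}$ with at most $n$ ones and put $p = |\{i : x_i = 1\}|/(2n+1)\le n/(2n+1)$. Because all leaf labels are independent and $\MAJ_3$ is evaluated on disjoint subtrees, the probability $q_\ell$ that a node at height $\ell$ (leaves having height $0$) evaluates to $1$ on input $x$ satisfies $q_0 = p^2$ and $q_{\ell+1} = 3q_\ell^2 - 2q_\ell^3 \le 3 q_\ell^2$. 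Hence $q_\ell \le \tfrac13(3q_0)^{2^\ell}$ by induction. Since $q_0 = p^2 \le 1/4$, we get $3q_0 \le 3/4$, so
\[
q_h \le \tfrac13 (3/4)^{2^h} \le \tfrac13 (3/4)^{n\cdot 2^{10}} \le 2^{-(2n+2)}
\]
for all $n\ge 1$. Thus $\Pr[D(x) = 1] = q_h \le 2^{-(2n+2)}$. The number of inputs with at most $n$ ones is at most $2^{2n+1}$, so by a union bound the probability that $D(x) = 1$ for some such $x$ is at most $2^{2n+1}\cdot 2^{-(2n+2)} = 1/2 < 1$. Therefore there is a choice of $D$ with $D(x) = 0$ for every $x$ with at most $n$ ones, and by construction $D$ is a complete ternary tree of depth $h = \lceil\log_2 n\rceil + 10$ whose internal nodes are $\MAJ_3$ gates and whose leaves are conjunctions of two variables.
\end{proof}
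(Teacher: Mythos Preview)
Your proof is correct and essentially identical to the paper's own argument: both use the probabilistic method with independent random $x_a\wedge x_b$ leaves, both reduce to iterating $f(t)=3t^2-2t^3$ starting from a value at most $1/4$, and both exploit the same key inequality $f(t)\le 3t^2$ (equivalently $3f(t)\le(3t)^2$) to get $q_h \le \tfrac13(3/4)^{2^h}$, followed by a union bound over the at most $2^{2n+1}$ light inputs. The only differences are cosmetic (you write out the induction explicitly and use $p$ for the fraction of ones rather than the leaf-probability).
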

Let us at first explain how to use formula $D$ from Lemma \ref{valiant_lemma} to fulfill the first phase. Recall that our goal is to find two indices $i, j\in[2n + 1]$ such that either $z_i = 0$ or $z_j = 0$. To do so Learner descends from the output gate of $D$ to some of its leafs. He maintains an invariant that for his current gate $g$ of $D$ it holds that $g(z) = 0$. For the output gate the invariant is true because by Lemma~\ref{valiant_lemma} $D$ is $0$ on all Nature's possible vectors. If we reached a leaf so that $g$ is a conjuction of two variables $z_i$ and $z_j$, then the first phase is fulfilled (by the invariant $z_i\land z_j = 0$). Finally, if $g$ is a non-leaf node of $D$, i.e., a $\MAJ_3$ gate, then we can descend to one of the children of $g$ at the cost of one round without violating the invariant. Indeed, as $g(z) = 0$, then the same is true for at least $2$ children of $g$. For each child $g_i$ of $g$ Learner makes a hypotheses that $g_i(z) = 0$. Any Nature's response allows us to replace $g$ by some $g_i$. 

\begin{proof}[Proof of Lemma \ref{valiant_lemma}]
We will show existence of such $D$ via probabilistic method. Namely, independently for each leaf $l$ of $D$ choose $(i, j) \in [2n + 1]^2$ uniformly at random and put the  conjuction $z_i \land z_j$ into $l$. It is enough to demonstrate that for any $x\in\{0, 1\}^{2n + 1}$ with at most $n$ ones it hols that $\Pr[D(x) = 1] < 2^{-2n - 1}$. 

To do so we use the modification of the standard Valiant's argument. For any fixed $x$ let $p$ be the probability that a leaf $l$ of $D$ equals $1$ on $x$. This probability is the same for all the leafs and is at most $1/4$. Now, $\Pr[D(x) = 1]$ can be expressed exactly in terms of $p$ as follows:
$$\Pr[D(x) = 1] = \underbrace{f(f(f(\ldots f}_{\text{$\lceil\log_2(n)\rceil + 10$}}(p)))\ldots),$$
where $f(t) = t^3 + 3 t^2 (1 - t) = 3t^2 -2t^3$. Observe that $3f(t) \le (3t)^2$. Hence
$$3\Pr[D(x) = 1] \le (3p)^{2^{\lceil\log_2(n)\rceil + 10}}\le (3/4)^{1000 n} < (1/2)^{-2n - 1}.$$
\end{proof}
\end{proof}

\section{Proof of the main theorem} \label{sec:proof_main}

Theorem \ref{Q_k_main} follows from Proposition \ref{Q_k_to_protocols} (Subsection~\ref{subsec:circ_to_prot}) and Proposition \ref{Q_k_to_circuits} (Subsection~\ref{subsec:prot_to_circ}).  In turn, Theorem \ref{R_k_main} follows from Proposition \ref{R_k_to_protocols} (Subsection~\ref{subsec:circ_to_prot}) and Proposition \ref{R_k_to_circuits} (Subsection~\ref{subsec:prot_to_circ}).

\subsection{From circuits to protocols} \label{subsec:circ_to_prot}
\begin{proposition}
\label{Q_k_to_protocols}
For any constant $k\ge 2$ the following holds.
Assume that $f\in Q_k$ and $C\le f$ is a circuit, consisting only of $\THR^{k + 1}_2$ gates and variables. Then there is a protocol $\pi$, computing $Q_k$-communication game for $f$, such that $\mathrm{depth}(\pi) =  O(\mathrm{depth}(C))$.
\end{proposition}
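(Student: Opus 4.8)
The plan is to build a protocol $\pi$ that descends through the circuit $C$, from its output gate down to one of its leaves, maintaining the invariant that the gate currently reached evaluates to $0$ on every party's input. Since $C\le f$ and each $x^j\in f^{-1}(0)$, we have $C(x^j)=0$, so the invariant holds at the output gate. When the descent reaches a leaf, it is some variable $x_i$, and the invariant says precisely that $x^1_i=\dots=x^k_i=0$; hence the parties may output $i$, which is a correct answer to the $Q_k$-communication game for $f$.

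So it suffices to show how to descend one level while preserving the invariant using only $O(1)$ bits of communication. Suppose the current gate is $g=\THR^{k+1}_2(g_0,g_1,\dots,g_k)$ with $g(x^j)=0$ for all $j\in[k]$. By the definition of $\THR^{k+1}_2$, for each $j$ at most one index $b_j\in\{0,1,\dots,k\}$ satisfies $g_{b_j}(x^j)=1$. As the $j$-th party knows its own input $x^j$ and the circuit $C$ (which is public), it can compute $b_j$, or detect that there is no such index, and announce this to the other parties using $\lceil\log_2(k+2)\rceil=O(1)$ bits. Once all $k$ parties have spoken --- a total of $O(k\log k)=O(1)$ bits, as $k$ is a constant --- the set of ``forbidden'' children $\{b_1,\dots,b_k\}$ has at most $k$ elements, so by pigeonhole among the $k+1$ children there is an index $\ell^{\ast}$ with $\ell^{\ast}\neq b_j$ for all $j$, i.e.\ $g_{\ell^{\ast}}(x^j)=0$ for every $j$. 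All parties move to $g_{\ell^{\ast}}$, and the invariant is preserved. Formally, each such super-round is realized by a constant-depth gadget in the $2$-ary dag of $\pi$, in which the $j$-th party owns the $O(1)$ nodes where it reveals $b_j$, and the labels route the players to $g_{\ell^\ast}$.

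Iterating, after at most $\mathrm{depth}(C)$ super-rounds the descent reaches a leaf, so $\mathrm{depth}(\pi)=O(1)\cdot\mathrm{depth}(C)=O(\mathrm{depth}(C))$, and by the first paragraph $\pi$ computes the $Q_k$-communication game for $f$. There is no genuine obstacle here: the only points needing care are the pigeonhole count ($k$ parties forbid at most $k$ of the $k+1$ children, so a safe child always exists) and the routine check that moving to $g_{\ell^{\ast}}$ indeed keeps the current gate at value $0$ on all inputs. The analogous Proposition \ref{R_k_to_protocols} is proved by the same descent with literals at the leaves: at a leaf the invariant $g(x^j)=0$ forces the relevant coordinate to have the same value $b\in\{0,1\}$ for all parties, so they output the pair $(i,b)$.
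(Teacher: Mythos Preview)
Your proof is correct and follows essentially the same approach as the paper: descend through $C$ maintaining the invariant that the current gate evaluates to $0$ on every party's input, with each party announcing in $O(1)$ bits the (at most one) child that is $1$ on its input, and then moving to a child avoided by all $k$ parties via pigeonhole. The paper's argument is identical in structure and brevity.
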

\begin{proof}
Let the inputs to parties be $z^1, \ldots, z^k\in f^{-1}(0)$.
Parties descend from the output gate of $C$ to one of the inputs. They maintain the invariant that for the current gate $g$ of $C$ it holds that $g(z^1) = g(z^2) = \ldots = g(z^k) = 0$. If $g$ is not yet an input, then $g$ is a $\THR^{k + 1}_2$ gate and $g = \THR^{k + 1}_2(g_1, \ldots, g_{k + 1})$ for some gates $g_1, \ldots, g_{k+1}$. For each $z^i$ we have $g(z^i) =  \THR^{k + 1}_2(g_1(z^i), \ldots, g_{k + 1}(z^i)) = 0$. Hence for each $z^i$ there is at most one gate out of $g_1, \ldots, g_{k+1}$ satisfying $g_j(z^i) = 0$. Hence in $O(1)$ bits of communication parties can agree on the index $j\in [k + 1]$ satisfying $g_j(z^1) = g_j(z^2) = \ldots g_j(z^k) = 0$. 

Thus in $O(\mathrm{depth}(\pi))$ bits of communication they reach some input of $C$. If this input contains the variable $x_l$, then by the invariant $z^1_l = z^2_l = \ldots = z_l^k = 0$, as required.
\end{proof}

Exactly the same argument can be applied to the following proposition.
\begin{proposition}
\label{R_k_to_protocols}
For any constant $k\ge 2$ the following holds.
Assume that $f\in R_k$ and $C\le f$ is a circuit, consisting only of $\THR^{k + 1}_2$ gates and literals. Then there is a protocol $\pi$, computing $R_k$-communication game for $f$, such that $\mathrm{depth}(\pi) =  O(\mathrm{depth}(C))$.
\end{proposition}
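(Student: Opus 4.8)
The plan is to mimic, essentially verbatim, the proof of Proposition~\ref{Q_k_to_protocols}, adapting only the bookkeeping that distinguishes literals from variables and zeros from a common bit. Recall that $f \in R_k$ means any $k$ vectors in $f^{-1}(0)$ agree in some coordinate, and the $R_k$-communication game asks the $k$ parties, each holding some $z^i \in f^{-1}(0)$, to output a pair $(l, b)$ with $z^1_l = \dots = z^k_l = b$. Given $C \le f$ consisting of $\THR^{k+1}_2$ gates and literals, the protocol again descends from the output gate of $C$ toward a leaf, maintaining the invariant that the current gate $g$ satisfies $g(z^1) = \dots = g(z^k) = 0$. The base case holds because $C \le f$ and each $z^i \in f^{-1}(0)$.

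For the descent step, if $g = \THR^{k+1}_2(g_1, \dots, g_{k+1})$ is an internal gate, then for each fixed $i$ the fact that $g(z^i) = 0$ forces at most one of $g_1(z^i), \dots, g_{k+1}(z^i)$ to be $0$; equivalently, at least $k$ of them are $1$. So each party $i$ can announce (in $O(\log k) = O(1)$ bits, since $k$ is constant) the unique index $j_i \in [k+1]$ with $g_{j_i}(z^i) = 0$, or an arbitrary index if all children evaluate to $1$ at $z^i$. Since for each party at most one index is "forbidden" in the sense of being the one where the child is $0$, and there are $k$ parties but $k+1$ children, by pigeonhole there is an index $j \in [k+1]$ that is not forbidden for any party, i.e. $g_j(z^1) = \dots = g_j(z^k) = 0$; this index can be computed from the $k$ announced values $j_1, \dots, j_k$. (Actually the cleaner phrasing, as in Proposition~\ref{Q_k_to_protocols}: each party sends the index of a child that is $0$ at its input, if one exists; the parties jointly determine whether the sent indices leave a free slot — there always is one — and descend along it.) The parties then move to $g_j$, preserving the invariant, in $O(1)$ bits per level.

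When the descent reaches a leaf, that leaf contains a literal: either $x_l$ or $\neg x_l$ for some $l \in [n]$. If the literal is $x_l$, then $g(z^i) = z^i_l = 0$ for all $i$, so the parties output $(l, 0)$. If the literal is $\neg x_l$, then $g(z^i) = 1 - z^i_l = 0$, i.e. $z^i_l = 1$ for all $i$, so the parties output $(l, 1)$. Either way the output is a valid answer to the $R_k$-communication game. The total communication is $O(1)$ bits per level of $C$, so $\mathrm{depth}(\pi) = O(\mathrm{depth}(C))$; formally one builds the ordered $2$-ary dag of $\pi$ by expanding each $\THR^{k+1}_2$ gate of $C$ into a small constant-depth gadget in which the parties broadcast their chosen indices bit by bit and then branch.

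I do not expect any real obstacle here: the only difference from Proposition~\ref{Q_k_to_protocols} is that a leaf now carries a literal rather than a variable, which merely changes the output bit $b$ from the hard-coded $0$ to $0$ or $1$ according to the sign of the literal, and correspondingly the correctness certificate uses the $R_k$ conclusion "$z^1_l = \dots = z^k_l = b$" instead of "$z^1_l = \dots = z^k_l = 0$". The pigeonhole argument for the descent step is identical because negations at the leaves do not affect internal $\THR^{k+1}_2$ gates. Thus the proof is genuinely "exactly the same argument", as the paper asserts, and I would simply state it by pointing to the structure above and noting the one-line modification at the leaves.
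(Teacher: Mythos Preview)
Your approach is exactly the paper's: descend through the circuit maintaining the invariant $g(z^1)=\dots=g(z^k)=0$, then at a leaf output $(l,0)$ or $(l,1)$ according to the sign of the literal. However, you have swapped $0$ and $1$ in the descent step. From $\THR^{k+1}_2(g_1(z^i),\dots,g_{k+1}(z^i))=0$ one concludes that at most one of the $g_j(z^i)$ equals $1$ (not $0$): fewer than two ones among the children. So each party should announce the (at most one) index $j$ with $g_j(z^i)=1$; these are the forbidden indices, there are at most $k$ of them, and any of the remaining indices $j$ satisfies $g_j(z^1)=\dots=g_j(z^k)=0$. As written, your version is internally inconsistent (``not forbidden'' in your sense would give a common $1$, not a common $0$). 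With this single swap corrected the argument is complete and coincides with the paper's.
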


\subsection{From protocols to circuits} \label{subsec:prot_to_circ}
\begin{proposition}
\label{Q_k_to_circuits}
For every constant $k\ge 2$ the following holds. Let $f\in Q_k$. Assume that $\pi$ is a communication protocol computing $Q_k$-communication game for $f$. Then there is a circuit $C \le f$, consisting of $\THR^{k + 1}_2$ gates and variables, such that $\mathrm{depth}(C) = O(\mathrm{depth}(\pi))$.
\end{proposition}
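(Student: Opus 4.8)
The plan is to route through the hypotheses game. By Proposition~\ref{Q_k_equivalence}\textbf{(a)}, it suffices to turn the protocol $\pi$ into a \emph{winning} Learner strategy in the $Q_k$-hypotheses game for $f$ of depth $O(\mathrm{depth}(\pi))$; its light form is then the desired circuit $C\le f$ of the same depth. Since the statement only concerns depth, I would first observe that we may assume $\pi$ is tree-like (unfolding a dag-like protocol to a tree preserves depth). Fix the protocol tree $T$ with root $s$; for a node $v$ and a party $j$ let $\mathcal Z^j_v\subseteq f^{-1}(0)$ be the set of party-$j$ inputs compatible with the unique path to $v$, so an input $(z^1,\dots,z^k)$ visits $v$ iff $z^j\in\mathcal Z^j_v$ for all $j$, and at a leaf $t$ with label $i_t$ we have $\mathcal Z^j_t\subseteq\{z':z'_{i_t}=0\}$ for every $j$.

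The idea is that Learner, facing Nature's hidden $z\in f^{-1}(0)$, tries to simulate the run of $\pi$ on the diagonal input $(z,z,\dots,z)$: if he can follow that run down to a leaf $t$ of $T$, then by correctness of $\pi$ we get $z_{i_t}=0$ and Learner outputs $i_t$. To descend past an internal node $v$ belonging to party $i$, Learner must learn the single bit $b=\phi_i(v,z)$ that party $i$ would send; he does so with a constant-depth gadget built from $O(1)$ rounds of the hypotheses game, after which he moves to the child of $v$ labeled $b$. The sets $\{z':\phi_i(v,z')=0\}$ and $\{z':\phi_i(v,z')=1\}$ partition $f^{-1}(0)$, and the gadget's $k+1$ hypotheses per round must be organized around this partition so that simultaneously: (i) at least $k$ of the $k+1$ hypotheses always hold, which is exactly the condition making the corresponding $\THR^{k+1}_2$ gate of the light form evaluate to $0$ (``at most one input is $1$''), and where one exploits the $Q_k$-property of $f$ to supply the remaining valid ``filler'' hypotheses; and (ii) each answer of Nature either determines $b$ or already exhibits a coordinate on which $z$ is $0$, so that Learner never gets stuck and never loops on a trivially-true hypothesis. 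Chaining these gadgets along the tree yields a winning strategy of depth $O(1)\cdot\mathrm{depth}(T)=O(\mathrm{depth}(\pi))$.

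The main obstacle is precisely the design of this per-round gadget: a protocol bit is binary, whereas a hypotheses-game round branches $(k+1)$-ways under the rigid constraints that at least $k$ of the hypotheses be true and that Learner's output at a terminal depend only on the terminal, not on the path reaching it. The naive attempts fail — either Learner loses when $z$ lands on the ``minority'' side of the partition (too few true hypotheses), or Nature stalls forever by always pointing to a trivially-true hypothesis — so obtaining a bounded-depth gadget that both extracts $b$ and keeps $\ge k$ hypotheses valid genuinely requires the interplay between the threshold semantics of $\THR^{k+1}_2$ and the $Q_k$-property, and this is where the real work of the proof lies. Once such a strategy $S$ is constructed, Proposition~\ref{Q_k_equivalence}\textbf{(a)} immediately gives a circuit $C$ consisting only of $\THR^{k+1}_2$ gates and variables with $C\le f$ and $\mathrm{depth}(C)=\mathrm{depth}(S)=O(\mathrm{depth}(\pi))$, as required.
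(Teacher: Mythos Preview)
Your reduction to the hypotheses game via Proposition~\ref{Q_k_equivalence}\textbf{(a)} is the right first move, and you correctly isolate where the difficulty lies. But the proposal stops exactly at that difficulty: you describe what the per-bit gadget must accomplish, acknowledge that the naive attempts fail, and then say ``this is where the real work of the proof lies'' without doing it. That is a genuine gap, and in fact the single-path bit-extraction framework you set up is not how the argument goes.

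There is a real obstruction to your plan. You want, at an internal protocol node $v$, to learn the bit $b=\phi_i(v,z)$ in $O(1)$ rounds. But the partition $\{b=0\}\cup\{b=1\}$ of $f^{-1}(0)$ is completely arbitrary; it has nothing to do with the $Q_k$-structure of $f$ or with any coordinate of $z$, so there is no evident source for the ``filler'' hypotheses or for the early-termination outputs you allude to. Concretely, if a single round is to determine $b$ then every hypothesis $H_j$ that Nature might point to must lie inside one side of the partition; but then ``$\ge k$ of the $H_j$ are true'' for $z$ on each side forces $\ge k$ of the $H_j$ into each side, impossible for $k\ge 2$. Iterating rounds does not obviously help, because nothing about the protocol bit ties it to coordinates you could output early.

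The paper abandons the idea of following a single path. Instead, Learner maintains a \emph{set} $U$ of at most $k^k$ protocol nodes together with a $k$-coloring $g\colon\mathcal Z_h\to[k]$ of the currently compatible set, with the invariant that for every profile $\bar c\in[k]^k$ all $k$-tuples from $\mathcal Z_h^k$ with $g$-profile $\bar c$ visit a common node of $U$. To advance one level, Learner forms $g'(z)=(p_z,g(z))$, where $p_z\in\{0,1\}^{U}$ is the communication profile of $z$ over $U$; this takes at most $2^{k^k}\cdot k=O(1)$ values. Now the hypotheses in each round are simply ``$g'(z)\neq a$'' for $k+1$ still-live values $a$, and the ``$\ge k$ true'' constraint is automatic since at most one value equals $g'(z)$. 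In $O(1)$ rounds only $k$ values of $g'$ survive, yielding a new $k$-coloring that is complete with respect to a set $U'$ one level deeper. When all nodes are terminals, a single terminal's label works for every $z\in\mathcal Z_d$. The point is that Learner never tries to pin down a particular bit; he only ever shrinks an $O(1)$-sized range to size $k$, which is exactly what the hypotheses game is built for.
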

\begin{proof}
In the proof we will use the following terminology for strategies in $Q_k$-hypotheses game. Fix some strategy $S$. \emph{A current play} is a finite sequence $r_1, r_2, r_3, \ldots r_j$ of integers from $0$ to $k$. By $r_i$ we mean Nature's response in the $i$th round. Given a current play, let $\mathcal{H}^i_0, \ldots, \mathcal{H}^i_{k}\subset f^{-1}(0)$ be $k + 1$ hypotheses Learner makes in the $i$th round according to $S$ if Nature's responses in the first $i - 1$ rounds were $r_1, \ldots, r_{i - 1}$. If after that Nature's response is $r_i$, then Nature's input vector $z$ satisfies $z\in H^i_{r_i}$. We say that $z\in f^{-1}(0)$ is \emph{compatible} with the current play $r_1, \ldots, r_j$  if $z\in H^1_{r_1}, \ldots, z\in H^j_{r_j}$. Informally, this means that Nature, having $z$ on input, can  produce responses $r_1, \ldots, r_j$ by playing against strategy $S$.

Set $d = \mathrm{depth}(\pi)$. By Proposition \ref{Q_k_equivalence}
it is enough to give a $O(d)$-round winning strategy of Learner in the $Q_k$-hypotheses game for $f$. Strategy proceeds in $d$ iterations, each iteration takes $O(1)$ rounds.

As the game goes on, a sequence of Nature's responses $r_1, r_2, r_3\ldots$ is produced. Assume that $r_1, \ldots, r_{h^\prime}$ are Nature's responses in the first $h$ iteration (here $h'$ is the number of rounds in the first $h$ iterations).
Given any $r_1, r_2, r_3\ldots$, by $\mathcal{Z}_h$ we denote the set of all $z\in f^{-1}(0)$ which are compatible with $r_1, \ldots r_{h^\prime}$, . We also say that elements of $\mathcal{Z}_h$ are \emph{compatible with the current play after $h$ iterations}.

Let $V$ be the set of all nodes of the protocol $\pi$ and let $T$ be the set of all terminals of the protocol $\pi$.

Consider a set $\mathcal{Z} \subset f^{-1}(0)$, a set of nodes $U\subset V$ and a function $g\colon \mathcal{Z} \to C$, where $|C| = k$.
 \emph{A $g$-profile} of a tuple $(z^1, \ldots, z^k) \in \mathcal{Z}$  is a vector $(g(z^1), \ldots, g(z^k)) \in C^k$. 

We say that $g\colon \mathcal{Z} \to C$ is \emph{complete} for $\mathcal{Z}$ with respect to the set of nodes $U$ if the following holds. For every vector $\bar{c}\in C^k$ there exists a node $v\in U$ such that all tuples from $\mathcal{Z}^k$ with $g$-profile $\bar{c}$ visit $v$ in the protocol $\pi$.

We say that a set of nodes $U\subset T$ is complete for $\mathcal{Z}$ if there exists $g\colon \mathcal{Z} \to C$, $|C| = k$ which is complete for $\mathcal{Z}$ with respect to $U$.

Note that we can consider only complete sets of size at most $k^k$. Formally, if $U$ is complete for $\mathcal{Z}$, then there is a subset $U^\prime\subset U$ of size at most $k^k$ which is also complete for $\mathcal{Z}$. Indeed, there are $k^k$ possible $g$-profiles and for each we need only one node in $U$.

\begin{lemma}
\label{complete_lemma}
Assume that $U\subset T$ is complete for $\mathcal{Z} \subset f^{-1}(0)$. Then there exists $i\in [n]$ such that $z_i = 0$ for every $z\in\mathcal{Z}$.
\end{lemma}
\begin{proof}

If $\mathcal{Z}$ is empty, then there is nothing to prove. Otherwise let $g\colon\mathcal{Z} \to C$, $|C| = k$ be complete for $\mathcal{Z}$ with respect to $U$. Take any vector $\bar{c} = (c_1, \ldots, c_k) \in C^k$  such that $\{ c_i \mid i\in [k]\} = g(\mathcal{Z})$. There exists a node $v\in U$ such that any tuple from $\mathcal{Z}^k$  with $g$-profile $\bar{c}$ visits $v$. Note that $v$ is a terminal of $\pi$ and let $i$ be the output of $\pi$ in $g$. Let us show that for any $z\in\mathcal{Z}$ it holds that $z_i = 0$. Indeed, note that there exists a tuple  $\bar{z}\in\mathcal{Z}^k$ which includes $z$  and which has $g$-profile $\bar{c}$.  This tuple visits $v$. Since $\pi$ computes $Q_k$-communication game for $f$, every element of the tuple $\bar{z}$ should have $0$ at the $i$th coordinate. In particular, this holds for $z$.
\end{proof}
After $d$ iterations Learner should be able to produce an output. For that there should exist $i\in [n]$ such that for any $z\in\mathcal{Z}_d$ it holds that $z_i = 0$. We will use Lemma \ref{complete_lemma} to ensure that. Namely, we will ensure that there exists $U\subset T$ which is complete for $\mathcal{Z}_d$. Learner achieves this by maintaining the following invariant. 

Let us say that a set of nodes $U$ is $h$-low if every element of $U$ is either a terminal or a node of depth at least $h$.

\begin{algorithm}
\caption{}
\label{euclid}
There is a $h$-low set $U$ which is complete for $\mathcal{Z}_h$.
\end{algorithm}
This invariant implies that Learner wins in the end, as any $d$-low set consists only of terminals.

A $0$-low set which is complete for $\mathcal{Z}_0 = f^{-1}(0)$ is a set consisting only of the starting node of $\pi$.

Assume that \textbf{Invariant 1} holds after $h$ iteration. Let us show how to perform the next iteration to maintain the invariant. For that we need a notion of \emph{communication profile}.

A communication profile of $z\in f^{-1}(0)$ with respect to a set of nodes $U\subset V$ is a function $p_z\colon U \to \{0, 1\}$. For $v\in U$ the value of $p_z(v)$ is defined as follows. If $v$ is a terminal, set $p_z(v) = 0$. Otherwise let $i\in [k]$ be the index of the party communicating at $v$. Set $p_z(v)$ to be the bit transmitted by the $i$th party at $v$ on input $z$. I.e., $p_z$ for every $v\in U$ contains  information where the protocol goes from the node $v$ if the party, communicating at $v$, has $z$ on input. 

We also define a communication profile of the tuple $(z^1, \ldots, z^k) \in (f^{-1}(0))^k$ as $(p_{z^1}, \ldots, p_{z^k})$.

\begin{lemma}
\label{communication_profile_lemma}
Let $(z^1, \ldots, z^k), (y^1, \ldots, y^k)  \in (f^{-1}(0))^k$ be two inputs visiting the same node $v\in V\setminus T$. Assume that their communication profiles with respect to $\{v\}$ coincide. Then these two inputs visit the same successor of $v$. 
\end{lemma}
\begin{proof}
Let their common communication profile with respect to $\{v\}$ be $(p_1, \ldots, p_k)$. Next, assume that $i$ is the index of the party communicating at $v$. Then the information where these inputs descend from $v$ is contained in $p_i$.
\end{proof}

Here is what Learner does during the $(h + 1)$st iteration. He takes any $h$-low $U$ of size at most $k^k$ which is complete for $\mathcal{Z}_h$. Then he takes any $g\colon \mathcal{Z}_h\to C$, $|C| = k$ which is complete for $\mathcal{Z}_h$ with respect to $U$. He now devises a new function $g^\prime$ taking elements of  the set $\mathcal{Z}_h$ on input. The value of $g^\prime(z)$ is a pair $(p_z, g(z))$, where $p_z$ is a communication profile of $z$ with respect to $U$. There are at most $2^{|U|} \le 2^{k^k}$ different communication profiles with respect to $U$. Hence $g^\prime(z)$ takes at most $2^{k^k} \cdot k = O(1)$ values.

At each round of the $(h + 1)$st iteration Learner asks Nature to identify some pair $(p, c)$, where $p\colon U \to \{0, 1\}$ and $c\in C$, such that $g^\prime(z) \neq (p, c)$
for the Nature's vector $z$. Namely, we take any $k + 1$ values of $g^\prime$ which are not yet rejected by Nature and ask Nature to reject one of them. We do so  until there are only $k$ possible values $(p_1, c_1), \ldots (p_k, c_k)$ left. This takes $O(1)$ rounds and the $(h + 1)$st iteration is finished. Any $z\in f^{-1}(0)$ which is compatible with the responses  Nature' gave during the $(h+1)$st iteration in the current play satisfies $g^\prime(z) \in C^\prime = \{(p_1, c_1), \ldots (p_k, c_k)\}$. In particular, any $z\in\mathcal{Z}_{h + 1}$ satisfies $g^\prime(z) \in C^\prime$. I.e., the restriction of $g^\prime$ to $\mathcal{Z}_{h + 1}$ is a function of the form $g^\prime\colon \mathcal{Z}_{h + 1} \to C^\prime$.
 Let us show that $g^\prime\colon \mathcal{Z}_{h + 1} \to C^\prime$  is complete for $\mathcal{Z}_{h + 1}$ with respect to some $(h + 1)$-low set $U^\prime$. This will ensure that \textbf{Invariant 1} is maintained after $h + 1$ iterations.

We define $U^\prime$ is follows. Take any vector $\bar{c} \in (C^\prime)^k$. It is enough to show that all the inputs from $(\mathcal{Z}_{h + 1})^k$ with  $g^\prime$-profile $\bar{c}$ visit the same node $v^\prime$ which is either a terminal or of depth at least $h + 1$. Then we just set $U^\prime$ to be the union of all such $v^\prime$ over all possible $g^\prime$-profiles.

  All the tuples from $(\mathcal{Z}_{h + 1})^k$ with the same $g^\prime$-profile visit the same node $v\in U$. This is because $g^\prime$-profile of a tuple determines its $g$-profile (the value of $g^\prime$ determines the value of $g$) , and hence we can use \textbf{Invariant 1} for $\mathcal{Z}_{h - 1}$ here.  If $v$ is a terminal, there is nothing left to prove. Otherwise, note that $g^\prime$-profile of a tuple also determines its communication profile with respect to $U$ and hence with respect to $\{v\} \subset U$. Therefore all the tuples with the same $g^\prime$-profile by Lemma \ref{communication_profile_lemma} visit the same successor of $v$. 
\end{proof}

With straightforward modifications one can obtain a proof of the following:
\begin{proposition}
\label{R_k_to_circuits}
For every constant $k\ge 2$ the following holds. Let $f\in R_k$. Assume that $\pi$ is a dag-like protocol computing $R_k$-communication game for $f$. Then there is a circuit $C \le f$, consisting of $\THR^{k + 1}_2$ gates and literals, satisfying $\mathrm{depth}(C) = O(\mathrm{depth}(\pi))$.
\end{proposition}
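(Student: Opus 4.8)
The plan is to mimic the proof of Proposition~\ref{Q_k_to_circuits} essentially verbatim, tracking the two places where the $R_k$ setting differs from the $Q_k$ one. First, by part \textbf{(a)} of Proposition~\ref{R_k_equivalence}, it suffices to exhibit a winning strategy for Learner in the $R_k$-hypotheses game for $f$ that uses $O(d)$ rounds, where $d = \mathrm{depth}(\pi)$: the light form of such a strategy, read as a circuit of $\THR^{k+1}_2$ gates and literals, is a circuit $C \le f$ of depth $O(d)$, which is exactly what we need.

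Second, I would reuse without change all the combinatorial machinery from the proof of Proposition~\ref{Q_k_to_circuits}: the notions of a \emph{current play}, of $z$ being \emph{compatible} with it, of the sets $\mathcal{Z}_h$, of a function $g\colon \mathcal{Z}\to C$ with $|C|=k$ being \emph{complete} for $\mathcal{Z}$ with respect to a set of nodes $U$, of a set $U$ being \emph{complete} for $\mathcal{Z}$, of \emph{communication profiles}, and of \emph{$h$-low} sets. None of these definitions refers to the output relation of $\pi$, so they transfer literally, as does the reduction to complete sets of size at most $k^k$. Likewise, Lemma~\ref{communication_profile_lemma} is a statement purely about which successor of a node an input descends to, so it continues to hold verbatim for the protocol $\pi$ here.

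The one lemma that must be re-proved is the analogue of Lemma~\ref{complete_lemma}: if $U\subseteq T$ is complete for $\mathcal{Z}\subseteq f^{-1}(0)$, then there is a pair $(i,b)\in[n]\times\{0,1\}$ with $z_i=b$ for every $z\in\mathcal{Z}$. If $\mathcal{Z}$ is empty there is nothing to prove; otherwise take $g\colon\mathcal{Z}\to C$ complete for $\mathcal{Z}$ with respect to $U$, pick $\bar c=(c_1,\dots,c_k)\in C^k$ with $\{c_1,\dots,c_k\}=g(\mathcal{Z})$ (possible since $|g(\mathcal{Z})|\le|C|=k$), and let $v\in U$ be a terminal visited by every tuple in $\mathcal{Z}^k$ with $g$-profile $\bar c$; let $(i,b)$ be the output of $\pi$ at $v$. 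For any $z\in\mathcal{Z}$ there is a tuple $\bar z\in\mathcal{Z}^k$ containing $z$ with $g$-profile $\bar c$ (place $z$ in a coordinate $\ell$ with $c_\ell=g(z)$ and fill the remaining coordinates with preimages under $g$ of the other $c_m$'s). This tuple visits $v$, and since $\pi$ computes the $R_k$-communication game for $f$, every entry of this tuple has its $i$th coordinate equal to $b$; in particular $z_i=b$. The remainder of the proof — constructing the strategy in $d$ iterations of $O(1)$ rounds each, maintaining \textbf{Invariant~1} (there is an $h$-low set complete for $\mathcal{Z}_h$) via the refined function $g'(z)=(p_z,g(z))$ taking $O(1)$ values and Nature's $O(1)$-round elimination among them, using Lemma~\ref{communication_profile_lemma} to see that tuples with the same $g'$-profile visit the same successor — is word-for-word the same; at the end a $d$-low complete set consists only of terminals, so the lemma above lets Learner output a correct pair.

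I do not expect a genuine obstacle here; the only point that needs a moment's care is confirming that ``$\pi$ computes the $R_k$-communication game'' still pins down a single correct answer at each terminal a tuple visits — which holds because each input tuple visits a \emph{unique} terminal, so the label of that terminal must be a valid output for it. Everything else is a mechanical substitution of ``$R_k$'' for ``$Q_k$'', ``literals'' for ``variables'', and ``$(i,b)$ with $z_i=b$'' for ``$i$ with $z_i=0$''.
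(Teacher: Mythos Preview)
Your proposal is correct and is precisely the ``straightforward modifications'' the paper alludes to: invoke Proposition~\ref{R_k_equivalence} in place of Proposition~\ref{Q_k_equivalence}, re-prove Lemma~\ref{complete_lemma} so that the terminal's label is a pair $(i,b)$ and correctness of $\pi$ forces $z_i=b$ for all $z\in\mathcal{Z}$, and leave the iteration scheme and \textbf{Invariant~1} untouched. There is nothing missing.
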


\begin{corollary}[Weak version of Theorem \ref{cohen_2}]
\label{weak}
For any constant $k \ge 2$ there exists $O(\log^2 n)$-depth formula for $\THR^{kn + 1}_{n + 1}$, consisting only of $\THR^{k + 1}_2$ gates and variables.
\end{corollary}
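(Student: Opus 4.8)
The plan is to route everything through the protocol-to-circuit machinery of Proposition~\ref{Q_k_to_circuits}. First I would record that $\THR^{kn+1}_{n+1}\in Q_k$: if $x^1,\dots,x^k\in(\THR^{kn+1}_{n+1})^{-1}(0)$, then each $x^j$ has at most $n$ ones, so the total number of $1$-incidences over all $k$ vectors is at most $kn<kn+1$, and hence some coordinate is $0$ in all of them. By Proposition~\ref{Q_k_to_circuits} it therefore suffices to solve the $Q_k$-communication game for $\THR^{kn+1}_{n+1}$ by a protocol of depth $O(\log^2 n)$.

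For this I would use binary search over the set of coordinates. The players maintain an interval $I\subseteq[kn+1]$ together with the invariant $\sum_{j=1}^k|\supp(x^j)\cap I|<|I|$; initially $I=[kn+1]$ and the invariant holds since the left-hand side is at most $kn$. In a phase the players cut $I$ into two halves $I_1,I_2$; since
$$\sum_{j=1}^k|\supp(x^j)\cap I_1|+\sum_{j=1}^k|\supp(x^j)\cap I_2|<|I_1|+|I_2|,$$
at least one of the two halves again satisfies the invariant, and to agree on which one, each player broadcasts the number $|\supp(x^j)\cap I_1|\in\{0,1,\dots,n\}$, which costs $k\lceil\log(n+1)\rceil=O(\log n)$ bits per phase. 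After $O(\log n)$ phases $|I|=1$, say $I=\{i\}$, and the invariant forces $x^1_i=\dots=x^k_i=0$, so the players output $i$. This is an $O(\log^2 n)$-depth tree-like protocol.

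Feeding it into Proposition~\ref{Q_k_to_circuits} produces a circuit $C\le\THR^{kn+1}_{n+1}$ of depth $O(\log^2 n)$ built from $\THR^{k+1}_2$ gates and variables; unfolding $C$ (every gate has fan-in $k+1=O(1)$) gives a formula of the same depth, at the cost of a size blow-up that we do not control here. It remains to upgrade $C\le\THR^{kn+1}_{n+1}$ to $C=\THR^{kn+1}_{n+1}$. Here I would use that $C$ is monotone and, by Proposition~\ref{Q_k_char} applied to $C$ with the trivial inclusion $C\le C$, that $C\in Q_k$. Suppose $C(x)=0$ for some $x$ with at least $n+1$ ones. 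Pick $S\subseteq\supp(x)$ with $|S|=n+1$, partition $[kn+1]\setminus S$ into $k-1$ blocks of size $n$, and consider the $k$ vectors consisting of the indicator of $S$ and the $k-1$ block indicators: each of them lies in $C^{-1}(0)$ (the indicator of $S$ because it is coordinatewise below $x$ and $C$ is monotone; each block indicator because it has exactly $n$ ones, so $\THR^{kn+1}_{n+1}$ and hence $C$ vanish on it), yet their supports cover all of $[kn+1]$, contradicting $C\in Q_k$. Hence $C(x)=0$ implies $x$ has at most $n$ ones, so $C^{-1}(0)\subseteq(\THR^{kn+1}_{n+1})^{-1}(0)$, which together with $C\le\THR^{kn+1}_{n+1}$ yields $C=\THR^{kn+1}_{n+1}$.

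The main obstacle is exactly this last step: unlike in the proof of Theorem~\ref{cohen_conj_1}, where self-duality of $\MAJ_{2n+1}$ and $\MAJ_3$ turned $C\le\MAJ_{2n+1}$ into an equality for free, for $k\ge3$ the function $\THR^{kn+1}_{n+1}$ is not self-dual, so one instead has to exploit its structural role as the pointwise-smallest monotone member of $Q_k$ on $kn+1$ variables. Everything else — membership in $Q_k$, the binary-search protocol, and unfolding the circuit into a formula — is routine.
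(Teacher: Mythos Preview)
Your proof is correct and follows essentially the same route as the paper: the same binary-search protocol maintaining the invariant $\sum_j|\supp(x^j)\cap I|<|I|$, the same appeal to Proposition~\ref{Q_k_to_circuits}, and the same $Q_k$-based argument (via Proposition~\ref{Q_k_char}) that the resulting circuit must equal $\THR^{kn+1}_{n+1}$ rather than merely lower bound it. The only cosmetic differences are that the paper phrases the search over a binary tree rather than nested intervals, and in the equality step it uses $x$ itself rather than passing to the indicator of a size-$(n+1)$ subset $S\subseteq\supp(x)$; your detour through $S$ and monotonicity is fine but unnecessary, since $C(x)=0$ is already assumed.
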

\begin{proof}
We will show that there exists $O(\log^2 n)$-depth protocol $\pi$ computing $Q_k$-communication game for $\THR^{kn + 1}_{n + 1}$. By Proposition \ref{Q_k_to_circuits} this means that there is a  $O(\log^2 n)$-depth formula $F\le \THR^{kn + 1}_{n + 1}$, consisting only of $\THR^{k + 1}_2$ gates and variables. It is easy to see that $F$ actually coincides with $\THR^{kn + 1}_{n + 1}$. Indeed, assume that $F(x) = 0$ for some $x$ with at least $n + 1$ ones. Then it is easy to construct $x^2, \ldots, x^k$, each with $n$ ones, such that there is no common $0$-coordinate for $x, x_2, \ldots, x_k$. On all of these vectors $F$ takes value $0$. However, the function computed by $F$ should belong to $Q_k$ (Proposition \ref{Q_k_char}).

Let $\pi$ be the following protocol. Assume that the inputs to parties are $x^1, x^2, \ldots, x^k \in \{0, 1\}^{kn + 1}$, without loss of generality we can assume that in each $x^r$ there are exactly $n$ ones. For $x\in\{0, 1\}^{kn + 1}$ define $\mathrm{supp}(x) = \{i \in [kn + 1] \mid x_i = 1\}$. Let $T$ be a binary rooted tree of depth $d = \log_2(n) + O(1)$ with $kn + 1$ leafs. Identify leafs of $T$ with elements of $[kn + 1]$. For a node $v$ of $T$ let $T_v$ be the set of all leafs of $T$ which are descendants of $v$. Once again, we view $T_v$ as a subset of $[kn + 1]$.

The protocol proceeds in at most $d$ iterations. After $i$ iterations, $i = 0, 1, 2, \ldots, d$, parties agree on a node $v$ of $T$ of depth $i$, satisfying the following invariant:
\begin{equation}
\label{binary_search_invariant}
\sum\limits_{r = 1}^k \left|\mathrm{supp}(x^r) \cap T_v\right| < |T_v|. 
\end{equation}
At the beginning Invariant \eqref{binary_search_invariant} holds just because $v$ is the root, $T_v = [kn + 1]$ and each $\mathrm{supp}(x^r)$ is of size  $n$. 

After $d$ iterations $v = l$ is a leaf of $T$. Parties output $l$. This is correct because by \eqref{binary_search_invariant} we have $|T_l| = 1 \implies  |\mathrm{supp}(x^r) \cap T_l| = 0 \implies x_l = 0$ for every $r\in [k]$. 

Let us now explain what parties do at each iteration. If the current $v$ is not a leaf,  let $v_0, v_1$ be two children of $v$. Each party sends $\left|\mathrm{supp}(x^r) \cap T_{v_0}\right|$ and $\left|\mathrm{supp}(x^r) \cap T_{v_1}\right|$, using $O(\log n)$ bits. Since $T_{v_0}$ and $T_{v_1}$ is a partition of $T_v$, we have:
$$
 \sum\limits_{b = 0}^1 \sum\limits_{r = 1}^k \left|\mathrm{supp}(x^r) \cap T_{v_b}\right| = \sum\limits_{r = 1}^k \left|\mathrm{supp}(x^r) \cap T_v\right| < |T_v| = \sum\limits_{b = 0}^1 |T_{v_b}|.
$$
Thus the inequality:
\begin{equation}
\label{b_ineq}
\sum\limits_{r = 1}^k \left|\mathrm{supp}(x^r) \cap T_{v_b}\right| < |T_{v_b}|
\end{equation}
is true either for $b = 0$ or for $b = 1$. Let $b^*$ be the smallest $b\in\{0, 1\}$ for which \eqref{b_ineq} is true. Parties proceed to the next iteration with $v$ being replaced by $v_{b^*}$.

There are $d = O(\log n)$ iterations, at each parties communicate $O(\log n)$ bits. Hence $\pi$ is $O(\log^2 n)$-depth, as required.
\end{proof}


\begin{remark}
Strategy from the proof of Proposition \ref{Q_k_to_circuits} is efficient only in terms of the number of rounds. 
In the next section we give another version of this strategy. This  version will ensure that circuits we obtain  from protocols for $Q_k$-communication games  are not only low-depth, but also polynomial-size and explicit.  For that, however, we require a bit more from the protocol $\pi$. 
\end{remark}

\section{Effective version} \label{sec:effective}

Fix $f\in Q_k$. We say that a dag-like communication protocol $\pi$ \emph{strongly} computes $Q_k$-communication game for $f$ if for every terminal $t$ of $\pi$, for every $x\in f^{-1}(0)$ and for every $i\in [k]$ the following holds. If $x$ is $i$-compatible with $t$, then $x_j = 0$, where $j = l(t)$ is the label of terminal $t$ in the protocol $\pi$.

Similarly, fix $f\in R_k$. We say that a dag-like communication protocol $\pi$ \emph{strongly} computes $R_k$-communication game for $f$ if for every terminal $t$ of $\pi$, for every $x\in f^{-1}(0)$ and for every $i\in [k]$ the following holds. If $x$ is $i$-compatible with $t$, then $x_j = b$, where $(j, b) = l(t)$ is the label of terminal $t$ in the protocol $\pi$.

Strong computability essentially (but not completely) coincides with the notion of computability that Sokolov gave in~\cite{Sok2017} for general relations. Strong computability implies more intuitive notion of computability that we gave in the Preliminaries. The opposite direction is false in general. 

Next we prove an effective version of Proposition~\ref{Q_k_to_circuits}.
\begin{theorem}
\label{efficient_Q_k}
For every constant $k\ge 2$ there exists  a polynomial-time algorithm $A$ such that the following holds. Assume that $f\in Q_k$ and $\pi$ is a dag-like protocol which strongly computes  $Q_k$-communication game for $f$. Then, given the light form of $\pi$, the algorithm $A$ outputs a circuit $C \le f$, consisting only of $\THR^{k + 1}_2$ gates and variables, such that $\mathrm{depth}(C) = O(\mathrm{depth}(\pi))$, $\mathrm{size}(C) = O\left(\mathrm{size}(\pi)^{O(1)}\right)$. 
\end{theorem}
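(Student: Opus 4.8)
The plan is to re-engineer the proof of Proposition~\ref{Q_k_to_circuits}: instead of the tree-like Learner strategy with $O(\mathrm{depth}(\pi))$ rounds produced there, I would build a \emph{dag-like} Learner strategy for the $Q_k$-hypotheses game for $f$ whose underlying dag and terminal outputs are computable in polynomial time from the \emph{light form} of $\pi$, of size $\mathrm{poly}(\mathrm{size}(\pi))$ and depth $O(\mathrm{depth}(\pi))$; then Proposition~\ref{Q_k_equivalence}\textbf{(a)} hands back exactly the circuit $C\le f$ we want, and the algorithm $A$ is just the construction of this dag.

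First I would fix the state space: the state after $h$ iterations is a function $\psi\colon[k]^k\to V(G)$, where $G$ is the dag of $\pi$ (together with an $O(1)$ amount of within-iteration bookkeeping). The image $\mathrm{im}(\psi)$ plays the role of the ``complete set $U$'' of Proposition~\ref{Q_k_to_circuits}, while $\psi$ itself is the explicit completeness witness: the invariant I would maintain is that, along any play $p$ reaching this state with live set $\mathcal{Z}_p\subseteq f^{-1}(0)$, there is a colouring $g_p\colon\mathcal{Z}_p\to[k]$ such that every $k$-tuple from $\mathcal{Z}_p$ with $g_p$-profile $\bar c$ visits $\psi(\bar c)$ in $\pi$, and that $\mathrm{im}(\psi)$ is $h$-low. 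The number of such states is $\mathrm{size}(\pi)^{k^k+O(1)}=\mathrm{poly}(\mathrm{size}(\pi))$, which is the source of the size bound; $d$-lowness after $d:=\mathrm{depth}(\pi)$ iterations forces the state to be ``all terminals'' of $\pi$, and since (see below) each iteration costs only $O(1)$ rounds, the strategy has depth $O(d)$.

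One iteration ($O(1)$ rounds) works as in Proposition~\ref{Q_k_to_circuits}: form $g'(z)=\big(\text{communication profile of }z\text{ on }\mathrm{im}(\psi),\ g_p(z)\big)$, which takes at most $2^{k^k}\!\cdot k=O(1)$ values; run the ``present $k+1$ hypotheses, Nature eliminates one'' subroutine for $O(1)$ rounds to cut the possible values of $g'$ down to a set $C'$ of at most $k$ surviving classes; and then define the successor state $\psi'$ on $[k]^k$ (identifying $C'$ with $\{1,\dots,|C'|\}$) by: for $\bar c\in(C')^k$ follow $\psi$ to the node $v$ indexed by the colour part of $\bar c$, and if $v$ is internal, move along the out-edge of $v$ dictated by the communication-profile part of $\bar c$ (the party owning $v$ and its ordered out-edges are part of the light form). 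The point to check is that $\psi'$ is a function of $\psi$, the light form, and the subroutine's responses only — with no reference to $\pi$'s communication functions $\phi_i$ or to $\mathcal{Z}_p$ — so it genuinely defines a dag transition; preservation of the completeness invariant and of $(h{+}1)$-lowness then follows from Lemma~\ref{communication_profile_lemma} essentially as in Proposition~\ref{Q_k_to_circuits}.

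The hard part will be the Learner's output at terminals, because Proposition~\ref{Q_k_equivalence} insists the output depend only on the terminal \emph{node} of the strategy, whereas in Proposition~\ref{Q_k_to_circuits} it depended (through Lemma~\ref{complete_lemma}) on $\mathcal{Z}_d$, which is \emph{not} a function of the merged dag node — reconciling ``merge plays'' with ``output is node-determined'' is the new ingredient, and this is exactly what the strong-computability hypothesis is for. At a terminal state $\psi_d$ (with $\mathrm{im}(\psi_d)\subseteq T(G)$) and any live $z$, the constant tuple $(z,\dots,z)$ visits the terminal $\psi_d(g_p(z),\dots,g_p(z))$ of $\pi$, so $z$ is $1$-compatible with it and \emph{strong} computability yields $z_{l(\psi_d(g_p(z),\dots,g_p(z)))}=0$ — a certificate local to one terminal of $\pi$, not requiring a tuple of distinct live inputs. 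To turn ``one of $\le k$ coordinates is $0$, depending on $g_p(z)$'' into a single node-determined output I would keep the colouring \emph{normalized} throughout — surjective onto its current palette, with the (constant-size) palette tracked in the $O(1)$ bookkeeping and re-normalized after each compression step — so that the ``rainbow'' profile $(1,2,\dots,k)$, which meets every colour, singles out one coordinate $l(\psi_d(1,2,\dots,k))$ that is $0$ on every live $z$ reaching $\psi_d$ simultaneously (the case where the live set is too small to normalize is trivial, the few remaining vectors having a common zero coordinate by $Q_k$-property). Learner outputs that coordinate; the underlying dag and the terminal labels are produced in polynomial time, so Proposition~\ref{Q_k_equivalence}\textbf{(a)} delivers the circuit $C\le f$ of $\THR^{k+1}_2$-gates and variables with $\mathrm{depth}(C)=O(\mathrm{depth}(\pi))$ and $\mathrm{size}(C)=\mathrm{poly}(\mathrm{size}(\pi))$.
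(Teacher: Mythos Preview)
Your overall architecture matches the paper's: the state is a multidimensional array $\psi\colon[k]^k\to V(G)$ plus $O(1)$ bookkeeping, one iteration compresses the $O(1)$-valued function $g'(z)=(\text{communication profile of }z,\ g(z))$ to $k$ surviving classes and advances $\psi$ by one protocol step using only the light form, and Proposition~\ref{Q_k_equivalence}\textbf{(a)} then reads off the circuit. All of that is right, and the size and depth bounds follow.

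The gap is in the terminal output, and it is exactly the point you flagged as ``the hard part''. You carry over the completeness invariant from Proposition~\ref{Q_k_to_circuits} verbatim: \emph{every $k$-tuple from $\mathcal{Z}_p$ with $g_p$-profile $\bar c$ visits $\psi(\bar c)$}. With this invariant, your rainbow argument for the output $l(\psi_d(1,\dots,k))$ needs, for each $z\in\mathcal{Z}_p$, a tuple of profile $(1,\dots,k)$ containing $z$ --- which exists only if $g_p$ is surjective onto $[k]$. But surjectivity is \emph{not} a property of the state: after a compression round, Nature confirms some hypothesis ``$g'(z)\neq(p,c)$'' that is true for her $z$, and nothing prevents the eliminated class $(p,c)$ from being nonempty in $\mathcal{Z}_{h+1}$; conversely, among the $k$ surviving classes some may be empty. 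Two plays reaching the same array $\psi$ can therefore have $g_p$ surjective in one and not in the other, so ``normalize the palette'' and ``fall back on the $Q_k$-property when the live set is small'' both violate the requirement that the output (and the transitions) depend only on the state.

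The paper repairs this not by normalizing $g$ but by \emph{strengthening the invariant}. It replaces ``tuples with profile $\bar c$ visit $M[\bar c]$'' by the per-coordinate condition: for every $\bar c\in[k]^k$, every $i\in[k]$, and every $z$, if $c_i=g(z)$ then $z$ is $i$-compatible with $M[\bar c]$. This is preserved by exactly the same iteration you describe, and at a terminal state it gives directly that every live $z$ is $g(z)$-compatible with $M_d[1,\dots,k]$ (because the $g(z)$-th entry of $(1,\dots,k)$ is $g(z)$), so strong computability yields $z_{l(M_d[1,\dots,k])}=0$ with no surjectivity hypothesis and no appeal to $\mathcal{Z}_p$. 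Swapping your invariant for this one fixes the proof; the rest of your outline goes through unchanged.
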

\begin{proof}
We will again give a $O(d)$-round winning strategy of Learner in the $Q_k$-hypotheses game for $f$. Now, however, we should ensure that the light form of our strategy is of size $ O\left(\mathrm{size}(\pi)^{O(1)}\right)$ and can be computed in time $O\left(\mathrm{size}(\pi)^{O(1)}\right)$ from the light form of $\pi$. Instead of specifying the light form of our strategy directly we will use the following trick. Assume that Learner has a \emph{working tape} consisting of $O(\log \mathrm{size}(\pi))$ cells, where each cell can store one bit. Learner memorizes all the Nature's responses so that he knows the current position of the game. But he \emph{does not} store the sequence of Nature's responses on the working tape (there is no space for it). Instead, he first makes his hypotheses which depend on the current position. Then he receives a Nature's response $r\in \{0, 1, \ldots, k\}$. And then he \emph{modifies} the working tape, but the result should depend only on the current content of the working tape and on $r$ (and not on the current position in a game). Moreover, we will ensure that modifying the working tape takes  $O\left(\mathrm{size}(\pi)^{O(1)}\right)$ time, given the light form of $\pi$. 

The main purpose of the working tape manifests itself in the end. Namely, at some point Learner decides to stop making hypotheses. This should be indicated on the working tape. More importantly, Learner's output should depend only on the content of working tape in the end (and not on the whole sequence of Nature's responses). Moreover, this should take $O\left(\mathrm{size}(\pi)^{O(1)}\right)$ time to compute that output, given the light form of $\pi$. 

If a strategy satisfies these restrictions, then its light form is computable in  $O\left(\mathrm{size}(\pi)^{O(1)}\right)$ time given the light form of $\pi$. Indeed, the underlying dag will consist of all possible configurations of the working tape. There are $O\left(\mathrm{size}(\pi)^{O(1)}\right)$ of them, as working tape uses $O(\log \mathrm{size}(\pi))$ bits. For all non-terminal configurations $c$ we go through all $r\in\{0, 1, \ldots, k\}$. We compute what would be a configuration $c_r$ of the working tape if the current configuration is $c$ and Nature's response is $r$. After that we connect $c$ to $c_0, c_1, \ldots, c_k$. Finally, in all terminal configurations we compute the outputs of Learner. This gives a light form of our strategy in $O\left(\mathrm{size}(\pi)^{O(1)}\right)$ time.

Let $V$ be the set of nodes of $\pi$ and $T$ be the set of terminals of $\pi$.
Strategy proceeds in $d$ iterations, each taking $O(1)$ rounds. We define sets $\mathcal{Z}_h$ exactly as in the proof of Proposition \ref{Q_k_to_circuits}. We also use the same notion of communication profile. However, we define completeness in a different way. First of all, instead of working with sets of nodes with no additional structure we will work with \emph{multidimensional arrays} of nodes. Namely, we will consider $k$-dimensional arrays in which every dimension is indexed by integers from $[k]$. Formally, such arrays are functions of the form $M\colon [k]^k\to V$. We will use notation $M[c_1, \ldots, c_k]$ for the value of $M$ on $(c_1, \ldots, c_k) \in [k]^k$.

Consider any $\mathcal{Z}\subset f^{-1}(0)$.
We say that $g\colon \mathcal{Z} \to [k]$ is complete for $\mathcal{Z}$ with respect to a multidimensional array $M\colon [k]^k \to V$ if for every $(c_1, \ldots, c_k) \in [k]^k$, for every $i\in [k]$ and for every $z\in\mathcal{Z}$ the following holds. If $c_i = g(z)$, then $z$ is $i$-compatible with $M[c_1, \ldots, c_k]$.

We say that a multidimensional array $M\colon [k]^k \to V$ is complete for $\mathcal{Z}$ if there exists $g\colon\mathcal{Z} \to [k]$ which is complete with respect to $M$.

To digest the notion of completeness it is instructive to consider the case $k = 2$. In this case $M$ is a $2\times 2$ table containing four nodes of $\pi$. The function $g\colon\mathcal{Z}\to [2]$ is complete for $\mathcal{Z}$ with respect to $M$ if the following holds. First, for every $z\in\mathcal{Z}$  two nodes in the $g(z)$th \emph{row} of $M$ should be \emph{$1$-compatible} with $z$. Second,  for every $z\in\mathcal{Z}$  two nodes in the $g(z)$th \emph{column} of $M$ should be \emph{$2$-compatible} with $z$.

Let us now establish an analog of Lemma \ref{complete_lemma}.

\begin{lemma}
\label{complete_lemma_2}
Assume that $M\colon [k]^k \to T$ is complete for $\mathcal{Z}\subset f^{-1}(0)$. Let $l$ be the output of $\pi$ in the terminal $M[1, 2, \ldots, k]$. Then $z_l = 0$ for every $\mathcal{Z}$.
\end{lemma}
\begin{proof}
Since $\pi$ strongly computes $Q_k$-communication game for $f$, it is enough to show that every $z\in\mathcal{Z}$ is $i$-compatible with  $M[1, 2, \ldots, k]$ for some $i$. Take $g\colon \mathcal{Z} \to [k]$ which is complete for $\mathcal{Z}$ with respect to $M$. By definition $z$ is $g(z)$-compatible with $M[1, 2, \ldots, k]$.
\end{proof}

We now proceed to the description of the Learner's strategy. The working tape of Learner consists of:
\begin{itemize}
\item an integer $iter$;
\item a multidimensional array $M\colon [k]^k \to V$;
\item $O(1)$ additional bits of memory. 
\end{itemize}
Integer $iter$ will be at most $d\le \mathrm{size}(\pi)$ so to store all this information we need $O(\log (\mathrm{size}(\pi)))$ bits, as required. 
Integer $iter$ always equals the number of iterations performed so far (at the beginning $iter = 0$). The array $M$ changes  only at the moments when $iter$ is incremented by $1$. So let $M_h$ denote the content of the array $M$ when $iter = h$. 

We call an array of nodes $h$-low if every node in it is either terminal or of depth at least $h$. Learner maintains the following invariant.

\begin{algorithm}[h!]
\caption{}
\label{euclid}
$M_h$ is $h$-low and $M_h$ is complete for $\mathcal{Z}_h$.
\end{algorithm}

At the beginning Learner sets every element of  $M_0$ to be the starting node of $\pi$ so that \textbf{Invariant 2} trivially holds.

Note that every node in $M_d$ is a terminal of $\pi$. After $d$ iterations Learner outputs the label of terminal $M_d[1, 2, \ldots, k]$ in the protocol $\pi$. As $M_d$ is complete for $\mathcal{Z}_d$ due to \textbf{Invariant 2}, this by Lemma \ref{complete_lemma_2} will be a correct output in the $Q_k$-hypotheses game for $f$. Obviously producing the output takes polynomial time given the light form of $\pi$ and the content of Learner's working tape in the end. 

Now we need to perform an iteration. Assume that $h$ iterations passed and \textbf{Invariant 2} still holds. Let $U_h$ be the set of all nodes appearing in $M_h$. Take any function $g\colon \mathcal{Z}_h \to [k]$ which is complete for $\mathcal{Z}_h$ with respect to $M_h$.

At each round of the $(h + 1)$st iteration Learner asks Nature to specify some pair $(p, c) \in \{0, 1\}^{U_h} \times [k]$ such that $(p_z, g(z)) \neq (p, c)$, where $z$ is the Nature's vector and $p_z$ is a communication profile of $z$ with respect to $U_h$. Learner stores each $(p, c)$ using his $O(1)$ additional bits on the working tape. Learner can do this until there are only $k$ pairs from $(p_1, c_1), \ldots, (p_k, c_k)\in \{0, 1\}^{U_h} \times [k]$ left which are not rejected by Nature. When this moment is reached, the $(h + 1)$st iteration is finished. The iteration takes $2^{|U_h|} \cdot k - k = O(1)$ rounds, as required. For any $z$ compatible with the current play after $h + 1$ iterations we know that $(p_z, g(z))$ is among $(p_1, c_1), \ldots, (p_k, c_k)$, i.e,   
\begin{equation}
\label{h_equation}
(p_z, g(z)) \in \{(p_1, c_1), \ldots, (p_k, c_k)\} \mbox{ for all } z\in\mathcal{Z}_{h + 1}.
\end{equation}

Learner writes $(p_1, c_1), \ldots, (p_k, c_k)$ on the working tape (all the pairs that were excluded are on the working tape and hence he can compute the remaining ones). Learner then computes a $(h+1)$-low array $M_{h + 1}$ which will be complete for $\mathcal{Z}_{h + 1}$. To compute $M_{h + 1}$ he will only need to know $M_h$, $(p_1, c_1), \ldots, (p_k, c_k)$ (this information is on the working tape) and the light form of $\pi$. 

Namely, Learner determines $M_{h + 1}[d_1, \ldots, d_k]$ for $(d_1, \ldots, d_k) \in [k]^k$ as follows. Consider the node $v = M_h[c_{d_1}, \ldots, c_{d_k}]$. If $v$ is a terminal, then set $M_{h + 1}[d_1, \ldots, d_k] = v$. Otherwise let $i\in [k]$ be the index of the party communicating at $v$. Look at $p_{d_i}$, which can be considered as a function of the form $p_{d_i}\colon U_h \to \{0, 1\}$. Define $r = p_{d_i}(v)$. Among two edges, starting at $v$, choose one which is labeled by $r$. Descend along this edge from $v$ and let the resulting successor of $v$ be $M_{h + 1}[d_1, \ldots, d_k]$.

Obviously, computing $M_{h + 1}$ takes $O\left(\mathrm{size}(\pi)^{O(1)}\right)$.
To show that \textbf{Invariant 2} is maintained we have to show that \textbf{\emph{(a)}} $M_{h + 1}$ is $(h + 1)$-low and \textbf{\emph{(b)}} $M_{h + 1}$ is complete for $\mathcal{Z}_{h + 1}$.

The first part, \textbf{\emph{(a)}}, holds because each $M_{h + 1}[d_1, \ldots, d_k]$ is either a terminal or a successor of a node of depth at least $h$. For \textbf{\emph{(b)}}  we define the following function:
$$g^\prime\colon\mathcal{Z}_{h + 1}\to [k], \qquad g^\prime(z) = i, \mbox{ where $i$ is such that } (p_z, g(z)) = (p_i, c_i).$$
By \eqref{h_equation} this definition is correct. We will show that $g^\prime$ is complete for $\mathcal{Z}_{h + 1}$ with respect to $M_{h + 1}$.

For that take any $(d_1, \ldots, d_k) \in [k]^k, z\in\mathcal{Z}_{h + 1}$ and $i\in [k]$ such that $d_i = g^\prime(z)$. We shall show that $z$ is $i$-compatible with a node $M_{h + 1}[d_1, \ldots, d_k]$. By definition of $g^\prime$ we have that $g(z) = c_{d_i}$. As by \textbf{Invariant 2} function $g$   is complete for $\mathcal{Z}_h$ with respect to $M_h$, this means that $z$ is $i$-compatible with $v = M[c_{d_1}, \ldots, c_{d_k}]$. If $v$ is a terminal, then $M_{h+1}[d_1, \ldots, d_k] = v$ and there is nothing left to proof.

 Otherwise $v\in V\setminus T$. Let $j$ be the index of the party communicating at $v$. By definition $M_{h + 1}[d_1, \ldots, d_k]$ is a successor of $v$. If $j\neq i$, i.e., not the $i$th party communicates at $v$, then any successor of $v$ is $i$-compatible with $z$. Finally, assume that $j = i$. Node $M_{h + 1}[d_1, \ldots, d_k]$ is obtained from $v$ by descending along the edge which is labeled by $r = p_{d_i}(v)$. Hence to show that $z$ is $i$-compatible with $M_{h + 1}[d_1, \ldots, d_k]$ we should verify that at $v$ on input $z$ the $i$th party transmits the bit $r$. For that again recall that $g^\prime(z) = d_i$, which means by definition of $g^\prime$ that $p_z = p_{d_i}$. I.e., $p_{d_i}$ is the communication profile of $z$ with respect to $U_h$. In particular, the value $r = p_{d_i}(v)$ is the bit transmitted by the $i$th party on input $z$ at $v$, as required.
\end{proof}

In the same way one can obtain an analog of the previous theorem for the $R_k$-case.
\begin{theorem}
\label{efficient_R_k}
For every constant $k\ge 2$ there exists  a polynomial-time algorithm $A$ such that the following holds. Assume that $f\in R_k$ and $\pi$ is a dag-like protocol which strongly computes  $R_k$-communication game for $f$. Then, given the light form of $\pi$, the algorithm $A$ outputs a circuit $C \le f$, consisting only of $\THR^{k + 1}_2$ gates and literals, such that $\mathrm{depth}(C) = O(\mathrm{depth}(\pi))$, $\mathrm{size}(C) = O\left(\mathrm{size}(\pi)^{O(1)}\right)$. 
\end{theorem}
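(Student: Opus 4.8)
The plan is to mimic the proof of Theorem~\ref{efficient_Q_k} essentially verbatim, changing only the places where the structure of $R_k$-communication games differs from $Q_k$-communication games. Recall that the only difference is in the output: in an $R_k$-game a terminal $t$ is labelled by a pair $(j,b)\in[n]\times\{0,1\}$ rather than an index $j\in[n]$, and strong computability asks that whenever $x\in f^{-1}(0)$ is $i$-compatible with $t$, then $x_j=b$. By Proposition~\ref{R_k_equivalence}\textbf{(a)} it suffices to produce, in time $O(\mathrm{size}(\pi)^{O(1)})$ from the light form of $\pi$, a Learner winning strategy in the $R_k$-hypotheses game for $f$ of depth $O(\mathrm{depth}(\pi))$ whose light form has size $O(\mathrm{size}(\pi)^{O(1)})$; then Proposition~\ref{R_k_equivalence}\textbf{(a)} turns that light form into the desired circuit $C\le f$ over $\THR^{k+1}_2$ gates and \emph{literals}.

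First I would carry over, without change, all the bookkeeping machinery from the proof of Theorem~\ref{efficient_Q_k}: the working tape of $O(\log\mathrm{size}(\pi))$ cells storing the iteration counter $iter$, the $k$-dimensional array $M\colon[k]^k\to V$, and $O(1)$ scratch bits; the sets $\mathcal{Z}_h$ of vectors compatible with the current play after $h$ iterations; the notion of communication profile $p_z\colon U\to\{0,1\}$ with respect to a node set $U$; and the notions of $h$-low arrays and of a function $g\colon\mathcal{Z}\to[k]$ being \emph{complete} for $\mathcal{Z}$ with respect to $M$ (i.e.\ $c_i=g(z)$ forces $z$ to be $i$-compatible with $M[c_1,\dots,c_k]$). \textbf{Invariant 2} ($M_h$ is $h$-low and complete for $\mathcal{Z}_h$) is maintained exactly as before: the iteration asks Nature to reject pairs $(p,c)\in\{0,1\}^{U_h}\times[k]$ until $k$ survive, and $M_{h+1}$ is recomputed from $M_h$ and the surviving pairs by following, at each non-terminal node $v=M_h[c_{d_1},\dots,c_{d_k}]$, the edge labelled by $p_{d_i}(v)$ where $i$ is the party communicating at $v$. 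The only statement that needs to be re-examined is the analogue of Lemma~\ref{complete_lemma_2}.

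So the substantive step is the following modified lemma: if $M\colon[k]^k\to T$ is complete for $\mathcal{Z}\subset f^{-1}(0)$, and $(l,b)$ is the label of the terminal $M[1,2,\dots,k]$ in $\pi$, then $z_l=b$ for every $z\in\mathcal{Z}$. The proof is identical in spirit to that of Lemma~\ref{complete_lemma_2}: taking $g$ complete for $\mathcal{Z}$ with respect to $M$, every $z\in\mathcal{Z}$ is $g(z)$-compatible with $M[1,2,\dots,k]$ by the definition of completeness (set $(c_1,\dots,c_k)=(1,\dots,k)$, $i=g(z)$), and since $\pi$ \emph{strongly} computes the $R_k$-communication game, $i$-compatibility of $z$ with a terminal labelled $(l,b)$ forces $z_l=b$. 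Learner's final output, after $d$ iterations, is therefore the label $(l,b)$ of $M_d[1,2,\dots,k]$, and this is a correct move in the $R_k$-hypotheses game. Everything else — the $O(1)$ rounds per iteration, the $O(\mathrm{size}(\pi)^{O(1)})$ time and size bounds for computing the light form over all $O(\mathrm{size}(\pi)^{O(1)})$ working-tape configurations — goes through verbatim.

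I do not expect a genuine obstacle here; the proof is a routine transfer. The one place demanding a little care is making sure that the circuit produced from the light form by Proposition~\ref{R_k_equivalence} legitimately uses \emph{literals} at the leaves rather than variables: the terminal-labelling map in the light form of the $R_k$-strategy now assigns a pair $(l,b)$ to each terminal, which the construction of Proposition~\ref{R_k_equivalence}\textbf{(a)} interprets as the literal $x_l$ if $b=0$ and $\lnot x_l$ if $b=1$; one should confirm that the analogue of Proposition~\ref{R_k_equivalence} indeed formalizes its light form this way (the excerpt states the analogues are obtained ``with straightforward modifications''), and that negations on leaves do not contribute to depth, so $\mathrm{depth}(C)=O(\mathrm{depth}(\pi))$ is unaffected. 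Modulo this cosmetic point the argument is complete.
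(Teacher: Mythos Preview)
Your proposal is correct and is exactly the approach the paper takes: the paper's entire proof of Theorem~\ref{efficient_R_k} is the single sentence ``In the same way one can obtain an analog of the previous theorem for the $R_k$-case,'' and your write-up is a faithful unpacking of what ``in the same way'' means, including the correct identification of the only nontrivial change (the modified Lemma~\ref{complete_lemma_2} with labels $(l,b)$ and the appeal to Proposition~\ref{R_k_equivalence} so that leaves become literals).
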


\section{Derivation of Theorems \ref{cohen_conj_1} and \ref{cohen_2}} \label{sec:corol}

In this section we obtain Theorems \ref{cohen_conj_1} and \ref{cohen_2} by devising protocols  strongly computing the corresponding $Q_k$-communication games. Unfortunately, establishing strong computability requires diving into straightforward but tedious technical details, even for simple protocols.  

\begin{proof}[Alternative proof of Theorem \ref{cohen_conj_1}]
 We will show that there exists $O(\log n)$-depth protocol $\pi$ with polynomial-time computable light form, strongly computing $Q_2$-communication game for $\MAJ_{2n + 1}$. By Theorem \ref{efficient_Q_k} this means that there is a polynomial-time computable $O(\log n)$-depth formula $F \le \MAJ_{2n + 1}$, consisting only of $\MAJ_3$ gates and variables. From self-duality of $\MAJ_{2n + 1}$ and $\MAJ_3$ it follows that $F$ computes $\MAJ_{2n + 1}$. 

Take a polynomial-time computable $O(\log n)$-depth monotone formula $F^\prime$ for $\MAJ_{2n  + 1}$. Consider the following communication protocol $\pi$.
The tree of $\pi$ coincides with the tree of $F^\prime$. Inputs to $F^\prime$ will be leafs of $\pi$. In a leaf containing input variable $x_i$ the output of the protocol $\pi$ is $i$. Remaining nodes of $\pi$ are $\land$ and $\lor$ gates.
In the $\land$ gates communicates the first party, while in the $\lor$ gates communicates the second party.

Fix an  $\land$ gate $g$ (which belongs to the first party). Let $g_0, g_1$ be gates which are fed to $g$, i.e., $g = g_0 \land g_1$. There are two edges, starting at $g$, one leads to $g_0$ (and is labeled by $0$) and the other leads to $g_1$ (and is labeled by $1$). Take an input $a\in \MAJ_{2n + 1}^{-1}(0)$ to the first party. On input $a$ at the gate $g$ the first party transmits the bit $r  = \min\{ c \in \{0, 1\} \mid g_c(a) = 0\}$. If the minimum is over the empty set, then we set $r = 0$. 

Take now an $\lor$ gate $h$ belonging to the second party. Similarly, there are two edges, starting at $h$, one leads to $h_0$ (and is labeled by $0$) and the other leads to $h_1$ (and is labeled by $1$). Here $h_0, h_1$ are two gates which are fed to $h$, i.e., $h = h_0 \lor h_1$. Take an input $b\in \MAJ_{2n + 1}^{-1}(0)$ to the second party. On input $b$ at the gate $h$ the second party transmits the bit $r  = \min\{ c \in \{0, 1\} \mid h_c(\lnot b) = 1\}$.  If the minimum is over the empty set, then we set $r = 0$. Here $\lnot$ denotes the bit-wise negation. Description of the protocol $\pi$ is finished.

Clearly, the protocol $\pi$ is of depth $O(\log n)$ and its light form is polynomial-time computable.  It remains to argue that the protocol strongly computes $Q_2$-communication game for $\MAJ_{2n + 1}$. Nodes of the protocol may be identified with the gates of $F^\prime$. Consider any path $p = \langle e_1, \ldots, e_m \rangle$ in the protocol $\pi$. Assume that $e_j$ is an edge from $g^{j - 1}$ to $g^j$ and $g^0$ is the output gate of $F^\prime$. We shall show that the following: if $a\in\MAJ_{2n + 1}^{-1}(0)$ is $1$-compatible with $p$, then $g^0(a) = g^1(a) = \ldots = g^m(a) = 0$. Indeed, $g^0(a) = 0$ holds because $F^\prime$ computes $\MAJ_{2n + 1}$. Now, assume that $g^j(a) = 0$ is already proved. If $g^j$ is an $\lor$ gate, then $g^{j + 1}(a) = 0$ just because $g^{j + 1}$ feds to $g^j$. Otherwise $g^j$ is an $\land$ gate which therefore belongs to the first party. Let $r\in\{0, 1\}$ is the label of the edge $e_{j + 1}$. Note that $g^{j + 1}  = g^j_r$, where $g^j_0, g^j_1$ are two gates which are fed to $g^j$. . Since $a$ is $1$-compatible with $p$, it holds that $r$ coincides with the bit that the first party transmits at $g^j$ on input $a$, i.e., with $\min\{c \in\{0, 1\} \mid g^j_c(a) = 0\}$. The set over which the minimum is taken is non-empty because $g^j(a) = 0$. In particular $r$ belongs to this set, which means that $g^{j + 1}(a) = g^j_r(a) = 0$, as required.

Similarly one can verify that if $b\in\MAJ_{2n + 1}^{-1}(0)$ is $2$-compatible with $p$, then $g^0(\lnot b) = g^1(\lnot b) = \ldots = g^m(\lnot b) = 0$. Hence we get that  if a leaf $l$ is $1$-compatible ($2$-compatible) with $a$ ($b$) and $l$ contains a variable $x_i$, then $a_i = 0$ ($\lnot b_i = 1$). Hence the protocol strongly computes the $Q_2$-communication game for $\MAJ_{2n + 1}$.
\end{proof} 

\begin{proof}[Proof of Theorem \ref{cohen_2}]

We will realize  the protocol from the proof of Corollary \ref{weak} in such a way that it will give us  $O(\log^2 n)$-depth polynomial-size dag-like protocol with polynomial-time computable light form, strongly computing $Q_k$-communication game for $\THR^{kn + 1}_{n + 1}$. By Theorem \ref{efficient_Q_k} this means that there is a polynomial-time computable $O(\log^2 n)$-depth polynomial-size circuit $C\le \THR^{kn + 1}_{n + 1}$, consisting only of $\THR^{k + 1}_2$ gates and variables. With the same argument  as in Corollary \ref{weak} one can show that  $C$  coincides with $\THR^{kn + 1}_{n + 1}$.

We will use the same tree $T$ as in the proof of Corollary \ref{weak}. Let us specify the underlying dag $G$ of our protocol $\pi$.
 For a node $v$ of $T$ let $\mathcal{S}_v$ be the set of all tuples $(s_1, s_2, \ldots, s_k) \in \{0, 1, \ldots, kn + 1\}^k$ such that $s_1 + s_2 + \ldots + s_k < |T_v|$. For every node $v$ of $T$ and for every $(s_1, s_2, \ldots, s_k) \in \mathcal{S}_v$ the dag $G$ will contain a node identified with a tuple $(v, s_1, s_2, \ldots, s_k)$. These nodes of $G$ will be called the \emph{main nodes} (there will be some other nodes too). The starting node of $G$ will be $(r, n, \ldots, n)$, where $r$ is the root of $T$. Note that if $l$ is a leaf of $T$, then $|T_l| = 1$. Hence the only main node having $l$ as the first coordinate is $(l, 0, \ldots, 0)$. The set of terminals of $\pi$ will coincide with the set of all main nodes of the form $(l, 0, \ldots, 0)$, where $l$ is a leaf of $T$. The output of $\pi$ in $(l, 0, \ldots, 0)$ is $l$.

For an integer $s \le kn + 1$ let $W(s)$ be a binary tree of depth $O(\log n)$ with $\left|\{(a, b) \mid a, b\in \{0, 1, \ldots, s\}, a + b = s \} \right|$ leaves. We assume that leaves of $W(s)$ are identified with elements of $\{(a, b) \mid a, b\in \{0, 1, \ldots, s\}, a + b = s \}$. We use $W(s)$ in the construction of $G$. Namely, take any main node $(v, s_1, s_2, \ldots, s_k)$ with a non-leaf $v$. Attach $W(s_1)$ to it. Then attach to every leaf of $W(s_1)$ a copy of $W(s_2)$. Next, to every leaf of the resulting tree attach a copy of $W(s_3)$ and so on. In this way we obtain a binary tree $W(v, s_1, \ldots ,s_k)$ of depth $O(\log n)$ growing at $(v, s_1, \ldots, s_k)$. Its leaves can be identified with tuples of integers $(a_1, b_1, \ldots, a_k, b_k)$ satisfying $a_1, b_1, \ldots, a_k, b_k \ge 0, a_1 + b_1 = s_1, \ldots, a_k + b_k = s_k$. We will merge every leaf of $W(v, s_1, \ldots, s_k)$ with some main node. Namely, take a leaf $(a_1, b_1, \ldots, a_k, b_k)$.  If $a_1 + \ldots + a_k < |T_{v_0}|$, then we merge  $(a_1, b_1, \ldots, a_k, b_k)$ with the main node $(v_0, a_1, \ldots, a_k)$. Otherwise it should hold that $b_1 + \ldots + b_k < |T_{v_1}|$. In this case we merge $(a_1, b_1, \ldots, a_k, b_k)$ with the main node $(v_1, b_1, \ldots, b_k)$.

Description of the dag of $\pi$ is finished. Since $k$ is constant, there are $n^{O(1)}$ main nodes and to each we attach a tree of depth $O(\log n)$. Hence $\pi$ is $O(\log^2 n)$-depth and $n^{O(1)}$-size. Let us define a partition of non-terminal nodes between parties. Take a main node $(v, s_1, \ldots, s_k)$, where $v$ is not a leaf of $T$. The tree $W(v, s_1, \ldots, s_k)$, growing from $(v, s_1, \ldots, s_k)$ consists of copies of $W(s_1), \ldots, W(s_k)$.  We simply say that the $i$th party communicates in copies of $W(s_i)$.
After that we conclude that the light form of $\pi$ is polynomial-time computable.

Now let us specify how the $i$th party communicates inside $W(s_i)$. Assume that  $x\in\{0, 1\}^{kn + 1}$ is the input  to the $i$th party. If $|T_v \cap \mathrm{supp}(x)| \neq s_i$, then the $i$th party communicates  arbitrarily. Now, assume that  $|T_v \cap \mathrm{supp}(x)| = s_i$. Then the $i$th party communicates in such a way that the resulting path descends from the root of $W(s_i)$ to the leaf identified with a pair of integers $(|T_{v_0} \cap \mathrm{supp}(x)|, |T_{v_1} \cap \mathrm{supp}(x)|)$.

From this we immediately get the following observation. Let $p$ be a path from the root of $W(v, s_1, \ldots, s_k)$ to a leaf identified with a tuple $(a_1, b_1, \ldots, a_k, b_k)$. Further, assume that $x\in (\THR^{kn + 1}_{n + 1})^{-1}(0)$, satisfying $|T_v \cap \mathrm{supp}(x)| = s_i$, is $i$-compatible with $p$. Then $a_i= |T_{v_0} \cap \mathrm{supp}(x)|$ and $b_i = |T_{v_1} \cap \mathrm{supp}(x)|$. Indeed, any such $p$ passes though a copy $W(s_i)$ and leaves $W(s_i)$ in a leaf identified with  $(|T_{v_0} \cap \mathrm{supp}(x)|, |T_{v_1} \cap \mathrm{supp}(x)|)$.

From this observation one can easily deduce that if $x \in (\THR^{kn + 1}_{n + 1})^{-1}(0)$  is $i$-compatible with a  main node $(v, s_1, \ldots, s_k)$, then  $|T_v \cap \mathrm{supp}(x)| = s_i$. Indeed, we can obtain this by induction on the depth of $v$. Induction step easily follows from the previous paragraph. As for induction base we notice that  $|T_r \cap \mathrm{supp}(x)| = n$ for the root $r$ of $T$ (as in the proof of Corollary \ref{weak} we assume that $|\mathrm{supp}(x)| = n$ as party can always add missing $1$'s).

In particular, this means that $\pi$ strongly computes $Q_k$-communication game for $\THR^{kn + 1}_{n + 1}$. Indeed, any terminal of $\pi$ is of the form $(l, 0, \ldots, 0)$, where $l$ is a leaf of $T$. If  $x \in (\THR^{kn + 1}_{n + 1})^{-1}(0)$ is $i$-compatible with $(l, 0, \ldots, 0)$, then, as shown in the previous paragraph, $|T_l \cap \mathrm{supp}(x)| = |\{l\} \cap  \mathrm{supp}(x)| = 0$. This means that $x_l = 0$ and hence the output of the protocol is correct.

\end{proof}

\section{Open problems} \label{sec:open_prob}

\begin{itemize}
\item Can $Q_k$-communication game for $\THR^{kn + 1}_{n + 1}$ be solved in $O(\log n)$ bits of communication for $k\ge 3$? Equivalently, can $\THR^{kn + 1}_{n + 1}$ be computed by $O(\log n)$-depth circuit, consisting only of $\THR^{k + 1}_2$ and variables? Can a deeper look into the construction of AKS sorting network help here (note that we only use this sorting network as a black-box)?
\item Can at least $R_k$-communication game for $\THR^{kn + 1}_{n + 1}$ be solved in $O(\log n)$ bits of communication for $k\ge 3$? Again, this is equivalent to asking whether  $\THR^{kn + 1}_{n + 1}$ can be computed by $O(\log n)$-depth circuit, consisting only of $\THR^{k + 1}_2$ and \emph{literals}. Note that if we allow literals (along with $\land$ and $\lor$ gates), then there are much simpler constructions of a $O(\log n)$-depth formula for $\MAJ_n$ and, in fact, for every symmetric Boolean function~\cite{Wegener1987}. Moreover, this can be done in terms of communication complexity~\cite{BH96}. A natural approach would be to apply ideas of~\cite{BH96} to $R_k$-communication games. 
\item Are there any other interesting functions in $Q_k$ and $R_k$ which can be analyzed with our technique?
\end{itemize}

\textbf{Acknowledgments.} The authors are grateful to Alexander Shen for suggesting to generalize our initial results.

{\small
\bibliographystyle{abbrv}
\bibliography{ref}
}

\end{document}